\documentclass[journal]{IEEEtran}


%
%
\usepackage[utf8]{inputenc} 
\usepackage[T1]{fontenc}
\usepackage{amssymb}
\usepackage{url}
\usepackage{ifthen}
\usepackage {tikz}
\usepackage{pdfpages}
\usepackage{caption}
\usepackage{cite}
\usepackage{amsthm}
\usepackage{algorithmic}
\usepackage{graphicx}
\usepackage{textcomp}
\usepackage{xfrac}
\usepackage{xcolor}
\usepackage{comment}
\usepackage{float}
\usepackage{bbm}
\usepackage{scalerel}
\usepackage{ragged2e}
\usepackage{enumitem}
\usepackage{multirow}
\usepackage{orcidlink}
\newcounter{hints}
\renewcommand{\thehints}{\roman{hints}}
\newcommand{\hintedrel}[2][]{%
  \refstepcounter{hints}%
  \if\relax\detokenize{#1}\relax\else\label{#1}\fi
  \mathrel{\overset{\mathrm{(\thehints)}}{\vphantom{\le}{#2}}}%
}
\newcommand{\restarthintedrel}{\setcounter{hints}{0}}
\usepackage{tikz}

\usetikzlibrary{dsp,chains}

\usepackage{comment}
\usepackage{blindtext}
\usepackage{balance}

\usetikzlibrary {positioning}
\usepackage{tkz-euclide}

\usepackage[cmex10]{amsmath} 


\interdisplaylinepenalty=2500 

\newcommand\NB[1][0.3]{N\kern-#1em\textcolor{red}{B}}

\DeclareMathOperator*{\given}{\vert}
\newcommand{\E}[1]{\mathsf{E}\left[#1\right]}
\newcommand{\Var}[1]{\mathsf{Var}\left(#1\right)}
\newcommand{\Varvec}[1]{\boldsymbol{\mathsf{Var}}\left(#1\right)}

\newcommand{\snr}{\mathsf{snr}}
\newcommand{\mmse}[1]{\mathsf{mmse}\left(#1\right)}
\newcommand{\loge}[1]{\log\left(#1\right)}
\newcommand{\e}[1]{e^{#1}}
\newcommand{\abs}[1]{\left\vert#1\right\vert}

\newcommand\labelrel[2]{%
  \begingroup
    \refstepcounter{relctr}%
    \stackrel{\textnormal{(\alph{relctr})}}{\mathstrut{#1}}%
    \originallabel{#2}%
  \endgroup
}
\AtBeginDocument{\let\originallabel\label} 

\newtheoremstyle{note}
  {}
  {}
  {\itshape}
  {}
  {\itshape}
  {}
  {0em}
  {\thmname{#1}\thmnumber{ #2}}

\theoremstyle{note} 
\newtheorem{theorem}{Theorem}
\newtheorem{corollary}{Corollary}
\newtheorem{lemma}{Lemma}
\newtheorem{proposition}{Proposition}
\newtheorem{remark}{Remark}

\newcounter{relctr} 

\hyphenation{op-tical net-works semi-conduc-tor}

\providecommand{\revision}[1]{{\color{black} #1}}

\begin{document}
\title{Differential Entropy of the Conditional Expectation under Additive Gaussian Noise} 

\author{{Arda~Atalik$^{\orcidlink{0000-0003-3439-7838}}$, ~Alper~Köse$^{\orcidlink{0000-0002-1537-1205}}$,~and~Michael~Gastpar$^{\orcidlink{0000-0002-5499-5336}}$, \IEEEmembership{Fellow, IEEE}}
\thanks{The work in this manuscript was supported in part by the Swiss National Science Foundation under Grant 200364. The work in this manuscript was partially presented at the {\it 2021 IEEE Information Theory Workshop,} Japan.}
\thanks{A. Atalik is with the Department of Electrical and Electronics Engineering, Bilkent University, 06800 Ankara, Turkey (email: arda.atalik@bilkent.edu.tr).}
\thanks{A. Köse is with the Electrical and Electronics Engineering Department, Bogazici University, 34470 Istanbul, Turkey (email: alper.kose@boun.edu.tr).}
\thanks{M. Gastpar is with the School of Computer and Communication Sciences, {\'E}cole Polytechnique F{\'e}d{\'e}rale de Lausanne (EPFL), Lausanne, Switzerland (email: michael.gastpar@epfl.ch).}
}
\maketitle
\begin{abstract}
The conditional mean is a fundamental and important quantity whose applications include the theories of estimation and rate-distortion. It is also notoriously difficult to work with. This paper establishes novel bounds on the differential entropy of the conditional mean in the case of finite-variance input signals and additive Gaussian noise. The main result is a new lower bound in terms of the differential entropies of the input signal and the noisy observation. 
The main results are also extended to the vector Gaussian channel and to the natural exponential family.
Various other properties such as upper bounds, asymptotics, Taylor series expansion, and connection to Fisher Information are obtained. Two applications of the lower bound in the remote-source coding and CEO problem are discussed. 
\end{abstract}
\begin{IEEEkeywords}
Differential entropy, conditional mean estimator, Gaussian noise, exponential family, remote source coding problem, CEO problem.
\end{IEEEkeywords}

\section{Introduction and Motivation}


The conditional expectation is a quantity of fundamental interest with myriad applications.
It is an intuitively pleasing estimator of an underlying signal, given a noisy observation. For example, it is well known that subject to a mean-squared error (MSE) criterion,
the conditional expectation is the optimal estimator. More generally, the conditional expectation is a sufficient statistic for a large class of problems. \revision{Applications of conditional expectation thus include detection and estimation \cite{predictionOptimality, estimationBlackwell,bregman2}, multiterminal hypothesis testing \cite{han1987hypothesis,zaidi_2020}, information bottleneck \cite{tishby2000information,zaidi2020information}, privacy funnel \cite{makhdoumi2014information,calmon2015fundamental}, equalization in communication systems, and statistical physics.}

While the conditional mean has a straightforward formula, it is usually impossible to express it in closed form.
This hampers the exploration of its key properties, such as its moments or its differential entropy.
\revision{The goal of the present investigation is to shed light on the {\it differential entropy} of the conditional mean $h(\E{X\given Y}).$ This quantity has a multitude of applications. We present and develop one of these applications, which concerns problems of noisy (or {\it remote}) source coding, where the encoding device does not get to observe the source $X$ of interest, but rather only a noisy version $Y$ of the source.}

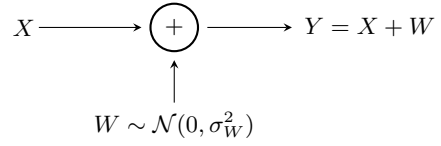
\begin{figure}[t]
\begin{center}
\small
\begin{tikzpicture}[scale=2,shorten >=1pt, auto, node distance=1cm,
   node_style/.style={scale=1,circle,draw=black,thick},
   edge_style/.style={draw=black,dashed}]

    \node [fill=none] at (-1,0) (nodeS) {$X$};
    \node [fill=none] at (0,-0.65) (nodeS) {$W \sim \mathcal{N}(0,\sigma_W^2)$};
    \node [fill=none] at (1.3,0) (nodeS) {$Y = X + W$};
    \node[node_style] (v1) at (0,0) {$+$};
   
    \draw [-stealth](-0.9,0) -- (-0.2,0);
    \draw [-stealth](0,-0.5) -- (0,-0.2);
    \draw [-stealth](0.22,0) -- (0.8,0);
    \end{tikzpicture}
\end{center}
\caption{The additive Gaussian noise observation model}
\label{observation}
\end{figure}

\par In this paper, our primary focus concerns the additive Gaussian noise model
\begin{equation}
    Y = X+W
    \label{themodel}
\end{equation}
where $W$ is a zero-mean Gaussian random variable of variance $\sigma_W^2,$ independent of the signal $X,$
and where $X$ has an arbitrary distribution. This is illustrated pictorially in Figure~\ref{observation}.
Estimation in Gaussian noise has been a central topic at the intersection of estimation and information theory for the last sixty years.
Consequently, in this special case, the conditional expectation $\E{X\given Y}$ has been studied in a wealth of works and many key tools are known. A first such tool that will be of importance to our derivations is \emph{Tweedie's Formula}~\cite{robbins2020empirical}, which connects the conditional mean to the score function. Following up on this important formula, the \emph{Hatsell-Nolte Identity} was discovered~\cite{hatsell_nolte_1971}, which relates the conditional mean and variance. Yet another key formula is \emph{Brown's Identity}~\cite{brown1971admissible}, which shows a connection between the minimum mean-square error (MMSE) and Fisher Information.

\revision{It is also interesting to compare and contrast $h(\E{X\given Y})$ to a more common quantity, namely, the conditional entropy of $X$ given $Y,$ usually denoted as $h(X|Y).$ 
While a thorough study is outside of the scope of the present paper, we may observe that the two quantities behave quite differently.
Starting from the Gaussian noise model of Equation~\eqref{themodel}, one may consider the limiting cases as $\sigma_W^2$ tends to either of its extremal values. First, as $\sigma_W^2$ tends to zero, we can observe that $h(\E{X\given Y})$ simply tends to $h(X)$ while $h(X|Y)$ diverges to $-\infty.$ On the other end of the scale, as $\sigma_W^2$ tends to $\infty,$ we observe that $h(\E{X\given Y})$ diverges to $-\infty$ while $h(X|Y)$ tends to $h(X).$ In the case where $X$ is Gaussian, too, one can derive closed-form expressions for both quantities, as follows:
\begin{align}
    h(X\given Y) &= h(X)+h(Y\given X)-h(Y)\nonumber\\
    &= h(X)+h(W)-h(Y)\nonumber\\
    &= \frac{1}{2}\loge{2\pi e \frac{\sigma_X^2\,\sigma_W^2}{\sigma_X^2+\sigma_W^2}}\\
    h(\E{X\given Y}) &= h\left(\frac{\sigma_X^2}{\sigma_X^2+\sigma_W^2}Y+\frac{\sigma_W^2}{\sigma_X^2+\sigma_W^2}\mu_X\right)\nonumber\\
    &= h(Y)+\log \frac{\sigma_X^2}{\sigma_X^2+\sigma_W^2}\nonumber\\
    &= \frac{1}{2}\loge{2\pi e \frac{\sigma_X^4}{\sigma_X^2+\sigma_W^2}}.
\end{align}
Finally, we would like to emphasize the following \emph{dual} entropy power inequalities, which are valid for arbitrary inputs \cite[Sec.~8.6]{cover_thomas_2006}:
\begin{subequations}
\begin{align}
    N(X\given Y)&\leq \E{\Var{X\given Y}} \label{nxgiveny}\\
    N(\E{X\given Y})&\leq \Var{\E{X\given Y}}\label{nexgiveny},
\end{align}
\end{subequations}
and satisfied with equalities for Gaussian $X$.
}

\subsection{Contributions}
\begin{itemize}
    \item We provide a new lower bound which relates the differential entropy of the conditional mean to that of input and output in Figure~\ref{observation}: 
\begin{equation}
    h(\E{X\given Y}) \ge 2h(X)-h(Y).
    \label{mres}
\end{equation}
\item We also derive novel upper bounds on $h(\E{X\given Y})$ and show several applications of the new bounds, most notably in the context of so-called remote source coding.
\item We extend our bound to the case where the noisy observation process is not characterized by adding Gaussian noise, but by a general exponential family distribution.
\item We also extend our bounds to the case of vector signals.
\end{itemize}

\subsection{Related Work}

\emph{Differential entropy} plays a key role in estimation theory as it measures the uncertainty in a given probabilistic scenario and it has been gaining importance in many different fields such as data science and machine learning, biology and neuroscience, economics, and other experimental sciences as highlighted in \cite{verdu2019empirical}.

\emph{Mean-squared error} (MSE) is one of the most commonly used error metrics in estimation theory and plays a significant role in many real-life applications ranging from signal detection in communications to regression problems in machine learning. Typically, in an estimation problem, the objective aims at minimizing a cost function, thus it is highly significant to find the estimator which leads to the \emph{minimum mean-square error} (MMSE). As widely known, the conditional mean is optimal in \revision{the} MSE sense, i.e., the MMSE estimate of $X$ observing $Y=y$ is found by the conditional mean of $X$ given $Y=y$. As a result, conditional mean is widely used in signal processing applications, e.g., signal detection \cite{tanahashi}, noise cancellation \cite{spagnolini}, frequency estimation \cite{james}, and target tracking \cite{miller}.  

In \cite{mmse_guo_1}, a fundamental derivative identity connecting the mutual information and MMSE is discovered and in \cite{palomar_verdu_2006}, it has been further explored.
Note that by contrast to the model in Fig. \ref{observation},  the model in \cite{mmse_guo_1} fixes the noise variance to unity and instead (but equivalently) lets the noisy observation be $Y=\sqrt{\gamma} X + W.$
Our results are more naturally expressed in terms of the model in Fig. \ref{observation}.
Properties of MMSE such as monotonicity, convexity, and infinite differentiability as a function of $\snr$ have been shown in \cite{mmse_guo_2}, while its functional properties as a function of input-output distribution have been analyzed in \cite{wu_verdu_2012,dytso_poor_bustin_shamai_2018}. Recently, in \cite{dytso_poor_shitz_2020,dytso_cardone}, the authors have focused on the derivatives of the conditional mean with respect to the observation, and many previously known identities in the literature have been recovered.



\subsection{Outline}
\begin{itemize}
    \item In Section~\ref{mainresultsection}, we provide a new lower bound which relates the differential entropy of the conditional mean to that of input and output in Figure~\ref{observation}.
\item In Section \ref{furthersection}, we study further properties of the differential entropy of \revision{the} conditional mean such as upper bounds, Taylor series expansion, low-and-high input variance asymptotics, and connection to Fisher Information. For different input distributions, the bounds on $h(
\E{X\given Y})$ are illustrated. 
\item In Section \ref{applicationsection}, two applications of the lower bound in the remote source coding \cite{eswaran_gastpar_2019} are investigated.
\item In Section \ref{extensiontoexponentialsection}, the main result is extended to the natural exponential families. 
\item In Section \ref{extensiontovectorsection}, the lower and upper bounds are extended to the vector additive Gaussian noise model.
\end{itemize}

\subsection{Notation}\label{notationsubsection}
We use uppercase letters $X,\,Y$ to denote random variables and lowercase letters $x,\,y$ to denote their realizations. In slight abuse of notation, we use  boldface uppercase letters $\mathbf{X},\,\mathbf{Y}$ to denote both random vectors as well as (deterministic) matrices. The distinction will be clear from context. Given a square-integrable, absolutely continuous random variable $X$ with a probability density function (PDF) $p_X(x)$, \revision{its mean $\E{X}$ is denoted as $\mu_X$}, its variance $\Var{X}$ is denoted as $\sigma_X^2$, and its differential entropy is 
\begin{equation}
    h(X) = -\int p_X(x)\,\log p_X(x)\,dx
\end{equation} 
where $\log$ denotes the natural logarithm.
The entropy power of $X$ is 
\begin{equation}
    N(X) = \e{2h(X)}/(2\pi e),
\end{equation} \revision{the conditional entropy power of $X$ given $Y$ is $N(X\given Y) = \e{2h(X\given Y)}/(2\pi e)$}, and the Fisher information of $X$ is $J(X) = \int p_X(x)\,\left(\frac{d}{dx}\,\log p_X(x)\right)^2 dx$. \revision{To denote $X$ is a Gaussian random variable with mean $\mu_X$ and variance $\sigma_X^2$, we use the notation $X\sim \mathcal{N}(\mu_X,\sigma_X^2)$.}

For two random variables $(X,Y)$ with a density function $p_{X,Y}(x,y),$ the
mutual information is  $I(X;Y) = h(Y)-h(Y\given X)=h(X)-h(X\given Y).$
Denote the conditional expectation and variance of $X$ given $Y$ as $\E{X\given Y}\text{ and }\Var{X\given Y}$, respectively, and the corresponding mean-square error as 
\begin{equation}
    \mmse{X\given Y} = \E{\Var{X\given Y}} = \E{(X-\E{X\given Y})^2}.
\end{equation}

It is conceptually straightforward to give an expression for the probability density function (PDF) of the conditional mean $\E{X\given Y},$ and thus, for the differential entropy of $\E{X\given Y}.$ Some simplifications can be applied in the special case of additive Gaussian noise considered in the present paper, see e.g.~\cite{dytso_poor_shitz_2020}, but in general, the resulting expressions are intractable. Therefore, in our work, we will not leverage the probability density function of $\E{X\given Y}$ directly. 

\revision{\subsection{The Case of Gaussian Inputs}\label{v2addedsubsection}
In this section, we briefly review the well-known formulas for the case where the underlying source $X$ is Gaussian.
For this case, all the quantities defined in Subsection~\ref{notationsubsection} can be calculated analytically. The probability density function of $X\given Y=y$ can be calculated by Bayes' rule to find
\begin{align}
    f_{X\given Y}(x|y) &= \frac{p_{Y|X}(y|x)\, p_{X}(x)}{p_{Y}(y)} \nonumber\\
    &= \frac{p_{W}(y-x)\, p_{X}(x)}{p_{Y}(y)}\nonumber\\
    &= \frac{\exp\left({-\frac{\left(x-\left(\frac{\sigma_X^2}{\sigma_X^2+\sigma_W^2}y+\frac{\sigma_W^2}{\sigma_X^2+\sigma_W^2}\mu_X\right)\right)^2}{2\left(\frac{\sigma_X\sigma_W}{\sqrt{\sigma_X^2+\sigma_W^2}}\right)^2}}\right)}{\frac{\sigma_X\sigma_W}{\sqrt{\sigma_X^2+\sigma_W^2}}\sqrt{2\pi}}.
\end{align}
That is, conditioned on $Y=y$, $X$ has Gaussian distribution with mean $\frac{\sigma_X^2}{\sigma_X^2+\sigma_W^2}y+\frac{\sigma_W^2}{\sigma_X^2+\sigma_W^2}\mu_X$, and variance $\frac{\sigma_X^2\sigma_W^2}{\sigma_X^2+\sigma_W^2}$. Thus, 
\begin{align}
    \E{X\given Y} &= \frac{\sigma_X^2}{\sigma_X^2+\sigma_W^2}Y+\frac{\sigma_W^2}{\sigma_X^2+\sigma_W^2}\mu_X \\
    \Var{X\given Y} &= \frac{\sigma_X^2\sigma_W^2}{\sigma_X^2+\sigma_W^2} = \mmse{X\given Y},
\end{align}
and $\E{X\given Y}$ is Gaussian with mean $\mu_X$, variance $\frac{\sigma_X^4}{\sigma_X^2+\sigma_W^2}$. The Fisher information of $\E{X\given Y}$ is the reciprocal of its variance since it is Gaussian, i.e., 
\begin{equation}
    J\left(\E{X\given Y}\right) = \frac{\sigma_X^2+\sigma_W^2}{\sigma_X^4},
\end{equation}
and the main inequality (\ref{mres}) is satisfied with equality: 
\begin{align}
    h(\E{X\given Y}) &= \frac{1}{2}\loge{2\pi e \frac{\sigma_X^4}{\sigma_X^2+\sigma_W^2}}\\&=  2h(X)-h(Y).
\end{align}

}

\section{Main Results}\label{mainresultsection}
The differential entropy of the conditional expectation shows up as a lower/upper bound in certain multi-terminal information theory problems as shown in \cite{eswaran_gastpar_2019}. Thus, it is useful to derive tight bounds on $h(\E{X\given Y})$ to obtain further insights. In general, deriving upper bounds to the differential entropy is not difficult using the maximum entropy argument, while lower bounds are less trivial and they might require extra assumptions on the input distribution such as log-concavity. Using the main result of \cite{lowerbound}, one obtains $h(\E{X\given Y})\ge \frac{1}{2}\,\loge{4 \Var{\E{X\given Y}}}\ge \frac{1}{2}\loge{4\frac{\sigma_X^4}{\sigma_X^2+\sigma_W^2}}$ as long as the density of $\E{X\given Y}$ is log-concave, which might be tedious to check.  
\par For the case of additive Gaussian noise, we present a new lower bound which combines the maximum entropy argument in the conditional setting with the identity (\ref{hV}), and is applicable regardless of the input distribution. 
Our theorem is based on the following lemma, which relates the differential entropy of the conditional expectation to that of the output. 
\begin{lemma}\label{lemma-main}\hspace*{.25em}(Calculation of $h(\E{X\given Y})$):
For the model given in Equation~\eqref{themodel} with $\sigma_W^2 >0,$ the differential entropy of the conditional mean can be written as 
\begin{equation}
    h(\E{X\given Y}) = h(Y) + \E{\loge{\frac{1}{\sigma_W^2}\Var{X\given Y}}}.
    \label{hV}
\end{equation}
\end{lemma}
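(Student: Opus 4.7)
The plan is to realize $\E{X\given Y}$ as a deterministic function of $Y$ via Tweedie's formula, compute its derivative via the Hatsell–Nolte identity, and then invoke the change-of-variables formula for differential entropy.

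First I would apply \emph{Tweedie's formula}, which for the additive Gaussian noise model \eqref{themodel} states
\begin{equation}
\E{X\given Y=y} \;=\; y + \sigma_W^2\,\frac{d}{dy}\loge{p_Y(y)} \;=:\; g(y).
\end{equation}
Thus $\E{X\given Y}=g(Y)$ is a measurable function of $Y$ alone. Differentiating,
\begin{equation}
g'(y) \;=\; 1 + \sigma_W^2\,\frac{d^2}{dy^2}\loge{p_Y(y)}.
\end{equation}
Next I would bring in the \emph{Hatsell–Nolte identity}, which (as mentioned in the introduction of the excerpt) in this setting reads $\frac{d}{dy}\E{X\given Y=y}=\frac{1}{\sigma_W^2}\Var{X\given Y=y}$. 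Combining with the display above gives the key identity
\begin{equation}
g'(y) \;=\; \frac{1}{\sigma_W^2}\,\Var{X\given Y=y}.
\end{equation}

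With this in hand, I would invoke the standard change-of-variables formula for differential entropy: whenever $g$ is (almost everywhere) differentiable and strictly monotone on the support of $Y$,
\begin{equation}
h(g(Y)) \;=\; h(Y) + \E{\loge{|g'(Y)|}}.
\end{equation}
Substituting the expression for $g'(Y)$ obtained above would then immediately yield the claim \eqref{hV}.

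The main obstacle is justifying the applicability of the change-of-variables formula, i.e., verifying that $g$ is strictly increasing (equivalently, that $\Var{X\given Y=y}>0$ for almost every $y$) and that $g'$ is measurable with the expectation well-defined. Monotonicity is clear from $g'(y)\ge 0$ together with the fact that the conditional variance vanishes on a $P_Y$-null set whenever $X$ is not a deterministic function of $Y$ (which is automatic because $W$ has a strictly positive density, so $p_{X\given Y=y}$ is non-degenerate for a.e.\ $y$). In the borderline case where $X$ happens to be constant, both sides of \eqref{hV} reduce to $-\infty$ under the usual conventions, so the identity still holds. Measurability and integrability follow from the smoothness of $p_Y=p_X\ast \varphi_{\sigma_W^2}$ (a convolution with a Gaussian density), which ensures that $y\mapsto \Var{X\given Y=y}$ is a smooth, strictly positive function wherever $p_Y>0$.
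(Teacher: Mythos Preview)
Your proposal is correct and follows essentially the same route as the paper: the paper also invokes the change-of-variables formula for differential entropy applied to $y\mapsto\E{X\given Y=y}$ and then substitutes the Hatsell--Nolte identity to obtain \eqref{hV}. The only cosmetic difference is that, instead of arguing monotonicity via positivity of the conditional variance, the paper cites that $y\mapsto\E{X\given Y=y}$ and its inverse are real-analytic (hence $C^\infty$-diffeomorphic) whenever $X$ is non-degenerate, which cleanly justifies the change-of-variables step.
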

\begin{proof}
This lemma follows by careful application of several known tools, including Tweedie's formula~\cite{robbins2020empirical} and the Hatsell-Nolte identity~\cite{hatsell_nolte_1971}. A full proof is provided in Appendix~\ref{proofofidentity}. 
\end{proof}
We can now leverage this lemma to the lower bound on the differential entropy of the conditional mean. 
\begin{theorem}\label{thm-Gauss-scalar}:\hspace*{.25em}
Let $X$ be an arbitrary continuous random variable with finite variance and $Y=X+W,$ where $W$ is a zero mean Gaussian with variance $\sigma_W^2$ and independent of $X.$ Then,
\begin{equation}
    h(\E{X\given Y}) \ge 2h(X)-h(Y).
    \label{mainresult}
\end{equation}
Furthermore, equality is achieved if and only if $X$ is Gaussian.
\end{theorem}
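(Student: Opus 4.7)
The plan is to combine Lemma~\ref{lemma-main} with the conditional version of the maximum-entropy principle. Starting from the identity
\begin{equation*}
h(\E{X\given Y}) = h(Y) + \E{\loge{\Var{X\given Y}}} - \log \sigma_W^2,
\end{equation*}
the task reduces to lower-bounding the expectation $\E{\loge{\Var{X\given Y}}}$ in terms of $h(X)$, $h(Y)$ and $\sigma_W^2$.

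The key tool is that, for each fixed $y$, the conditional distribution $p_{X\given Y}(\cdot\given y)$ has variance $\Var{X\given Y=y}$, so the Gaussian maximum entropy bound yields
\begin{equation*}
h(X\given Y=y) \le \tfrac{1}{2}\loge{2\pi e \,\Var{X\given Y=y}}.
\end{equation*}
Averaging over $Y$, I obtain $\E{\loge{\Var{X\given Y}}} \ge 2h(X\given Y) - \loge{2\pi e}$. Substituting back into the identity from Lemma~\ref{lemma-main} gives
\begin{equation*}
h(\E{X\given Y}) \ge h(Y) + 2h(X\given Y) - \loge{2\pi e \sigma_W^2}.
\end{equation*}
Next I recognize $\loge{2\pi e \sigma_W^2} = 2h(W) = 2h(Y\given X)$ and rewrite the right-hand side using the mutual-information identities $h(X\given Y) = h(X) - I(X;Y)$ and $h(Y\given X) = h(Y) - I(X;Y)$. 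The $I(X;Y)$ terms cancel and what remains is exactly $2h(X) - h(Y)$, yielding (\ref{mainresult}).

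For the equality condition, the only inequality used is the conditional maximum entropy bound, which is tight for a given $y$ if and only if $p_{X\given Y}(\cdot\given y)$ is Gaussian. Equality in (\ref{mainresult}) therefore forces $X\given Y=y$ to be Gaussian for almost every $y$. Since $Y = X+W$ with $W$ Gaussian and independent of $X$, the joint density factors as $p_{X,Y}(x,y) = p_X(x)\,p_W(y-x)$; demanding that $p_{X\given Y}(\cdot\given y)$ be Gaussian in $x$ for a.e.\ $y$ pins down $p_X$ to be Gaussian as well (this can be verified by, e.g., examining the log-density or by noting that the closed-form expressions computed in Subsection~\ref{v2addedsubsection} saturate the bound). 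The converse direction (Gaussian $X$ gives equality) is already verified in Subsection~\ref{v2addedsubsection}.

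The main obstacle I anticipate is the rigorous handling of the equality case: the maximum-entropy step is clean, but arguing that Gaussianity of every posterior $p_{X\given Y}(\cdot\given y)$ forces $X$ itself to be Gaussian requires either a deconvolution argument or a direct consistency check with Tweedie's formula, and some care is needed if one wants to avoid technical assumptions on $p_X$.
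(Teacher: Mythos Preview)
Your proposal is correct and follows essentially the same approach as the paper: start from Lemma~\ref{lemma-main}, apply the conditional maximum-entropy bound $h(X\given Y=y)\le \tfrac12\loge{2\pi e\,\Var{X\given Y=y}}$, average over $Y$, and simplify via $h(W)=h(Y\given X)$ together with the mutual-information identities. Your discussion of the equality case is in fact slightly more explicit than the paper's, which simply invokes the uniqueness of the Gaussian as the maximum-entropy distribution under a variance constraint.
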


\begin{proof}
The proof of this theorem starts from Lemma~\ref{lemma-main} and lower bounds the second summand on the right hand side of Equation~\eqref{hV} by a maximum entropy argument.
Specifically, observe that $h(X\given Y=y)\leq \frac{1}{2}\,\loge{2\pi e\,\Var{X\given Y=y}}$ for every $y$. Taking the expectation of both sides, one obtains $\restarthintedrel$
\begin{IEEEeqnarray}{rCl}
    \E{\loge{\Var{X\given Y}}} &\geq  &2\left(h(X\given Y)-\frac{1}{2}\loge{2\pi e}\right)\\
    &= &2\left(h(X\given Y)-h(W)+\log\sigma_W\right) \nonumber\\
    &=  &2\left(h(X)-h(Y)+\log\sigma_W\right) \label{elogvar3}
\end{IEEEeqnarray}
where (\ref{elogvar3}) follows by the definition of mutual information 
\begin{equation}
    I(X;Y) = h(Y) - h(W) = h(X)-h(X\given Y).
\end{equation}
Combining (\ref{hV}) and (\ref{elogvar3}), the $\log\sigma_W$ term cancels out and the desired result follows.
The uniqueness of the equality condition follows from the uniqueness of the Gaussian distribution as a maximum entropy distribution under a variance constraint.
\end{proof}
To the best of our knowledge, Equation~\eqref{mainresult} is not known in the literature outside of the special equality case when the input is Gaussian \cite{eswaran_gastpar_2019}. As immediate applications, Equation~\eqref{mainresult} can be used to compare the tightness of rate-distortion lower bounds in remote source coding under Gaussian noise, which is explained in Section \ref{mainappsection}; and novel rate loss bounds can be derived as explained in Section \ref{novelratelossbounds}.   

\begin{remark}.
\\
Our main result, Theorem~\ref{thm-Gauss-scalar}, can be expressed in terms of entropy powers as
\begin{subequations}
\begin{equation}
    N(\E{X\given Y})\,N(Y)\geq (N(X))^2.
\end{equation}
There is a pleasing \emph{duality} with a known result about variances,
\begin{equation}
\Var{\E{X\given Y}}\,\sigma_Y^2\geq (\sigma_X^2)^2. \label{sigmaversion}
\end{equation}
\end{subequations}
This last inequality follows directly from the fact that $\mmse{X\given Y}$
cannot be larger than the mean-squared error of the best \emph{linear} estimator,
\begin{equation}
    \mmse{X\given Y}\leq \frac{\sigma_X^2\sigma_W^2}{\sigma_Y^2}, 
    \label{sigmaversion3}
\end{equation}
combined with the fact that
\begin{equation}
    \mmse{X\given Y} + \Var{\E{X\given Y}} = \sigma_X^2, \label{lawoftotalvariance}
\end{equation}
which is the law of total variance.

\end{remark}

\section{Further Properties of the Differential Entropy and Entropy Power of Conditional Mean}\label{furthersection}
\subsection{Upper Bounds}
By the concavity of the logarithm, one can use Jensen's Inequality to derive an upper bound.
\begin{lemma}{\hspace*{.25em}(Upper Bounds of Differential Entropy):}\label{lemma-Upperbounds} For the model given in \ref{themodel}, the differential entropy of the conditional mean can be upper bounded as 
\begin{IEEEeqnarray}{rCl}
    h(\E{X\given Y}) &= &h(Y) + \E{\log{\Var{X\given Y}}}-\log\sigma_W^2\\ 
         &\leq& h(Y) + \log\E{\Var{X\given Y}}-\log\sigma_W^2 \label{ub1}\\
         &=&h(Y) + \log\mmse{X\given Y}-\log\sigma_W^2 \label{ub2}\\
         &\leq &h(Y) + \loge{\frac{\sigma_X^2}{\sigma_X^2 +\sigma_W^2}} \label{ub3}\\
         &\leq &\frac{1}{2} \loge{2\pi e\,\frac{\sigma_X^4}{\sigma_X^2+\sigma_W^2}} \label{ub4}. 
\end{IEEEeqnarray}
\end{lemma}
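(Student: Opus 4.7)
The plan is to build the chain of inequalities directly from the identity in Lemma~\ref{lemma-main} and chip away one term at a time. Starting from the exact expression $h(\E{X\given Y}) = h(Y) + \E{\log\Var{X\given Y}} - \log\sigma_W^2$, the only non-deterministic quantity to handle is $\E{\log\Var{X\given Y}}$, so I would first apply Jensen's inequality to the concave $\log$ to push the expectation inside, yielding $\E{\log\Var{X\given Y}} \le \log \E{\Var{X\given Y}}$. By the definition of MMSE reproduced in Subsection~\ref{notationsubsection}, $\E{\Var{X\given Y}} = \mmse{X\given Y}$, which gives~\eqref{ub1} and~\eqref{ub2}.

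Next I would replace $\mmse{X\given Y}$ by its linear counterpart. Since the conditional mean is the optimal estimator while the best \emph{linear} estimator of $X$ from $Y=X+W$ achieves MSE $\sigma_X^2\sigma_W^2/(\sigma_X^2+\sigma_W^2)$, we have $\mmse{X\given Y} \le \sigma_X^2\sigma_W^2/(\sigma_X^2+\sigma_W^2)$ (equation~\eqref{sigmaversion3} in the Remark). Plugging this into~\eqref{ub2}, the factor $\sigma_W^2$ cancels against $\log\sigma_W^2$, yielding~\eqref{ub3}.

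Finally, to reach~\eqref{ub4}, I would upper bound $h(Y)$ using the Gaussian maximum-entropy principle: since $Y = X+W$ has variance $\sigma_Y^2 = \sigma_X^2+\sigma_W^2$ (by independence of $X$ and $W$), we get $h(Y) \le \frac{1}{2}\loge{2\pi e(\sigma_X^2+\sigma_W^2)}$. Adding $\loge{\sigma_X^2/(\sigma_X^2+\sigma_W^2)}$ consolidates the logarithms into $\frac{1}{2}\loge{2\pi e \sigma_X^4/(\sigma_X^2+\sigma_W^2)}$, which is exactly~\eqref{ub4}.

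No step in this argument is deep; the work is essentially bookkeeping to make sure the $\sigma_W^2$ terms cancel as advertised. The only mild subtlety to flag is that each inequality is tight for Gaussian $X$ (Jensen becomes equality when $\Var{X\given Y}$ is $Y$-a.s.\ constant, the linear MMSE bound is tight for jointly Gaussian $(X,Y)$, and the maximum-entropy bound on $h(Y)$ is tight when $Y$ is Gaussian), so the entire chain is simultaneously tight precisely in the Gaussian case, matching the closed-form computation in Subsection~\ref{v2addedsubsection}.
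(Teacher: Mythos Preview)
Your proposal is correct and follows essentially the same route as the paper's own proof: Jensen's inequality for \eqref{ub1}, the definition of $\mathsf{mmse}$ for \eqref{ub2}, the linear-MMSE upper bound (equivalently, that Gaussian input maximizes $\mathsf{mmse}$) for \eqref{ub3}, and the Gaussian maximum-entropy bound on $h(Y)$ for \eqref{ub4}. Your added remark on the simultaneous tightness of all three inequalities in the Gaussian case is also exactly what the paper notes immediately after the lemma.
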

\begin{proof}
To prove this lemma, we note that (\ref{ub1}) follows from the Jensen's Inequality, (\ref{ub2}) is by definition of $\mathsf{mmse}$, and (\ref{ub3}) and (\ref{ub4}) follow from the maximization of $\mathsf{mmse}$ and differential entropy, respectively.
\end{proof}

Each inequality in Lemma~\ref{lemma-Upperbounds} is satisfied with equality if and only if the input $X$ is Gaussian. That is, for Gaussian inputs, $\Var{X\given Y}$ is constant almost surely, and $\mathsf{mmse}$ and differential entropy are maximized \cite{mmse_guo_2}.
From (\ref{ub4}), it is evident that the maximum differential entropy of the conditional mean is achieved when the input $X$ is Gaussian even though it minimizes the variance of $\E{X\given Y}$, which we record in the following corollary:
\begin{corollary}{\hspace{.25em}(Maximum Entropy):}
For the model of Figure~\ref{observation}, we have
\begin{equation}
    \max_{p_X} h(\E{X\given Y}) = \frac{1}{2} \loge{2\pi e\,\frac{\sigma_X^4}{\sigma_X^2+\sigma_W^2}},
    \label{maximization}
\end{equation}
and the maximum is achieved when $X\sim \mathcal{N}(\mu_X,\sigma_X^2)$.
\end{corollary}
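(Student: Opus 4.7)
The plan is to read this corollary as an immediate repackaging of Lemma~\ref{lemma-Upperbounds}. The chain of inequalities (\ref{ub1})--(\ref{ub4}) already established the upper bound $h(\E{X\given Y}) \leq \tfrac{1}{2}\loge{2\pi e\,\sigma_X^4/(\sigma_X^2+\sigma_W^2)}$ uniformly over all finite-variance inputs with variance $\sigma_X^2$, so the only remaining task is to exhibit a distribution that attains this bound and to argue that no non-Gaussian distribution can.

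For achievability, I would plug in $X\sim\mathcal{N}(\mu_X,\sigma_X^2)$ and invoke the closed-form computation recorded in Subsection~\ref{v2addedsubsection}: since the posterior of $X$ given $Y=y$ is Gaussian with mean linear in $y$ and variance $\sigma_X^2\sigma_W^2/(\sigma_X^2+\sigma_W^2)$, the random variable $\E{X\given Y}$ is an affine function of $Y$, hence Gaussian with variance $\sigma_X^4/(\sigma_X^2+\sigma_W^2)$, yielding exactly the right-hand side of (\ref{maximization}).

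For the uniqueness of the maximizer, I would trace back through each step of Lemma~\ref{lemma-Upperbounds}. Equality in Jensen's inequality (\ref{ub1}) forces $\Var{X\given Y}$ to be constant almost surely in $Y$; equality in (\ref{ub3}) requires $\mmse{X\given Y}$ to coincide with the linear MMSE $\sigma_X^2\sigma_W^2/\sigma_Y^2$, which is characteristic of jointly Gaussian $(X,Y)$; and equality in the maximum-entropy bound (\ref{ub4}) requires $Y$ to be Gaussian. Since $Y=X+W$ with $W$ Gaussian and independent of $X$, the Gaussianity of $Y$ combined with Cramér's decomposition theorem forces $X$ itself to be Gaussian.

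There is no real obstacle: everything needed has been established in the preceding lemma and the Gaussian-input subsection. The only subtlety worth flagging is to state clearly that the maximization is taken over all $p_X$ with a prescribed variance $\sigma_X^2$ (otherwise the right-hand side is not a fixed target), and to note that the argument above delivers both the value of the maximum and the fact that the Gaussian is the unique maximizer.
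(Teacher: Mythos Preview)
Your proposal is correct and follows essentially the same route as the paper: the corollary is read off directly from Lemma~\ref{lemma-Upperbounds}, with the upper bound coming from (\ref{ub4}) and achievability from the Gaussian-input computation in Subsection~\ref{v2addedsubsection}. Your uniqueness argument is actually more explicit than the paper's, which simply asserts (with a citation to \cite{mmse_guo_2}) that each inequality in the chain is tight if and only if $X$ is Gaussian; your invocation of Cram\'er's decomposition theorem to pass from Gaussianity of $Y$ back to Gaussianity of $X$ is a clean way to close that gap.
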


In the sequel, we will sometimes find it convenient to rewrite our upper and lower bounds on the differential entropy of the conditional mean in terms of its entropy power. Namely, Equations~\eqref{mainresult},~\eqref{ub2}, and~\eqref{ub4} can be equivalently expressed as
\begin{align}
\textstyle
    \frac{N^2(X)}{N(Y)}\leq N(\E{X\given Y}) \leq N(Y)\frac{(\mmse{X\given Y})^2}{\sigma_W^4}\leq \frac{\sigma_X^4}{\sigma_X^2+\sigma_W^2}.
\label{remark-six}
\end{align}
\revision{
\begin{remark}. The upper bound (\ref{remark-six}) can also be written as
\begin{equation}
    N(\E{X\given Y})\leq \sigma_X^2-\sigma_W^2\frac{\sigma_X^2}{\sigma_Y^2}.
\end{equation}
There is also a \emph{dual} result about variances, which follows from \cite[Eqn.~16]{eswaran_gastpar_2019} and the law of total variance.
\begin{equation}
    \Var{\E{X\given Y}}\leq \sigma_X^2-\sigma_W^2\frac{N(X)}{N(Y)}.
\end{equation}
\end{remark}}
\subsection{Comparison of The Lower and Upper Bounds}

When the input $X$ is Gaussian, the upper and lower bounds are satisfied with equality. \revision{On the other extreme, as the input distribution approaches to a discrete random variable, the gap between $2h(X)-h(Y)$ and $h(\E{X\given Y})$ increases since the input differential entropy $h(X)$ approaches to $-\infty$. To illustrate, suppose the input is a zero mean Gaussian mixture random variable with two components centered around $-1,\,1$. That is, 
\begin{equation}
    X = 2B-1+\widetilde{X}
\end{equation}
where $B$ and $\widetilde{X}$ are Bernoulli$(\frac{1}{2})$ and Gaussian random variables with mean $0$ and variance $\sigma_X^2-1$, respectively; and $B,\,\widetilde{X},$ and $W$ are independent. We refer to this example as Gaussian mixture input.
In Figure~\ref{figure1}, the bounds are illustrated for the case when the input $X$ follows Gaussian mixture, exponential and uniform distribution. Note that $h(Y)$, $h(\E{X\given Y})$, and $\mathsf{mmse}$ are calculated numerically for exponential and uniform inputs as there are no closed-form expressions. For the Gaussian mixture input, all quantities are calculated numerically.} We also calculated the bounds for other input distributions, e.g., Laplace and Triangular, and observed that there are no visible gaps. Without loss of generality, the noise variance is set to unity in both simulations.

\begin{figure*}[!htb]
    \centering
    \includegraphics[width=0.3215\textwidth]{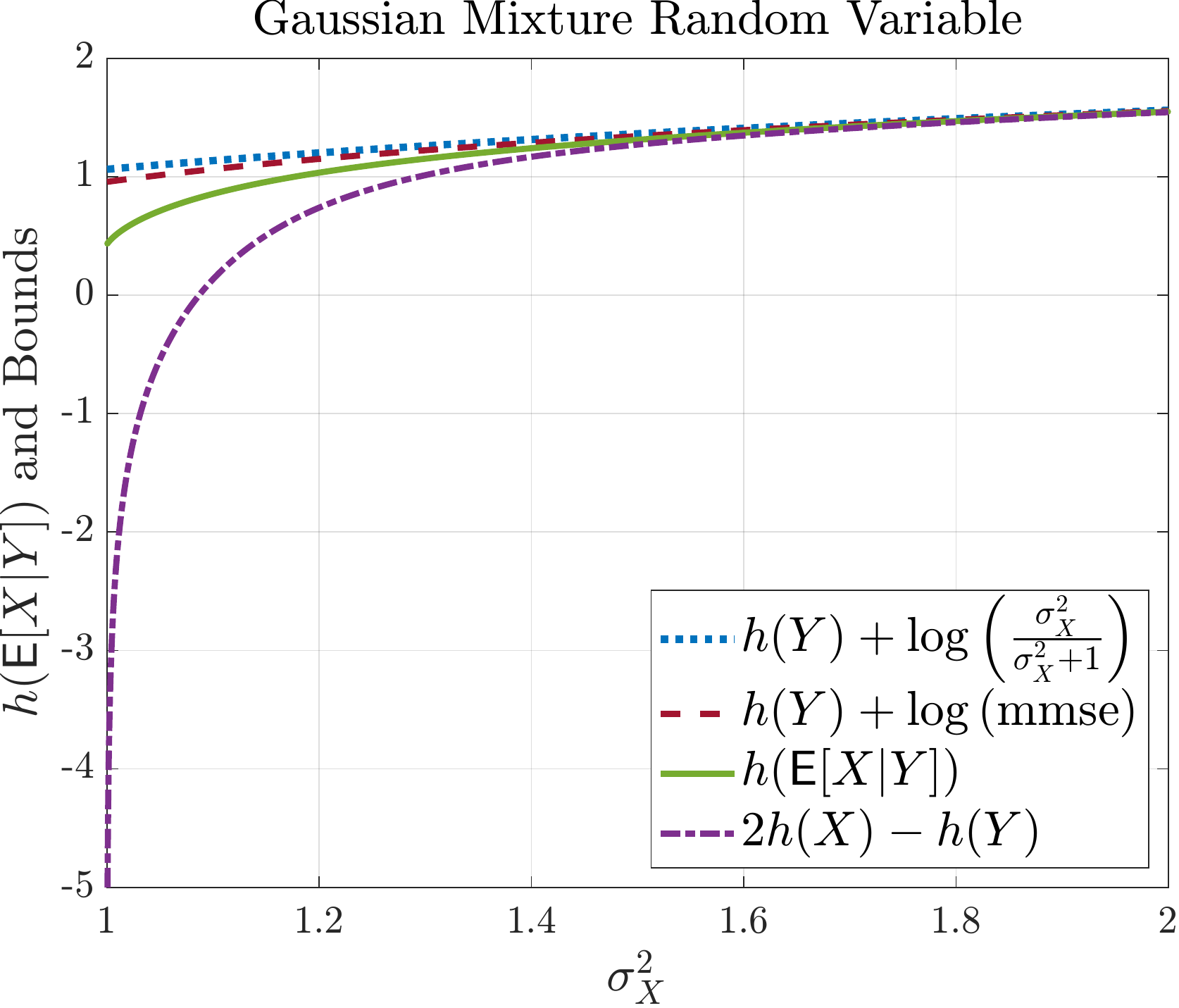}\hspace*{.15cm}
    \includegraphics[width=0.325\textwidth]{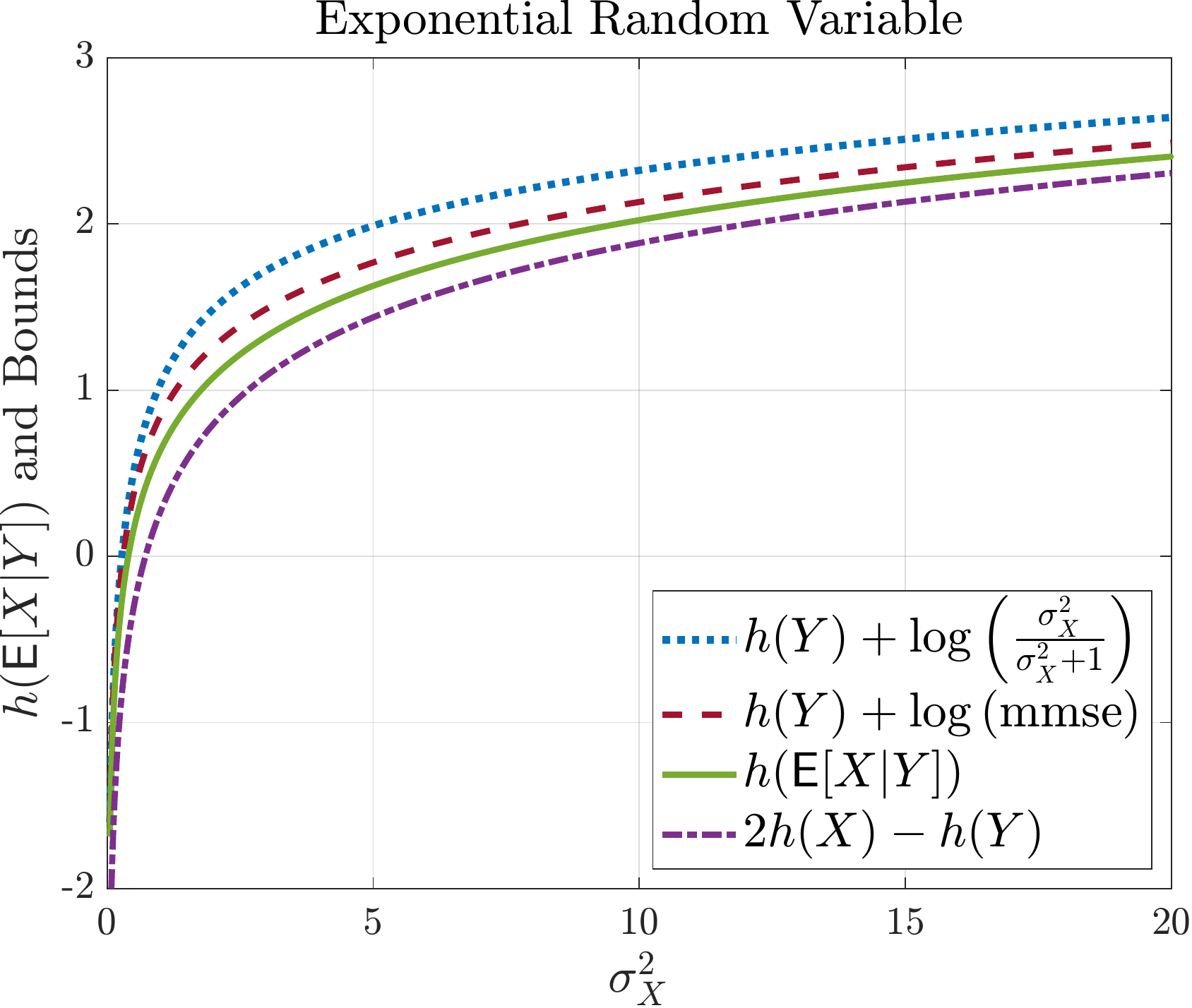}\hspace*{.15cm}
    \includegraphics[width=0.325\textwidth]{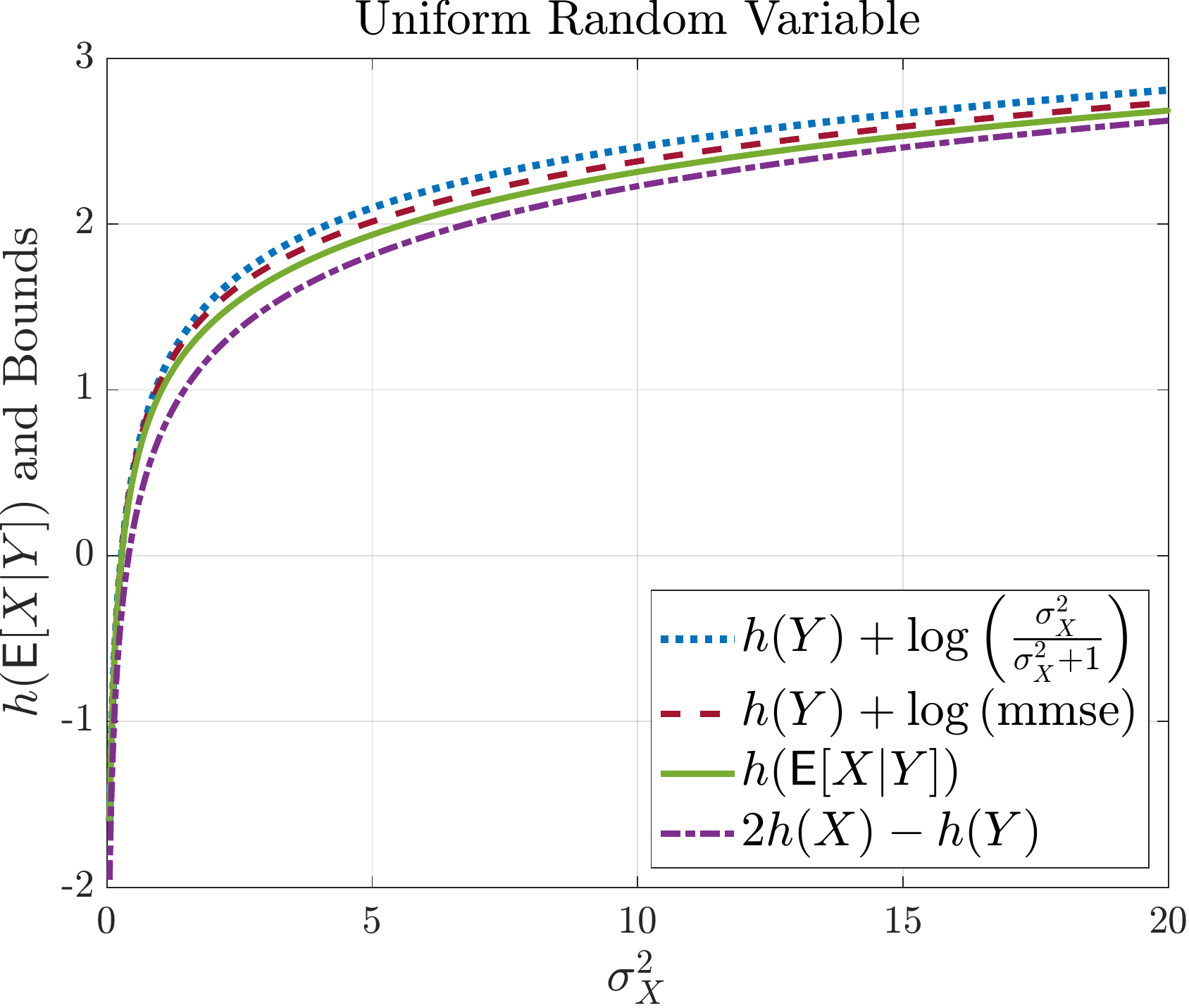}
    \caption{\revision{$h(\E{X\given Y})$ and the bounds (\ref{mainresult}), (\ref{ub2}), (\ref{ub3}) for Gaussian mixture, exponential and uniform inputs.}}
    \label{figure1}
\end{figure*}
\par Observe from (\ref{figure1}) that the bounds are tighter for input distributions close to Gaussian in terms of Kullback-Leibler (KL) divergence, and both the upper and lower bounds become tighter as the input variance increases.  

\subsection{Input Variance Asymptotics}
We fix the noise variance $\sigma_W^2$, and study low and high $\sigma_X^2$ asymptotics of $h(\E{X\given Y})$ and $N(\E{X\given Y})$. 
 \begin{lemma}{\hspace*{.25em}(Low $\sigma_X^2$ Asymptotics):}
As $\sigma_X^2\rightarrow 0^+$
\begin{subequations}
 \begin{align}
     h(\E{X\given Y}) &\sim \log\sigma_X^2 \label{lowsnr1}\\
     N(\E{X\given Y}) &\sim \sigma_X^4 \label{lowsnr2}
 \end{align}
 \end{subequations}
 i.e., $\lim_{\sigma_X^2\to 0^+} \frac{h(\E{X\given Y})}{\log\sigma_X^2} = 1$ and $\lim_{\sigma_X^2\to 0^+} \frac{N(\E{X\given Y})}{\sigma_X^4}=1$.
 \end{lemma}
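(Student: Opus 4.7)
The plan is to sandwich $h(\E{X\given Y})$ between the lower bound of Theorem~\ref{thm-Gauss-scalar} and the upper bound (\ref{ub4}) of Lemma~\ref{lemma-Upperbounds}, then show that both bounds take the form $\log\sigma_X^2$ plus a bounded correction as $\sigma_X^2\to 0^+$. The upper direction is immediate from (\ref{ub4}):
\[
    h(\E{X\given Y}) \le \tfrac12\loge{2\pi e\,\frac{\sigma_X^4}{\sigma_X^2+\sigma_W^2}} = \log\sigma_X^2 + \tfrac12\loge{2\pi e/\sigma_W^2} + O(\sigma_X^2),
\]
which is $\log\sigma_X^2 + O(1)$.

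For the matching lower direction, I parameterize the family through $X=\mu_X+\sigma_X\widetilde X$ with $\widetilde X$ standardized (zero mean, unit variance, fixed shape), so that $h(X)=h(\widetilde X)+\tfrac12\log\sigma_X^2$ with $h(\widetilde X)$ a finite constant. To apply Theorem~\ref{thm-Gauss-scalar} I also need the limiting value of $h(Y)$, which I obtain without any continuity-of-entropy argument by combining the entropy power inequality with Gaussian maximum entropy:
\[
    \tfrac12\loge{2\pi e(N(X)+\sigma_W^2)} \le h(Y) \le \tfrac12\loge{2\pi e(\sigma_X^2+\sigma_W^2)}.
\]
Since $N(X)\le\sigma_X^2\to 0$, both sides converge to $h(W)=\tfrac12\loge{2\pi e\sigma_W^2}$, whence $h(Y)\to h(W)$. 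Theorem~\ref{thm-Gauss-scalar} then yields
\[
    h(\E{X\given Y})\ge 2h(X)-h(Y) = \log\sigma_X^2 + 2h(\widetilde X) - h(W) + o(1),
\]
again of the form $\log\sigma_X^2 + O(1)$.

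Dividing the combined sandwich by $\log\sigma_X^2$ (which tends to $-\infty$) drives the $O(1)$ correction to zero and squeezes the ratio onto $1$, proving (\ref{lowsnr1}). For (\ref{lowsnr2}), exponentiating gives $N(\E{X\given Y})=\e{2h(\E{X\given Y})}/(2\pi e) = \sigma_X^4\cdot\e{O(1)}/(2\pi e)$, which captures the $\sigma_X^4$ order as $\sigma_X^2\to 0^+$.

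The main obstacle is evaluating $h(Y)$ in the $\sigma_X^2\to 0^+$ limit: direct continuity arguments for differential entropy under convergence in distribution are fragile, so the EPI-plus-maximum-entropy squeeze sketched above is the technically crucial step. A subtler point is that the strict ratio $\lim N(\E{X\given Y})/\sigma_X^4=1$ claimed in (\ref{lowsnr2}) demands the exponentiated $O(1)$ to collapse to exactly $2\pi e$, a finer conclusion that would require an auxiliary estimate on $h(Y)-h(W)$ (or an implicit normalization such as $\sigma_W^2=1$) beyond what Theorem~\ref{thm-Gauss-scalar} and Lemma~\ref{lemma-Upperbounds} supply in isolation.
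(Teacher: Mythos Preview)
Your sandwich argument for (\ref{lowsnr1}) is correct and is a genuinely different route from the paper's. The paper does not go through Theorem~\ref{thm-Gauss-scalar} and Lemma~\ref{lemma-Upperbounds} at all; instead it works directly with the exact identity of Lemma~\ref{lemma-main},
\[
    h(\E{X\given Y}) = h(Y) + \E{\loge{\Var{X\given Y}}} - \log\sigma_W^2,
\]
arguing that $h(Y)\to h(W)$ and that $\Var{X\given Y}/\sigma_X^2\to 1$ (so $\E{\loge{\Var{X\given Y}}}-\log\sigma_X^2\to 0$). That yields the sharper statement
\[
    h(\E{X\given Y}) = \log\sigma_X^2 + \tfrac12\loge{2\pi e} + o(1),
\]
from which both (\ref{lowsnr1}) and (\ref{lowsnr2}) follow at once. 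Your EPI/maximum-entropy squeeze for $h(Y)\to h(W)$ is a nice touch and is actually more careful than the paper's one-line almost-sure-convergence claim; the paper's argument is terser but correspondingly sketchier on the interchange of limit and expectation for the $\E{\loge{\Var{X\given Y}}}$ term.

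Your self-diagnosis on (\ref{lowsnr2}) is right, and this is the real gap. The sandwich only pins $h(\E{X\given Y})-\log\sigma_X^2$ between $2h(\widetilde X)-h(W)$ and $\tfrac12\loge{2\pi e/\sigma_W^2}$, and these two constants coincide only when $\widetilde X$ is Gaussian. So for non-Gaussian inputs your method gives $N(\E{X\given Y})\asymp \sigma_X^4$ but not $N(\E{X\given Y})/\sigma_X^4\to 1$. (Incidentally, the paper's phrase ``(\ref{lowsnr1}), which is equivalent to (\ref{lowsnr2})'' is loose: (\ref{lowsnr2}) is strictly stronger than (\ref{lowsnr1}); the paper's method establishes the stronger $o(1)$ statement above, which implies both.) To close the gap without abandoning your approach you would need exactly the second ingredient the paper uses, namely $\E{\loge{\Var{X\given Y}}}-\log\sigma_X^2\to 0$, and at that point you are essentially back to the paper's proof via Lemma~\ref{lemma-main}.
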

 \begin{proof}
 Observe that as $\sigma_X^2$ approaches $0^+$, since $Y$ converges to $\mu_X+W$ almost surely, the first term in the right hand side of (\ref{hV}) approaches to the differential entropy of the noise, i.e., $h(Y)= - \int p_Y(y) \log p_Y(y)dy\rightarrow - \int p_W(y) \log p_W(y)dy = h(W)$. For the second term, it is easy to check that $\Var{X\vert Y}$ converges to $\Var{X}=\sigma_X^2$ almost surely, which implies that $\E{\log\Var{X\vert Y}}\rightarrow \log \sigma_X^2$. Combining these, we obtain (\ref{lowsnr1}), which is equivalent to (\ref{lowsnr2}).
 \end{proof}

\begin{remark}{\hspace*{.25em}(High $\sigma_X^2$ Asymptotics):}
As $\sigma_X^2\rightarrow \infty$
\begin{subequations}
 \begin{align}
     h(\E{X\given Y}) &= \mathcal{O}\left(\log \sigma_X\right) \label{highsnr1}\\
     N(\E{X\given Y}) &= \mathcal{O}\left(\sigma_X^2\right). \label{highsnr2}
 \end{align}
 \end{subequations}
 \end{remark}
 \begin{proof}
 Proof directly follows from (\ref{ub4}), i.e.,  $h(\E{X\given Y}) \leq \frac{1}{2}\loge{2\pi e}+
 \frac{1}{2}\loge{\frac{\sigma_X^2}{\sigma_X^2 +\sigma_W^2}}+\frac{1}{2}\log\sigma_X^2 $ and $N(\E{X\given Y}) \leq \frac{\sigma_X^4}{\sigma_X^2+\sigma_W^2}$
 \end{proof} 
 
\subsection{Taylor Series Expansion}
In this subsection, we discuss Taylor series approximation to $\E{\loge{\Var{X\given Y}}}$. Assume that all moments of $\Var{X\given Y}$ are finite. Denote the $k^{th}$ central moment of $\Var{X\given Y}$ by $c_k$, i.e., $\E{(\Var{X\given Y}-\E{\Var{X\given Y}})^k} = c_k$ and $\mmse{X\given Y}$ by $\mathsf{mmse}$. 
Since $\loge{\cdot}$ is sufficiently differentiable, we have
\begin{IEEEeqnarray}{rCl}
\E{\log{\Var{X\given Y}}} &=& \log\mathsf{mmse}-\frac{c_2}{2\mathsf{mmse}^2} +  \frac{c_3}{3\,\xi_S^3}
\end{IEEEeqnarray}
where $\xi$ satisfies
\begin{equation}
    \left|\xi-\E{\Var{X\given Y}}\right|<\left|\Var{X\given Y}-\E{\Var{X\given Y}}\right|.
\end{equation}

Using (\ref{hV}), Taylor expansion of $h(\E{X\given Y})$ follows.
\begin{align}
    h(\E{X\given Y}) = h(Y) + \loge{\frac{\mathsf{mmse}}{\sigma_W^2}}-\frac{c_2}{2\mathsf{mmse}^2} +  \frac{c_3}{3\,\xi_S^3}.\label{taylor}
\end{align}
Observe that neglecting the last two terms in (\ref{taylor}), we recover the upper bound in (\ref{ub2}), and neglecting only the last term, we obtain an approximation to $h(\E{X\given Y})$.
\begin{align}
    h(\E{X\given Y}) \approx h(Y) + \loge{\frac{\mathsf{mmse}}{\sigma_W^2}}-\frac{\Var{\Var{X\given Y}}}{2\mathsf{mmse}^2}.
\end{align}

\subsection{Connection to Fisher Information}
A lower bound for the Fisher Information of $V$ directly follows from Stam's Inequality \cite{stam_fisher}.
\begin{equation}
J(\E{X\given Y})\geq \frac{1}{N(\E{X\given Y})}.
\label{pr1}
\end{equation}

We can further lower bound this by applying Equation~\eqref{remark-six}, obtaining
\begin{align}
J(\E{X\given Y}) &\geq \frac{1}{N(Y)}\frac{\sigma_W^4}{\left(\mmse{X\given Y}\right)^2} \\&\geq \frac{\sigma_X^2+\sigma_W^2}{\sigma_X^4},
\end{align}
and noting that equality is achieved when the input is Gaussian.
This lower bound on $J(\E{X\given Y})$ is of interest since it does not involve the differential entropy of $\E{X\given Y}.$

\revision{
\subsection{Comparison with Costa's Entropy Power Inequality}\label{comparewithepi}
\par Define $Y_\alpha \triangleq X+\alpha W$, where $\alpha\in [0,1]$, and $X,W$ satisfy the same assumptions as the main model (\ref{themodel}). Observe from (\ref{elogvar3}) that the lower bound  (\ref{mainresult}), applied to the modified model $Y_{\alpha}$ is equivalent to 
\begin{align}
N(Y_\alpha) &\geq \alpha^2\,\sigma_W^2\, N(X)\, \exp\left(-\E{\log \Var{X\given Y_\alpha}}\right). \label{epilike}
\end{align}

Costa's entropy power inequality (EPI) \cite{costa} gives a lower bound on $N(Y_\alpha)$ for $\alpha\in [0,1]$ in terms of $N(X)$ and $N(Y_1)$. That is,
\begin{equation}
    N(Y_\alpha) \geq (1-\alpha^2)N(X)+\alpha^2N(Y_1).\label{costaepi}
\end{equation}
It is tempting to compare (\ref{epilike}) with the Costa's EPI as they both provide lower bounds on the entropy power of the output $Y_\alpha$. 
In Fig. \ref{figure2}, we compare these bounds when the input $X$ is a uniform random variable with zero mean and variance $\sigma_X^2.$ Moreover, we set $\sigma_W=1$ and consider $\alpha\in\{\frac{2}{5},\frac{2}{3}\}$. The figure illustrates the gap to the main result, which is 
\begin{subequations}
\begin{equation}
    N(Y_\alpha)-\alpha^2\, N(X)\, \exp\left(-\E{\log \Var{X\given Y_\alpha}}\right),\label{gaptomainresult}
\end{equation} 
and the gap to Costa's EPI, which is 
\begin{equation}
    N(Y_\alpha)-\left((1-\alpha^2)N(X)+\alpha^2N(Y_1)\right).\label{gaptocostasepi}
\end{equation} \end{subequations}
As expected, (\ref{epilike}) performs better for small values of $\alpha$.}

\begin{figure}[!htb]
    \centering
    \includegraphics[width = \linewidth]{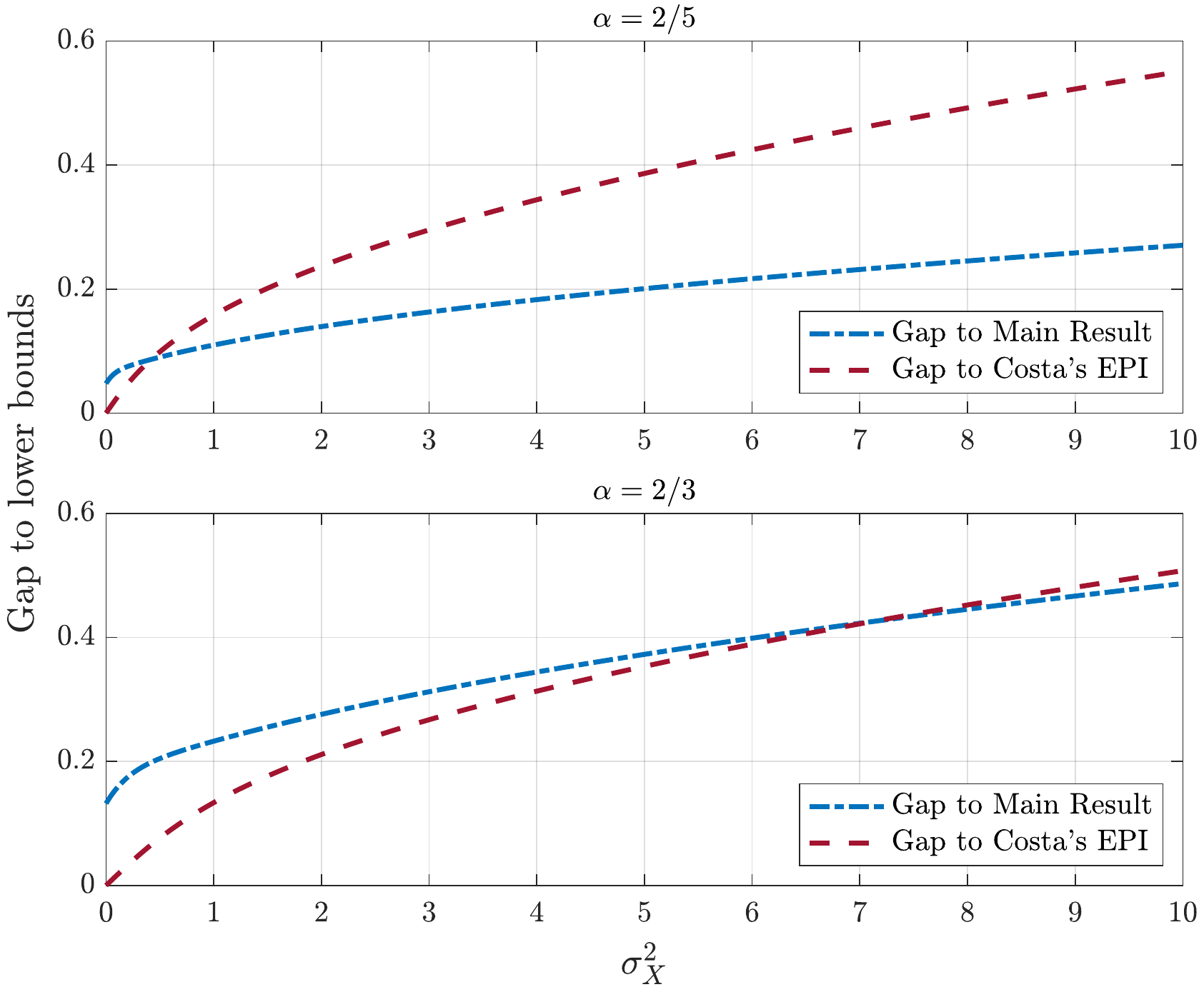}
    \caption{Comparison of the tightness of lower bounds for $N(Y_{\alpha})$: Gap to Main Result and Gap to Costa's EPI refer to (\ref{gaptomainresult}) and (\ref{gaptocostasepi}), respectively.}
    \label{figure2}
\end{figure}

\section{Applications}\label{applicationsection}

\subsection{Lower Bounds of the Rate Distortion Function in the Remote Source Coding Problem}\label{sec-remote}
\label{mainappsection}
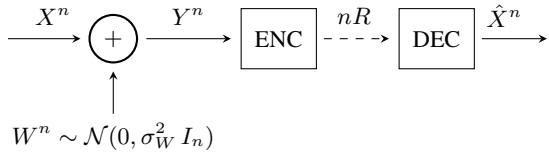
\begin{figure}[H]
\begin{center}
\small
\begin{tikzpicture}[scale=2,shorten >=1pt, auto, node distance=1cm,
   node_style/.style={scale=1,circle,draw=black,thick},
   edge_style/.style={draw=black,dashed}]

    \node [fill=none] at (-.4,0.15) (nodeS) {$X^n$};
    \node [fill=none] at (0,-0.65) (nodeS) {$W^n \sim \mathcal{N}(0,\sigma_W^2\,I_n)$};
    \node [fill=none] at (0.5,0.15) (nodeS) {$Y^n$};
    \node[rectangle,draw,minimum width = 1cm, 
    minimum height = .8cm] (r) at (1.1,0) {ENC};
    \node[node_style] (v1) at (0,0) {$+$};
    \node [fill=none] at (1.6,0.15) (nodeS) {$nR$};
    \node[rectangle,draw,minimum width = 1cm, 
    minimum height = .8cm] (r) at (2.15,0) {DEC};
    \node [fill=none] at (2.6,0.15) (nodeS) {$\hat{X}^n$};
   
    \draw [-stealth](-0.7,0) -- (-0.2,0);
    \draw [-stealth](0,-0.5) -- (0,-0.2);
    \draw [-stealth](0.22,0) -- (0.8,0);
    \draw [dashed,-stealth](1.4,0) -- (1.85,0);
    \draw [-stealth](2.45,0) -- (2.9,0);
    \end{tikzpicture}
\end{center}
  \caption{The AWGN remote source coding problem.}\label{Fig-AWGNremoteSC}
\end{figure}

An important application of Inequality (\ref{mainresult}) can be found in the remote source coding problem.
Specifically, consider the source coding problem illustrated in Figure~\ref{Fig-AWGNremoteSC}: An encoder observes the underlying source $X$ subject to additive white Gaussian noise $W.$ The noisy observation is $Y$ and can be encoded using $R$ bits per sample. The decoder produces a reconstruction $\hat{X}$ to within the smallest possible mean-squared error. For a formal problem statement, we refer to~\cite{eswaran_gastpar_2019}.
The smallest possible rate to attain a target distortion $D$ is referred to as the {\it remote rate-distortion function,} denoted as $R_X^R(D).$
For the case where the underlying source $X$, not necessarily Gaussian, has finite differential entropy,~\cite{eswaran_gastpar_2019} discusses two different lower bounds for the remote rate-distortion function, namely
\begin{IEEEeqnarray}{rCl}
R_X^R(D) &\geq& \frac{1}{2}\log^{+} \frac{N(\E{X\given Y})}{D} \nonumber \\
&& {}+\frac{1}{2}\log^{+} \frac{N(Y)}{N(Y)-\frac{N(X)}{D}N(W)}, \label{lb1}
\end{IEEEeqnarray}
and
\begin{IEEEeqnarray}{rCl}
R_X^R(D)&\geq& \frac{1}{2}\log^{+} \frac{N(X)}{D}\nonumber\\&& {}+ \frac{1}{2}\log^{+} \frac{N(X)}{N(Y)-\frac{N(X)}{D}\sigma_W^2}
\label{lb2}
\end{IEEEeqnarray}
where $D>\E{(X-\E{X\given Y})^2}$ and $\log^{+}x=\max\{0,\,\log x\}$. At the time of their writing, a comparison between the right-hand sides (RHS) of (\ref{lb1}) and (\ref{lb2}) was not available. We prove that the RHS of (\ref{lb1}) is always greater than or equal to the RHS of (\ref{lb2}).
\begin{proposition}{\hspace*{.25em}(A comparison of lower bounds for the remote rate-distortion function):}
\begin{multline}
    \frac{1}{2}\log^{+} \frac{N(\E{X\given Y})}{D}+ \frac{1}{2}\log^{+} \frac{N(Y)}{N(Y)-\frac{N(X)}{D}N(W)}  \\
    \geq \frac{1}{2}\log^{+} \frac{N(X)}{D}+ \frac{1}{2}\log^{+} \frac{N(X)}{N(Y)-\frac{N(X)}{D}\sigma_W^2}.
\label{mainapp}
\end{multline}
Furthermore, equality is achieved if and only if $X$ is Gaussian.
\end{proposition}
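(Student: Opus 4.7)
The plan is to reduce the proposition to Theorem~\ref{thm-Gauss-scalar} by recognizing that bound~(\ref{lb1}) features $N(\E{X\given Y})$ directly, while bound~(\ref{lb2}) is morally what one obtains by substituting the Theorem~\ref{thm-Gauss-scalar} estimate $N(\E{X\given Y}) \ge N^2(X)/N(Y)$ into bound~(\ref{lb1}). Since the expression in (\ref{lb1}) is monotone nondecreasing in $N(\E{X\given Y})$, the desired comparison should follow.

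Concretely, I would set $A := N(Y) - N(X)\sigma_W^2/D$ and observe that $A > 0$ follows from $D > \mathsf{mmse} \ge N(X)\sigma_W^2/N(Y)$. Using the identity $\log^{+}x = \log\max(1,x)$, inequality~(\ref{mainapp}) is equivalent to
\[
\max(D, N(\E{X\given Y})) \cdot N(Y) \;\ge\; \max(D, N(X))\cdot\max(A, N(X)),
\]
after simplifying $\max(A, N(Y)) = N(Y)$ via $N(Y) > A$.

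Next, I would case-split according to the sizes of $N(X)$ relative to $D$ and $A$. A preliminary observation from the entropy power inequality ($N(Y) \ge N(X) + \sigma_W^2$) is that $N(X) \le D$ forces $N(X) \le A$, so only three cases arise. When $N(X) \le D$, the right-hand side collapses to $D\,A$ and the inequality reduces to $N(Y) \ge A$, which is true by the definition of $A$. When $N(X) > D$ and $N(X) > A$, the right-hand side equals $N^2(X)$ and Theorem~\ref{thm-Gauss-scalar} supplies $N(\E{X\given Y})\,N(Y) \ge N^2(X)$, closing the argument. The remaining mixed case $D < N(X) \le A$ is the delicate one: here Theorem~\ref{thm-Gauss-scalar} alone only yields $N(\E{X\given Y})\,N(Y) \ge N^2(X) \le N(X)\,A$, so one must sharpen the lower bound on $N(\E{X\given Y})$ by combining Theorem~\ref{thm-Gauss-scalar} with the strict EPI available for non-Gaussian $X$ (this case cannot arise when $X$ is Gaussian, since then EPI is tight and the inequality $N(Y) \ge N(X)(1+\sigma_W^2/D)$ fails for $D < N(X)$).

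The equality characterization is inherited from Theorem~\ref{thm-Gauss-scalar}: tracing through the case analysis, every step is an equality exactly when $N(\E{X\given Y})\,N(Y) = N^2(X)$, which holds if and only if $X$ is Gaussian. The main obstacle I anticipate is the mixed case $D < N(X) \le A$, where careful book-keeping of the $\log^{+}$ thresholds together with a quantitative use of the strict EPI and the identity in Lemma~\ref{lemma-main} is required; the other two cases are essentially one-line consequences of EPI and Theorem~\ref{thm-Gauss-scalar}.
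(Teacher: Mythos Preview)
Your approach is essentially the paper's: both arguments rewrite the two sides, case-split, and invoke Theorem~\ref{thm-Gauss-scalar} in the form $N(\E{X\given Y})\,N(Y)\ge N(X)^2$. Your product reformulation
\[
\max\bigl(D,\,N(\E{X\given Y})\bigr)\,N(Y)\ \ge\ \max\bigl(D,\,N(X)\bigr)\,\max\bigl(A,\,N(X)\bigr)
\]
is a clean equivalent of what the paper does. The paper case-splits on the position of $D$ relative to $N(X)$ and $N(\E{X\given Y})$, whereas you split on $N(X)$ versus $D$ and versus $A$; the two ``easy'' cases match up, and in both treatments Theorem~\ref{thm-Gauss-scalar} closes them in one line.

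The one substantive difference is your ``mixed case'' $D<N(X)\le A$, which you flag as unresolved. The paper does not isolate this sub-case: in its Cases~2 and~3 it writes the comparison as $N(\E{X\given Y})N(Y)$ (respectively $DN(Y)$) versus $N(X)^2$, which is exactly correct when the second $\log^{+}$ on the right of~(\ref{mainapp}) is active, i.e.\ when $N(X)>A$. When $N(X)\le A$ that $\log^{+}$ collapses to zero and the target becomes $N(X)A$ rather than $N(X)^2$; the paper's argument as written does not separately treat this, so your proposal is at least as complete here and is more explicit about the difficulty.

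That said, your proposed remedy is not convincing. You suggest sharpening the lower bound on $N(\E{X\given Y})$ via ``strict EPI for non-Gaussian $X$,'' but strict EPI gives a \emph{lower} bound on $N(Y)$, which pushes $N(X)A=N(X)N(Y)-N(X)^2\sigma_W^2/D$ in the wrong direction. In the mixed case you need $\max(D,N(\E{X\given Y}))\,N(Y)\ge N(X)A$, and Theorem~\ref{thm-Gauss-scalar} alone yields only $N(\E{X\given Y})N(Y)\ge N(X)^2\le N(X)A$; neither EPI nor Lemma~\ref{lemma-main} obviously supplies the missing factor. So as written, this step remains a genuine gap in your plan rather than merely ``careful book-keeping.''
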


\begin{proof}
First, note that since $W$ is Gaussian, we have that $N(W) = \sigma_W^2.$ Moreover, by the entropy power inequality, 
\begin{equation}
N(Y)\geq N(X)+N(W),
\end{equation}
and clearly, $N(Y)> N(Y)- \frac{N(X)}{D}N(W).$
Combining these observations, we conclude that the second term in the RHS of (\ref{lb1}) is always greater than or equal to the second term in (\ref{lb2})

To compare the first terms in the inequalities, observe that if $D\geq N(X)\geq N(\E{X\given Y})$, both $\log^{+}\frac{N(\E{X\given Y})}{D}$ and $\log^{+}\frac{N(X)}{D}$ are equal to 0, so the lower bound in (\ref{lb1}) is greater. If $N(X)\geq N(\E{X\given Y}) > D$, since $N(\E{X\given Y})N(Y)\geq (N(X))^2$ from Proposition (\ref{mainresult}), it is again guaranteed that the lower bound in (\ref{lb1}) is greater. Finally, the last possible and a little more complicated case is $N(X)> D \geq N(\E{X\given Y})$. In this case, the comparison of the lower bounds seen in (\ref{lb1}) and (\ref{lb2}) is equivalent to the comparison of $DN(Y)$ and $(N(X))^2$. Taking the logarithm of both sides, we thus have to compare
\begin{equation}
2h(Y) + \log 2\pi eD \mathop{\lessgtr}_{(\ref{lb1})}^{(\ref{lb2})} 4h(X).
\label{comp_eq}
\end{equation}

Note that we have assumed $D \geq N(\E{X\given Y})$ for this case at the beginning. This is equivalent to the following inequality. 
\begin{equation}
\log 2\pi eD \geq 2h(\E{X\given Y}).
\label{cond_eq}
\end{equation}
Combining Theorem~\ref{thm-Gauss-scalar} (that is, Equation~\eqref{mainresult}) with Equations~\eqref{comp_eq} and~\eqref{cond_eq}, one obtains
\begin{equation}
2h(Y) + \log 2\pi eD \geq 2h(Y) + 2h(\E{X\given Y}) \geq 4h(X).
\label{cond_eq2}
\end{equation}
Therefore, we conclude that the lower bound in (\ref{lb1}) is always greater than or equal to the lower bound in (\ref{lb2}).
\end{proof}

\subsection{Novel Rate-Loss Bounds for The CEO Problem}
\label{novelratelossbounds}

In this section, we apply Theorem~\ref{thm-Gauss-scalar} to the so-called CEO problem.
In this problem, a single underlying source $X$ is observed by $M$ encoders. Each encoder receives a noisy version of the source $X,$ denoted as $Y_i=X+W_i,$ for $i=1, 2, \ldots, M.$ In our consideration, the noises $W_i$ are assumed to be zero-mean Gaussian, independent of each other, and of variance $\sigma_W^2.$
Each encoder compresses its observation using $R_i$ bits. All $M$ compressed representations are given to a single central decoder whose goal is to produce a reconstruction of the underlying source $X$ to with mean-squared error $D.$ The smallest possible sum-rate required to attain a distortion $D$ is denoted by $R_X^{\mathrm{CEO}}(D).$
We precisely follow the exact problem statement and notation used in~\cite{eswaran_gastpar_2019}.

The {\it rate loss} in the CEO problem denotes the difference between $R_X^{\mathrm{CEO}}(D)$ and the much smaller rate that would be required if all encoders were to cooperate fully, {\it i.e.,} the rate required by a single encoder having access to all $M$ noisy source observations.
Evidently, if the encoders are allowed to cooperate fully, then the problem is exactly the remote rate-distortion problem discussed in Section~\ref{sec-remote} above, but with reduced variance $\sigma_W^2/M$.
We denote the corresponding rate by $R_{X}^{\mathrm{R}}(D),$ and the rate loss by
\begin{equation}
    L(D) \triangleq R_X^{\mathrm{CEO}}(D)-R_{X}^{\mathrm{R}}(D).
\end{equation} 
In this section, we establish a novel bound on this rate loss.

To develop our results we will use the auxiliary notation
\begin{align}
    Y(M) &= \frac{1}{M} \sum_{i=1}^M Y_i.
\end{align}

A lower bound for $L(D)$ is presented in~\cite{eswaran_gastpar_2019}.
For
\begin{align}
\sigma_X^2\dfrac{\sfrac{\sigma_W^2}{M}}{\sigma_X^2+\sfrac{\sigma_W^2}{M}}<D<N(X)\dfrac{\sfrac{\sigma_W^2}{M}}{N(Y(M))-N(X)},
\end{align}
the lower bound on the rate loss establishes that
\begin{multline}
    L(D)\geq \frac{M}{2}\loge {\frac{1}{\frac{N(Y(M))}{N(X)}-\frac{\sigma_W^2}{M}\frac{1}{D}}}\\
    -\frac{1}{2}\loge{\frac{\sigma_X^2}{N(X)}\,\frac{1}{1-\frac{\sigma_W^2}{M}\left(\frac{1}{D}-\frac{1}{\sigma_X^2}\right)}}. \label{lowerbnd1}
\end{multline}
For $M\uparrow \infty$ and under some regularity conditions further discussed in Appendix \ref{kappacalc},
the bound becomes, for $0<D<\frac{1}{J(X)}$
\begin{equation}
    L(D) \geq  \frac{\sigma_W^2}{2} \left(\frac{1}{D}-{J(X)}\right)-\frac{1}{2}\log \frac{\sigma_X^2}{N(X)}.\label{Eqn-rateloss-lowerbound}
\end{equation}
The novel bound presented here is an {\it upper} bound on the rate loss, developed in the following subsections.

\subsubsection{Cooperation Bound}
The first ingredient of the novel upper bound on the rate loss is an improved lower bound on $R_{X}^{\mathrm{R}}(D).$
 To this end, we will utilize both $N(\E{X\given Y(M)})$ and $\mmse{X\given Y(M) }$, i.e., for all $D> \mmse{X\given Y(M)}$
\begin{align}
    R_{X}^{\mathrm{R}}(D) &\geq \frac{1}{2}\log^+ \frac{N(\E{X\given Y(M)})}{D-\mmse{X\given Y(M)}}. \label{tight}
\end{align}

One can weaken (\ref{tight}) to omit the calculation of $\mmse{X\given Y(M)}$. In that case, one obtains for all $D>N(X)\,\sigma_W^2/\left(M\,N(Y(M))\right)$,
\begin{IEEEeqnarray}{rCl}
    R_{X}^{\mathrm{R}}(D) &\geq& \frac{1}{2}\log^{+}\frac{N(\E{X\given Y(M)})}{D}\nonumber\\
    && {}+\frac{1}{2}\log^{+}\frac{M\, N\left(Y(M)\right)}{M\, N\left(Y(M)\right)-\frac{N(X)}{D}N(W)} \\
    &=& \begin{cases}
    \frac{1}{2} \log \frac{N(\E{X\given Y(M)})}{D - \frac{\sigma_W^2}{M} \frac{N(X)}{N(Y(M))}} &\small{\text{ if }} \scriptstyle{D<N(\E{X\given Y(M)})}\\[1.5em]
    \frac{1}{2} \log\frac{1}{1-\frac{\sigma_W^2}{M}\, \frac{N(X)}{D\,N(Y(M))}} &\small{\text{ otherwise }}  \label{rcoop}
    \end{cases}
\end{IEEEeqnarray}
where $Y(M) = X+\sum_{i=1}^M\,W_i/M$. As we have shown in Section~\ref{sec-remote}, this bound is tighter than the other lower bound in \cite{eswaran_gastpar_2019} for any finite\footnote{Observe that as $M\uparrow \infty$, the second term vanishes, and the bound becomes $R_{X}^{\mathrm{R}}(D)\geq \frac{1}{2}\log^+ \frac{N(X)}{D}$ as expected. This is also true for the other lower bound.} $M$.  
\subsubsection{Novel Rate Loss Upper Bound}
In order to upper bound the rate loss $L(D),$ we utilize the upper bound on the CEO sum-rate distortion by Eswaran and Gastpar \cite{eswaran_gastpar_2019}, which states that for $D>\sigma_X^2\sigma_W^2/\left(M\,\sigma_{Y(M)}^2\right)$, 
\begin{IEEEeqnarray}{rCl}
    R_X^{\mathrm{CEO}}(D) &\leq& \frac{1}{2}\log^+\frac{\sigma_X^2}{D}\nonumber\\&&\quad+\frac{M}{2}\log^+ \frac{M\,\sigma_X^2}{M \sigma_{Y(M)}^2-\frac{\sigma_X^2}{D}\sigma_W^2}\\
    &=& \frac{1}{2}\log^+\frac{\sigma_X^2}{D}\nonumber\\&&\quad+\frac{M}{2}\log^+\frac{1}{1+\frac{\sigma_W^2}{M}\left(\frac{1}{\sigma_X^2}-\frac{1}{D}\right)}\\
    &=& \begin{cases}\frac{1}{2}\log\frac{\sigma_X^2}{D}\\
     \quad+\frac{M}{2}\log\frac{1}{1+\frac{\sigma_W^2}{M}\left(\frac{1}{\sigma_X^2}-\frac{1}{D}\right)} & \small\text{if } \scriptstyle{D<\sigma_X^2}\\
    0 & \small\text{otherwise} \label{rceo}
    \end{cases}.
\end{IEEEeqnarray}
Since $N(\E{X\given Y(M)})\leq \Var{\E{X\given Y(M)}}\leq \sigma_X^2$ and $\sigma_X^2\sigma_W^2/\left(M\,\sigma_{Y(M)}^2\right)\geq N(X)\,N(W)/\left(M\,N(Y(M))\right)$, we have two region of interests as we subtract (\ref{rcoop}) from (\ref{rceo}) to obtain the new upper bound.
\begin{theorem}.\label{firstthm}
For $D>\sigma_X^2\sigma_W^2/\left(M\,\sigma_{Y(M)}^2\right)$, we have the inequality seen in (\ref{eq-thm9}).
\begin{IEEEeqnarray}{rCl}
    L(D) &\leq&
    \begin{cases}
    \frac{1}{2}\log\left( \frac{\sigma_X^2}{N(\E{X\given Y(M)})}\left(1-\frac{\sigma_W^2}{M\,D}\frac{N(X)}{N(Y(M))}\right)\right)\\\quad+\frac{M}{2}\log\frac{1}{1+\frac{\sigma_W^2}{M}\left(\frac{1}{\sigma_X^2}-\frac{1}{D}\right)}
    & \hspace*{-.2cm}\small\textnormal{if }\; C_1\\
    \frac{1}{2}\log\left( \frac{\sigma_X^2}{D}\left(1-\frac{\sigma_W^2}{M\,D}\frac{N(X)}{N(Y(M))}\right)\right)\\\quad+\frac{M}{2}\log\frac{1}{1+\frac{\sigma_W^2}{M}\left(\frac{1}{\sigma_X^2}-\frac{1}{D}\right)} &
    \hspace*{-.2cm}\small\textnormal{if }\; C_2
    \end{cases}
    \label{eq-thm9}
\end{IEEEeqnarray}
where $C_1$ is the condition $D<N(\E{X\given Y(M)})$ and $C_2$ is the condition $N(\E{X\given Y(M)})\leq D<\sigma_X^2$.
\end{theorem}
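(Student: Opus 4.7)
The plan is to obtain the claimed upper bound on $L(D)$ by a direct subtraction: since $L(D) = R_X^{\mathrm{CEO}}(D) - R_X^{\mathrm{R}}(D)$, any upper bound on $R_X^{\mathrm{CEO}}(D)$ minus any lower bound on $R_X^{\mathrm{R}}(D)$ yields an upper bound on $L(D)$. The two ingredients are already in place in the excerpt: the Eswaran--Gastpar CEO upper bound displayed in Equation~(\ref{rceo}), and the novel cooperation lower bound in Equation~(\ref{rcoop}) that was derived in the previous subsection using Theorem~\ref{thm-Gauss-scalar}. The second ingredient is the piece that leverages our main result, and it is what makes the final bound sharper than a naive subtraction.

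First, I would restrict attention to the regime $\sigma_X^2\sigma_W^2/(M\sigma_{Y(M)}^2) < D < \sigma_X^2$, where the hypothesis of the theorem holds and where (\ref{rceo}) collapses to its nontrivial branch
\begin{equation*}
    R_X^{\mathrm{CEO}}(D) \leq \tfrac{1}{2}\log\tfrac{\sigma_X^2}{D} + \tfrac{M}{2}\log\tfrac{1}{1 + \frac{\sigma_W^2}{M}\left(\frac{1}{\sigma_X^2}-\frac{1}{D}\right)}.
\end{equation*}
Then I would split into the two sub-cases by comparing $D$ to $N(\E{X\given Y(M)})$. In case $C_1$, I would use the top branch of (\ref{rcoop}) as the lower bound on $R_X^{\mathrm{R}}(D)$; in case $C_2$, I would use the bottom branch. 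In both cases, subtracting the cooperation lower bound from the CEO upper bound produces a single $\tfrac{1}{2}\log(\cdot)$ summand (in which the $D$-dependence combines neatly) plus the unchanged $\tfrac{M}{2}\log(\cdot)$ summand from (\ref{rceo}). A routine algebraic combination inside the first logarithm yields exactly the two cases displayed in the statement.

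The main obstacle is bookkeeping on domains of validity rather than any substantive estimate. The CEO upper bound (\ref{rceo}) is stated on $D > \sigma_X^2\sigma_W^2/(M\sigma_{Y(M)}^2)$, whereas the cooperation lower bound (\ref{rcoop}) is stated on $D > N(X)\sigma_W^2/(M\,N(Y(M)))$. These two lower thresholds are not ordered a priori, since $\sigma_X^2 \geq N(X)$ and $\sigma_{Y(M)}^2 \geq N(Y(M))$ push the ratio in opposite directions. I would therefore verify, as a small side lemma, that under the hypothesis of Theorem~\ref{firstthm} the argument of each log in the relevant branch of (\ref{rcoop}) is strictly positive; the cleanest way is to observe that in case $C_1$ the condition $D < N(\E{X\given Y(M)})$ together with $N(\E{X\given Y(M)}) N(Y(M)) \geq N(X)^2/\ldots$ (i.e.\ the entropy-power form of Theorem~\ref{thm-Gauss-scalar}, applied to the source $X$ observed through the average noise $\frac{1}{M}\sum_i W_i$) forces the denominator to be positive, while in case $C_2$ a direct comparison using $\mmse{X\given Y(M)} \leq \sigma_W^2 N(X)/(M\,N(Y(M)))$ (the entropy-power analogue of the linear-MMSE bound already used in Equation~(\ref{sigmaversion3})) gives what is needed.

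Assuming these domain checks go through, the rest is bookkeeping: combining the two logarithms into the single factor $\frac{\sigma_X^2}{N(\E{X\given Y(M)})}\bigl(1 - \frac{\sigma_W^2}{M\,D}\frac{N(X)}{N(Y(M))}\bigr)$ in case $C_1$ and $\frac{\sigma_X^2}{D}\bigl(1 - \frac{\sigma_W^2}{M\,D}\frac{N(X)}{N(Y(M))}\bigr)$ in case $C_2$, while leaving the $\tfrac{M}{2}\log(\cdot)$ term from the CEO bound untouched. No further inequalities are invoked, so the two cases assemble directly into Equation~(\ref{eq-thm9}).
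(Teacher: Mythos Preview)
Your overall strategy is exactly the paper's: subtract the cooperation lower bound~(\ref{rcoop}) from the CEO upper bound~(\ref{rceo}), split into the two cases $C_1$ and $C_2$, and combine the logarithms. No further inequalities are used, so this part is fine.

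Where you go astray is the domain bookkeeping. You claim the two thresholds are ``not ordered a priori'' and then propose a case-by-case verification whose $C_2$ branch invokes an inequality $\mmse{X\given Y(M)} \leq \sigma_W^2 N(X)/(M\,N(Y(M)))$ that is in fact backwards (the correct direction is $\geq$, cf.\ \cite[Eqn.~16]{eswaran_gastpar_2019}). The paper avoids this entirely by observing that the thresholds \emph{are} ordered:
\[
\frac{\sigma_X^2\sigma_W^2}{M\sigma_{Y(M)}^2}\;\geq\;\frac{N(X)\,N(W)}{M\,N(Y(M))}.
\]
Indeed, since $N(W)=\sigma_W^2$, this is $\sigma_X^2 N(Y(M))\geq N(X)\sigma_{Y(M)}^2$, which follows from the EPI $N(Y(M))\geq N(X)+\sigma_W^2/M$ together with $\sigma_X^2\geq N(X)$. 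Thus the hypothesis $D>\sigma_X^2\sigma_W^2/(M\sigma_{Y(M)}^2)$ already guarantees that~(\ref{rcoop}) applies, and your side lemma is unnecessary.
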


As the number of agents approaches infinity, (\ref{eq-thm9}) simplifies to the following. 
\begin{corollary}.\label{cor1}
As $M\uparrow \infty$, the upper bound on the loss becomes
\begin{align}\small
    L(D) \leq
    \begin{cases}
    \frac{1}{2}\log\frac{\sigma_X^2}{N(X)}+\frac{1}{2} \sigma _W^2 \left(\frac{1}{D}-\frac{1}{\sigma _X^2}\right) & \text{if }\, D<N(X)\\
    \frac{1}{2}\log\frac{\sigma_X^2}{D}+\frac{1}{2} \sigma _W^2 \left(\frac{1}{D}-\frac{1}{\sigma _X^2}\right) & \text{if }\, N(X)\leq D<\sigma_X^2
    \end{cases}. \label{newBound}
\end{align}
\end{corollary}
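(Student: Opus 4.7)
The plan is to take the pointwise limit $M\to\infty$ of each branch of the bound established in Theorem~\ref{firstthm}, treating separately the $\frac{M}{2}\log(\cdot)$ term (which produces the nontrivial constant $\frac{1}{2}\sigma_W^2(\frac{1}{D}-\frac{1}{\sigma_X^2})$ in the limit) and the isolated $\frac{1}{2}\log(\cdot)$ term (which converges to either $\frac{1}{2}\log\frac{\sigma_X^2}{N(X)}$ or $\frac{1}{2}\log\frac{\sigma_X^2}{D}$ depending on the branch). The case split of~\eqref{eq-thm9} will transform in the limit into the case split of~\eqref{newBound} once we identify $N(\E{X\given Y(M)})\to N(X)$.

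For the ``heavy'' term, I would apply the elementary identity $\lim_{M\to\infty} M\log(1+\tfrac{a}{M})=a$ with $a=\sigma_W^2\bigl(\tfrac{1}{\sigma_X^2}-\tfrac{1}{D}\bigr)$, giving
\begin{equation*}
\lim_{M\to\infty}\frac{M}{2}\log\frac{1}{1+\frac{\sigma_W^2}{M}\bigl(\frac{1}{\sigma_X^2}-\frac{1}{D}\bigr)}
=-\frac{\sigma_W^2}{2}\Bigl(\frac{1}{\sigma_X^2}-\frac{1}{D}\Bigr)
=\frac{\sigma_W^2}{2}\Bigl(\frac{1}{D}-\frac{1}{\sigma_X^2}\Bigr),
\end{equation*}
which is exactly the ``penalty'' term in~\eqref{newBound}. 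Note that this limit is valid for all $D<\sigma_X^2$, since then $a>0$.

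For the isolated logarithmic term, write $Y(M)=X+\overline{W}_M$ with $\overline{W}_M\sim\mathcal{N}(0,\sigma_W^2/M)$, whose variance vanishes as $M\to\infty$. The multiplicative correction $1-\frac{\sigma_W^2}{MD}\,\frac{N(X)}{N(Y(M))}$ tends trivially to $1$, since $N(Y(M))\ge N(X)+\sigma_W^2/M$ by the entropy power inequality is bounded away from zero. Continuity of the output entropy along the Gaussian noise channel (a standard regularity property, in the same spirit as those referenced in Appendix~\ref{kappacalc}) yields $N(Y(M))\to N(X)$, and the convergence $\E{X\given Y(M)}\to X$ in $L^2$ combined with the sandwich $N^2(X)/N(Y(M))\le N(\E{X\given Y(M)})\le N(X)$ (the lower bound being Theorem~\ref{thm-Gauss-scalar} applied to $Y(M)$, the upper bound being entropy lower-semicontinuity along the Gaussian semigroup) forces $N(\E{X\given Y(M)})\to N(X)$.

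Having established these limits, branch $C_1$ of~\eqref{eq-thm9} converges to $\frac{1}{2}\log\frac{\sigma_X^2}{N(X)}+\frac{\sigma_W^2}{2}\bigl(\frac{1}{D}-\frac{1}{\sigma_X^2}\bigr)$, while branch $C_2$ converges to $\frac{1}{2}\log\frac{\sigma_X^2}{D}+\frac{\sigma_W^2}{2}\bigl(\frac{1}{D}-\frac{1}{\sigma_X^2}\bigr)$. The threshold $D=N(\E{X\given Y(M)})$ separating the branches migrates to $D=N(X)$, matching exactly the case split of~\eqref{newBound}. The main obstacle is the convergence $N(\E{X\given Y(M)})\to N(X)$, which is genuinely delicate because differential entropy is not continuous in general; the argument must lean on the smoothing provided by the Gaussian channel and on the sandwich coming from Theorem~\ref{thm-Gauss-scalar}, precisely the kind of regularity invoked in Appendix~\ref{kappacalc} for the companion lower-bound limit~\eqref{Eqn-rateloss-lowerbound}.
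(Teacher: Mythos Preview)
Your approach is correct and coincides with the paper's, which gives no explicit proof and simply presents Corollary~\ref{cor1} as the $M\to\infty$ simplification of Theorem~\ref{firstthm}. One minor imprecision: the upper half of your sandwich, $N(\E{X\given Y(M)})\le N(X)$, is not what ``entropy lower-semicontinuity'' gives you directly; instead use the paper's upper bound~\eqref{ub3} applied to $Y(M)$, namely $N(\E{X\given Y(M)})\le N(Y(M))\,\sigma_X^4/(\sigma_X^2+\sigma_W^2/M)^2$, which together with $N(Y(M))\to N(X)$ closes the sandwich cleanly.
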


It is also possible to use (\ref{tight}) to obtain a tighter upper bound on the rate loss \revision{for arbitrary $M$}. 
\begin{theorem}.\label{thmtight}
For $\mmse{X\given Y(M)}<D<\mmse{X\given Y(M)}+N(\E{X\given Y(M)})$, the rate loss is upper bounded as 
\begin{IEEEeqnarray}{rCl}
    L(D) &\leq& \frac{1}{2} \log 2\pi e\sigma_X^2 - h(\E{X\given Y(M)})\nonumber\\
    && {}+\frac{1}{2}\log \left(1-\frac{\mmse{X\given Y(M)}}{D}\right)\nonumber\\ && {}+\frac{M}{2}\log\frac{1}{1-\frac{\sigma_W^2}{M}\left(\frac{1}{D}-\frac{1}{\sigma_X^2}\right)}\\
    &=&\frac{1}{2}\log \frac{\sigma_X^2}{N(\E{X\given Y(M)})}\nonumber\\
    && {}+\frac{1}{2}\log \left(1-\frac{\mmse{X\given Y(M)}}{D}\right) \nonumber\\
    && {}+\frac{M}{2}\log\frac{1}{1-\frac{\sigma_W^2}{M}\left(\frac{1}{D}-\frac{1}{\sigma_X^2}\right)}, \label{tightub}
\end{IEEEeqnarray}
and for $\mmse{X\given Y(M)}+N(\E{X\given Y(M)})<D<\sigma_X^2$, we have 
\begin{align}
    L(D)\leq \frac{1}{2}\log\frac{\sigma_X^2}{D}+\frac{M}{2}\log\frac{1}{1-\frac{\sigma_W^2}{M}\left(\frac{1}{D}-\frac{1}{\sigma_X^2}\right)}.
\end{align}
\end{theorem}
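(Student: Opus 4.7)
The plan is to mirror the derivation of Theorem~\ref{firstthm}: obtain the rate loss upper bound as the difference between the CEO sum-rate upper bound stated in Equation~\eqref{rceo} and the \emph{tighter} cooperation lower bound in Equation~\eqref{tight}. Unlike Theorem~\ref{firstthm}, where we subtracted the weaker cooperation bound \eqref{rcoop}, here we invoke directly
\begin{equation*}
    R_X^{\mathrm{R}}(D)\;\ge\;\tfrac{1}{2}\log^{+}\frac{N(\E{X\given Y(M)})}{D-\mmse{X\given Y(M)}},
\end{equation*}
which is why we need to assume $D>\mmse{X\given Y(M)}$ from the outset. As in Theorem~\ref{firstthm}, the CEO upper bound is applicable in the regime $D<\sigma_X^2$, ensuring the $\log^{+}$ in \eqref{rceo} is active.

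First I would split the range $\mmse{X\given Y(M)}<D<\sigma_X^2$ into the two subintervals distinguished by the $\log^{+}$ in the cooperation bound. In the lower range $\mmse{X\given Y(M)}<D<\mmse{X\given Y(M)}+N(\E{X\given Y(M)})$, the argument of the $\log^{+}$ exceeds one, so
\begin{equation*}
    R_X^{\mathrm{R}}(D)\;\ge\;\tfrac{1}{2}\log\frac{N(\E{X\given Y(M)})}{D-\mmse{X\given Y(M)}}.
\end{equation*}
Subtracting this from \eqref{rceo} and regrouping the logarithms via
\begin{equation*}
    \tfrac{1}{2}\log\frac{\sigma_X^2}{D}-\tfrac{1}{2}\log\frac{N(\E{X\given Y(M)})}{D-\mmse{X\given Y(M)}}
    =\tfrac{1}{2}\log\frac{\sigma_X^2}{N(\E{X\given Y(M)})}+\tfrac{1}{2}\log\!\Bigl(1-\tfrac{\mmse{X\given Y(M)}}{D}\Bigr)
\end{equation*}
yields the first stated bound; the rewriting $\tfrac{1}{2}\log\frac{\sigma_X^2}{N(\E{X\given Y(M)})}=\tfrac{1}{2}\log 2\pi e\sigma_X^2-h(\E{X\given Y(M)})$ is immediate from the definition of entropy power and gives the equivalent form. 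In the upper range $\mmse{X\given Y(M)}+N(\E{X\given Y(M)})<D<\sigma_X^2$, the argument of $\log^{+}$ is at most one, so the cooperation lower bound collapses to $0$; subtracting $0$ from \eqref{rceo} and writing the second term via $\frac{1}{1+\frac{\sigma_W^2}{M}(1/\sigma_X^2-1/D)}=\frac{1}{1-\frac{\sigma_W^2}{M}(1/D-1/\sigma_X^2)}$ gives the second displayed bound.

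The only obstacle worth flagging is ensuring that the hypotheses of the CEO upper bound and of \eqref{tight} are jointly satisfied on the interval stated. Since the lower cutoff of \eqref{tight}, namely $\mmse{X\given Y(M)}$, exceeds $\sigma_X^2\sigma_W^2/(M\sigma_{Y(M)}^2)$ (the cutoff for \eqref{rceo}) by the suboptimality of the linear estimator combined with the law of total variance \eqref{lawoftotalvariance}, the hypothesis $D>\mmse{X\given Y(M)}$ automatically validates the CEO upper bound side, so no additional assumptions are needed. The upper cutoff $D<\sigma_X^2$ is imposed in both sub-cases of the theorem statement (via $D<\mmse{X\given Y(M)}+N(\E{X\given Y(M)})\le\sigma_X^2$ in the first case, and explicitly in the second), so the argument remains uniformly valid throughout the claimed regime.
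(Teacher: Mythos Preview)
Your overall strategy---subtracting the tighter cooperation lower bound \eqref{tight} from the CEO achievability upper bound \eqref{rceo} and splitting on whether the $\log^{+}$ in \eqref{tight} is active---is exactly what the paper intends; the paper states Theorem~\ref{thmtight} immediately after announcing that one can ``use \eqref{tight} to obtain a tighter upper bound on the rate loss,'' and offers no further argument. Your algebraic regrouping and the entropy-power rewriting are both correct.

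There is, however, a genuine slip in your discussion of the domain of validity. You assert that $\mmse{X\given Y(M)}$ \emph{exceeds} the linear-MMSE threshold $\sigma_X^2\sigma_W^2/(M\sigma_{Y(M)}^2)$ ``by the suboptimality of the linear estimator.'' The inequality goes the other way: suboptimality of the linear estimator means precisely that
\[
\mmse{X\given Y(M)}\;\le\;\frac{\sigma_X^2\,\sigma_W^2/M}{\sigma_X^2+\sigma_W^2/M}\;=\;\frac{\sigma_X^2\sigma_W^2}{M\sigma_{Y(M)}^2},
\]
with equality only in the Gaussian case (cf.~\eqref{sigmaversion3}). Consequently the hypothesis $D>\mmse{X\given Y(M)}$ does \emph{not} by itself guarantee the applicability condition $D>\sigma_X^2\sigma_W^2/(M\sigma_{Y(M)}^2)$ of \eqref{rceo}; for $D$ in the gap between the true MMSE and the linear MMSE the expression $1-\tfrac{\sigma_W^2}{M}(1/D-1/\sigma_X^2)$ is nonpositive and the bound \eqref{rceo} is vacuous. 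Your check of the upper endpoint, $\mmse{X\given Y(M)}+N(\E{X\given Y(M)})\le\sigma_X^2$ via \eqref{nexgiveny} and \eqref{lawoftotalvariance}, is fine.
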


\begin{remark}.
    Note that (\ref{tightub}) is minimized for Gaussian inputs since both $N(\E{X\given Y(M)})$ and $\mmse{X\given Y(M)}$ is maximized in that case. Furthermore, the theorem simplifies to 
    \begin{IEEEeqnarray}{rCl}
        L(D)&\leq& \frac{1}{2} \log \left(1-\frac{\sigma_W^2}{M}\left(\frac{1}{D}-\frac{1}{\sigma_X^2}\right)\right)\nonumber\\
        && {}+\frac{M}{2}\log\frac{1}{1-\frac{\sigma_W^2}{M}\left(\frac{1}{D}-\frac{1}{\sigma_X^2}\right)}\\
        &=& \frac{M-1}{2}\log\frac{1}{1-\frac{\sigma_W^2}{M}\left(\frac{1}{D}-\frac{1}{\sigma_X^2}\right)}
    \end{IEEEeqnarray}
    for any $\frac{\sigma_W^2 \sigma_X^2}{M \sigma_X^2+\sigma_W^2}<D<\sigma_X^2$. The exact loss for the Gaussian input is well-known \cite{eswaran_gastpar_2019}.
    \begin{align}
    L_\mathcal{N}(D)&=\frac{M-1}{2} \log \left(\frac{D}{\frac{\sigma_{X}^{2}+\sigma_{W}^{2} / M}{\sigma_{X}^{2}} D-\frac{\sigma_{W}^{2}}{M}}\right)\\  
    &= \frac{M-1}{2}\log\frac{1}{1-\frac{\sigma_W^2}{M}\left(\frac{1}{D}-\frac{1}{\sigma_X^2}\right)}.
    \end{align}
Hence, the new upper bound is tight for Gaussian inputs, irrespective of $M$.  
\end{remark}

\begin{remark}.
\begin{itemize}
    \item As $M\uparrow \infty$, Theorem \ref{firstthm} and \ref{thmtight} yield the same asymptotics, i.e., Corollary \ref{cor1}. Yet, Theorem \ref{thmtight} is guaranteed to be tighter in the finite r\'egime. 
    \item It is important to note that the Gaussian input maximizes \revision{the lower bound} (\ref{lowerbnd1}), whereas it minimizes the upper bound (\ref{tightub}). Hence, the bounds are tight for the inputs that are close to Gaussian distribution.  
\end{itemize}
\end{remark}
\subsubsection{Previous Rate Loss Upper Bound}
We also note that the following upper bound on the rate loss $L(D)$ appears in~\cite{dragotti2009distributed}. 
\begin{IEEEeqnarray}{rCl}
L(D) &\leq  &\frac{M-1}{2} \log \frac{1}{1-\frac{\sigma_{W}^{2}}{M}\left(\frac{1}{D}-\frac{1}{\sigma_X^2}\right)} 
\\
&& {}+\frac{1}{2} \log \scriptstyle\left(1+\frac{\left(\sigma_{X}^{2}-D\right)\left(M \cdot\left(D+2 \sqrt{D \cdot \sigma_{W}^{2}}\right)+\sigma_{W}^{2}\right)}{D \cdot\left(M \sigma_{X}^{2}+\sigma_{W}^{2}\right)-\sigma_{W}^{2} \sigma_{X}^{2}}\right)\nonumber\\
&= &\frac{M-1}{2} \log \frac{1}{1-\frac{\sigma_{W}^{2}}{M}\left(\frac{1}{D}-\frac{1}{\sigma_X^2}\right)}\\
&& {}+\frac{1}{2} \loge{\scriptstyle 1+\left(\frac{1}{D}-\frac{1}{\sigma_X^2}\right)\left(\frac{D+2\sqrt{D}\sigma_W+\frac{\sigma_W^2}{M}}{1-\frac{\sigma_W^2}{M}\left(\frac{1}{D}-\frac{1}{\sigma_X^2}\right)}\right)}, \label{prevupperbound}
\end{IEEEeqnarray}

and as $M\uparrow \infty$, we have 
\begin{multline}
L(D) \leq \frac{\sigma_{W}^{2}}{2}\left(\frac{1}{D}-\frac{1}{\sigma_{X}^{2}}\right)\\
+\frac{1}{2} \log \left(1+\left(\frac{1}{D}-\frac{1}{\sigma_{X}^{2}}\right)\left(D+2 \sqrt{D \sigma_{W}^{2}}\right)\right). \label{prevBound}
\end{multline}

\subsubsection{Comparison of the Rate Loss Bounds}

Comparing (\ref{tightub}) and (\ref{prevupperbound}) for any input distributions is tedious. For Gaussian inputs, we proved that the former is tighter than the latter. For other distributions such as Laplace, Exponential, Uniform we present numerical results in Fig.~\ref{fig1}, and Fig.~\ref{fig2}. Note that the new upper bound is tighter than the general upper bound and its dependence on the number of users $M$ is more prominent at low distortions. \revision{For a detailed analysis on the rate loss in the AWGN CEO problem, we refer to \cite{cisspaper}.}

\begin{figure}[!htb]
    \centering
    \includegraphics[width = \linewidth]{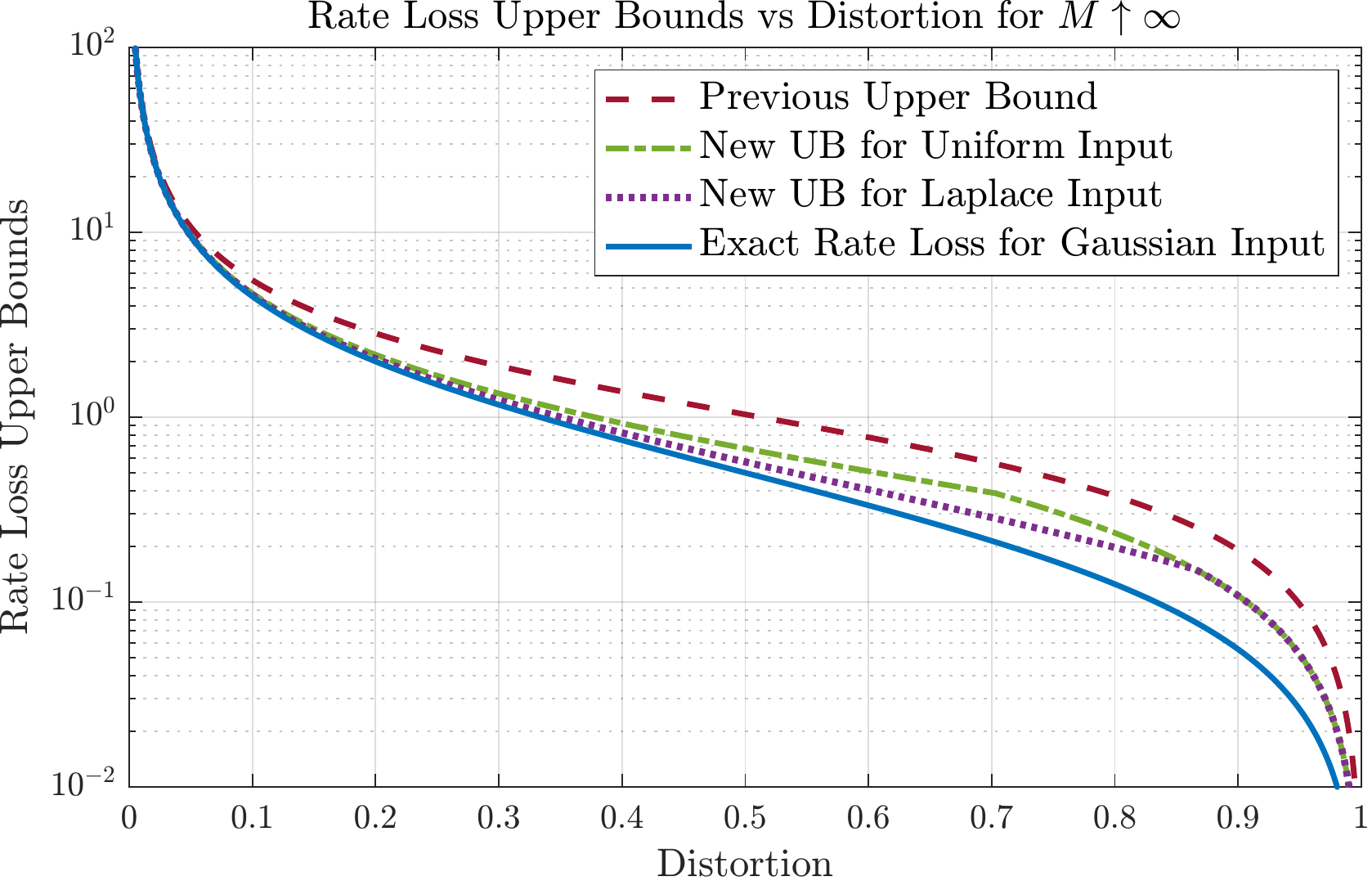}
    \caption{Comparison of Upper Bounds for $\sigma_X^2=\sigma_W^2=1$}
    \label{fig1}
\end{figure}
\vspace*{-.5cm}
\begin{figure}[!htb]
    \centering
    \includegraphics[width = \linewidth]{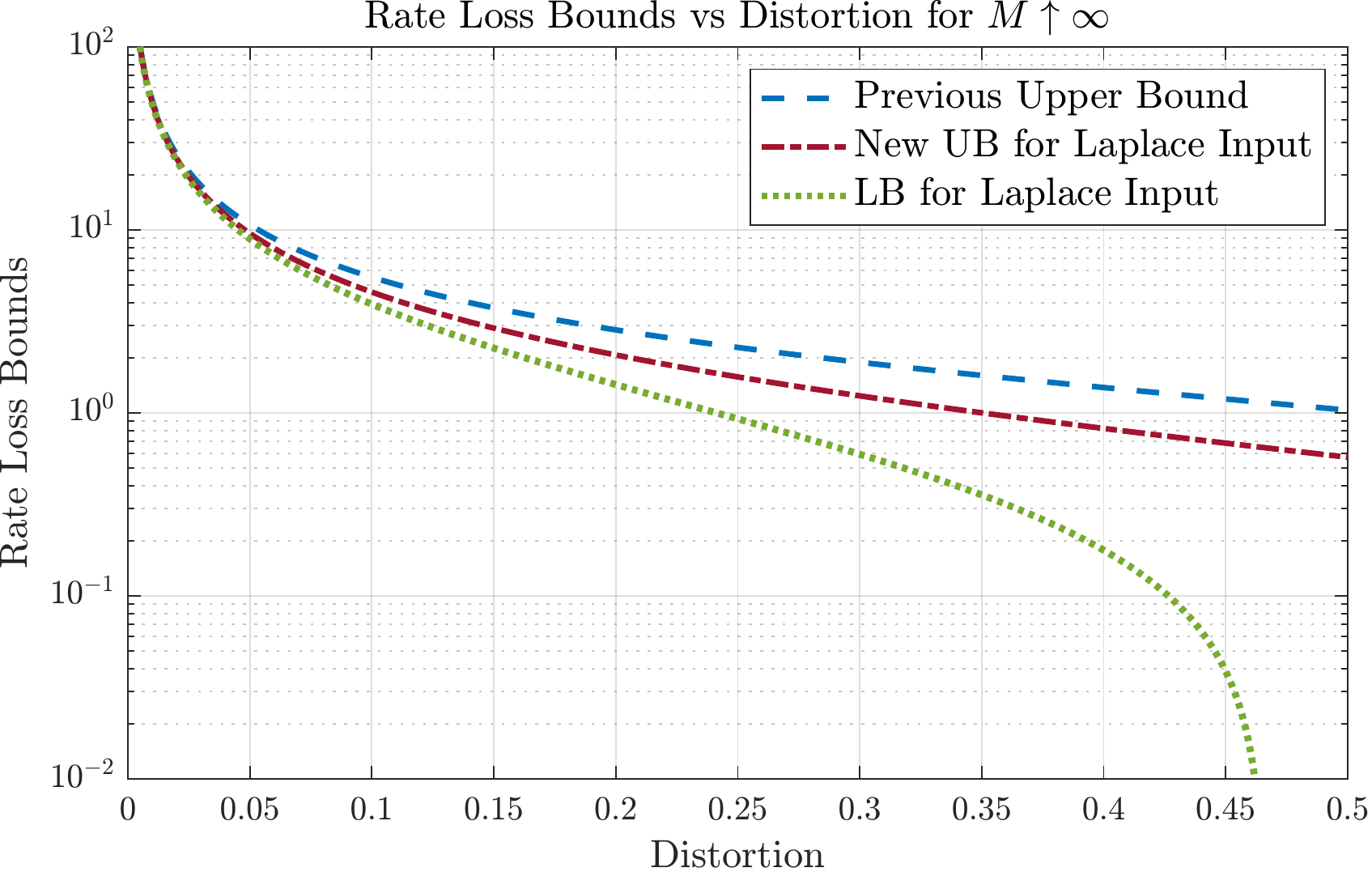}
    \caption{Comparison of Bounds for $\sigma_X^2=\sigma_W^2=1$}
    \label{fig2}
\end{figure}

\section{Extension to Exponential Families}\label{extensiontoexponentialsection}
We investigate the exponential family generalization of the differential entropy of the conditional expectation. We allow the input to be of any distribution with finite variance and differential entropy, and the output given the input is restricted to the exponential family with the canonical parameter equal to the input. 
\subsection{Natural Exponential Family Setup}

In this section, we \revision{adopt the notation used in \cite{morris,efron2011tweedie} and} replace the model of Figure~\ref{observation} by the following more general model:
\begin{equation}
X \sim q(\cdot) \; \text { and } \; Y \mid X=x \sim p_{x}(y)=e^{x y-A(x)} p_{b}(y)  \label{Eq-mod-expofam}  
\end{equation}
where $x$ is the canonical parameter of the family, $A(x)$ is the cumulant generating function (CGF) \footnote{\revision{Derivatives of CGF yield the cumulants. See \cite{morris} for further explanations.}}, $p_b(y)$ is the \revision{absolutely continuous} base measure, and \revision{$q(\cdot)$ is the PDF of the absolutely continuous random variable $X$ with known mean and variance}. For this model, the following lower bound on the differential entropy of the conditional expectation holds.
\begin{theorem}.\label{thm-expofam}
For the model of Equation~\eqref{Eq-mod-expofam}, we have
\begin{align}
    h(\E{X\given Y}) & \ge  2h(X) - h(Y) \nonumber \\
      & \,\,\,\,\,\, + 2\left(h(Y\given X)-\frac{1}{2}\loge{2\pi e}\right).\label{eq-thm-expofam}
\end{align}
\end{theorem}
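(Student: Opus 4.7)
The plan is to mirror the two-step strategy used for the scalar Gaussian case in Theorem \ref{thm-Gauss-scalar}: first establish an exponential-family analogue of Lemma \ref{lemma-main} giving a clean expression for $h(\E{X\given Y})$, then apply a conditional maximum-entropy bound to the resulting expectation.

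For the first step, write the marginal density of $Y$ as $f_Y(y) = M(y)\,p_b(y)$ with $M(y) = \int e^{xy - A(x)}\,q(x)\,dx$. Bayes' rule shows that the posterior $X\given Y=y$ is itself a natural exponential family in $x$, now with canonical parameter $y$, log-partition $\log M(y)$, and base measure $e^{-A(x)}q(x)$. The standard cumulant identities for this posterior family (as in \cite{efron2011tweedie}) give the exponential-family analogues of Tweedie's formula and the Hatsell--Nolte identity,
\begin{equation*}
\E{X\given Y=y} = \frac{d}{dy}\log M(y),\qquad \Var{X\given Y=y} = \frac{d^2}{dy^2}\log M(y).
\end{equation*}
In particular, the map $y\mapsto \E{X\given Y=y}$ is smooth and strictly increasing, since its derivative equals the strictly positive conditional variance. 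The scalar change-of-variables formula for differential entropy then yields
\begin{equation*}
h(\E{X\given Y}) = h(Y) + \E{\log \Var{X\given Y}},
\end{equation*}
which plays the role of Lemma \ref{lemma-main} in the exponential-family setting and reduces to it in the Gaussian special case.

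For the second step, at each $y$ the random variable $X\given Y=y$ has variance $\Var{X\given Y=y}$, so the Gaussian maximum-entropy inequality gives $h(X\given Y=y) \leq \frac{1}{2}\loge{2\pi e\,\Var{X\given Y=y}}$. Taking expectations and using the chain rule $h(X\given Y) = h(X) + h(Y\given X) - h(Y)$, I obtain
\begin{equation*}
\E{\log \Var{X\given Y}} \ge 2h(X) - 2h(Y) + 2h(Y\given X) - \loge{2\pi e}.
\end{equation*}
Plugging this lower bound into the identity from the first step immediately produces the claimed inequality~(\ref{eq-thm-expofam}).

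The main obstacle is the first step: rigorously establishing the identity $h(\E{X\given Y}) = h(Y) + \E{\log\Var{X\given Y}}$ requires differentiating $\log M(y)$ twice under the integral sign and invoking the change-of-variables formula, which in turn require suitable regularity (sufficiently many moments of $q$, smoothness of $A$, and non-degeneracy of the posterior variance). These are standard assumptions for regular natural exponential families and mirror the role played by Tweedie's formula and the Hatsell--Nolte identity in the proof of Lemma \ref{lemma-main}. Once this identity is in hand, the remainder of the argument is an essentially verbatim adaptation of the proof of Theorem \ref{thm-Gauss-scalar}, with the residual term $2\left(h(Y\given X) - \tfrac{1}{2}\loge{2\pi e}\right)$ quantifying precisely how far the noise $Y\given X$ departs from Gaussianity of unit variance.
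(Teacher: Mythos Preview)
Your proposal is correct and follows essentially the same approach as the paper: both derive the exponential-family Tweedie/Hatsell--Nolte identities from the observation that the posterior is itself a natural exponential family in $y$ with log-partition $\log M(y)=\log\frac{p(y)}{p_b(y)}$, apply the change-of-variables formula to get $h(\E{X\given Y})=h(Y)+\E{\log\Var{X\given Y}}$, and then bound the expectation via the conditional Gaussian maximum-entropy inequality combined with $h(X\given Y)=h(X)+h(Y\given X)-h(Y)$. The paper likewise flags the differentiation-under-the-integral regularity issue and treats it in a separate appendix.
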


\begin{proof}
The proof follows by analogy to Theorem~\ref{thm-Gauss-scalar}. 
By Bayes' Rule, the posterior density of $X$ given $Y=y$ is 
\begin{equation}
    p(x\given y) = p_x(y) q(x)/p(y) \label{bayes}
\end{equation}
where $p(y)$ is the marginal density which is calculated as 
\begin{equation}
    p(y) = \int p_x(y) q(x)\, dx.
\end{equation}
Plugging (\ref{Eq-mod-expofam}) in (\ref{bayes}), we obtain
\begin{equation}
    p(x\given y) = e^{x y-\log\frac{p(y)}{p_b(y)}} \left(q(x)\,e^{-A(x)}\right),\label{expfamconditional}
\end{equation}
which is a natural exponential family with canonical parameter $y$ and CGF $\log \frac{p(y)}{p_b(y)}$. Taking derivatives of the CGF with respect to $y$, \revision{we obtain the natural exponential family generalization of \emph{Tweedie's formula}, provided the differentiation under the integral sign is justified. We discuss these details in Appendix~\ref{generalizedTweedie}.}
\begin{align}
    \E{X\given Y=y} &= \frac{d}{dy} \loge{\frac{p(y)}{p_b(y)}} \label{efron1} \\
    \Var{X\given Y=y} &= \frac{d}{dy}\E{X\given Y=y} = \frac{d^2}{dy^2} \loge{\frac{p(y)}{p_b(y)}}. \label{efron2}
\end{align}
Hence, we have the same formula for calculating $h(\E{X\given Y})$.
Specifically, we can observe the following steps:
\begin{IEEEeqnarray}{rCl}
    h(\E{X\given Y}) &=& h(Y) + \E{\log\left\vert\frac{d}{dY} \E{X\given Y}\right\vert} \label{expfam1}\\
    &=& h(Y) + \E{\loge{\Var{X\given Y}}} \label{expfam2}\\
    &\geq& h(Y)+2h(X\given Y)-\log(2\pi e) \label{expfam3}\\
    &=& 2h(X)-h(Y) \nonumber\\&&\hspace{1.05cm}+2h(Y\given X)-\log(2\pi e) \label{expfam4}
\end{IEEEeqnarray}
where (\ref{expfam1}) follows immediately from the change of variable formula, (\ref{expfam2}) follows from (\ref{efron1}), and (\ref{efron2}). In order to justify (\ref{expfam3}), we can use the following argument. Since 
\begin{equation}
    h(X\given Y=y)\leq \frac{1}{2}\,\loge{2\pi e\,\Var{X\given Y=y}} \label{expfaminequality}
\end{equation} for every $y$, we take expectations of both sides to obtain 
\begin{equation}
    \E{\loge{\Var{X\given Y}}} \geq  2\left(h(X\given Y)-\frac{1}{2}\loge{2\pi e}\right).
\end{equation}
Finally, (\ref{expfam4}) follows from the definition. That is,
\begin{equation}
    I(X;Y) = h(X)-h(X\given Y) = h(Y)-h(Y\given X).
\end{equation}
\end{proof}
\revision{
\begin{remark}.
\begin{itemize}
    \item Note that (\ref{expfaminequality}) inherently assumes $X$ is supported on $\mathbb{R}$, and therefore, it can be tightened in case a support constraint is added to the model (\ref{Eq-mod-expofam}). We refer to \cite{maxentropyvarianceconstraint} for a detailed discussion on this matter. 
    \item From (\ref{expfaminequality}), it is evident that the equality is achieved in (\ref{eq-thm-expofam}) if and only if $X\given Y=y$ is Gaussian. Hence, the bound is tighter for posterior distributions close to Gaussian in terms of KL divergence.
\end{itemize}

\end{remark}
 We note that if we specialize the model of Equation~\eqref{Eq-mod-expofam} to the AWGN model of Figure~\ref{observation},
then the corrective term in the second line of Equation~\eqref{eq-thm-expofam} vanishes and we obtain Theorem~\ref{thm-Gauss-scalar}. In that case, $p_b(y)$ is the Gaussian density with zero mean and $\sigma_W^2$ variance and
\begin{align}
    x \triangleq \frac{\mu}{\sigma_W^2},\; A(x) \triangleq \frac{1}{2}\sigma_W^2\,x^2=\frac{1}{2}\frac{\mu^2}{\sigma_W^2}.
\end{align}
Hence, $Y\given X=x$ is a Gaussian random variable with mean $\sigma_W^2\,x$ and variance $\sigma_W^2$, and the model (\ref{Eq-mod-expofam}) corresponds to 
a modified AWGN channel $Y=\sigma_W^2 X+W$. Introducing a change of variable $\widetilde{X}\triangleq \sigma_W^2 X$ gives the desired result, i.e., $h(\E{\widetilde{X}\given Y})\geq 2h(\widetilde{X})-h(Y)$.}
In general, however, the corrective term can be positive or negative,
which is illustrated by the example given below in Section~\ref{Sec-gamma}.
\begin{remark}.
It is important to note that the presented setup is for the natural exponential family, which includes the Normal distribution with known variance, the Poisson distribution, the Gamma distribution with known shape parameter $\alpha$, the Binomial distribution with known number of trials, and the Negative Binomial distribution with known $r$. As we work with differential entropy, an important example in this setup is the Gamma distribution. 
\end{remark}

\subsection{Example : Gamma Distribution}\label{Sec-gamma}
As a concrete example of the model of Equation~\eqref{Eq-mod-expofam}, we now consider the Gamma distribution, {\it i.e.,}
we let $X$ be any positive, absolutely continuous random variable with finite variance $\sigma_X^2$ and differential entropy $h(X)$, and
\begin{equation}
X \sim p_X(\cdot)   \; \text{ and } \; Y\mid X=x \sim \frac{x^\alpha}{\Gamma(\alpha)} y^{\alpha-1} e^{-x y}.
\end{equation}
That is, the conditional distribution of $Y$ given $X=x$ is Gamma with known shape parameter $\alpha$ and rate parameter $x>0$ \footnote{\revision{For a fixed shape parameter, the canonical parameter of Gamma distribution is the additive inverse of the rate parameter. However, this does not change the Theorem~\ref{thm-expofam}}}. In this case, $h(Y\given X)$ is calculated as $-\E{\log X}+\alpha+\log\Gamma(\alpha)+(1-\alpha)\psi(\alpha)$. \revision{Define the corrective term
\begin{IEEEeqnarray}{rCl}
    \Delta(p_X,\alpha) &\triangleq& \alpha+\log\Gamma(\alpha)+(1-\alpha)\psi(\alpha)\nonumber\\&&-\frac{1}{2}\loge{2\pi e}-\E{\log X}.
\end{IEEEeqnarray}
Depending on $p_X$ and $\alpha$, $\;\Delta(p_X,\alpha)$ can be positive or negative, and Theorem~\ref{thm-expofam} evaluates to
\begin{equation}
    h(\E{X\given Y})\geq 2h(X) -h(Y)+2\Delta(p_X,\alpha).\label{gammaexamplebound}
\end{equation}} 

Observe that this model corresponds to a multiplicative model rather than the additive model we considered in Figure~\ref{observation}, i.e., 
\begin{equation}
    Y = X^{-1}\, G
\end{equation}
where $G$ is Gamma with shape parameter $\alpha$ and rate parameter 1, independent of $X$. This is illustrated in Figure~\ref{fig-gamma}.

\begin{figure}[!htb]
    \centering
    \begin{tikzpicture}
	\matrix (m1) [row sep=8mm, column sep=8mm]
	{
		\node[dspnodeopen,dsp/label=above] (m00) {$X$};    &
		\node[coordinate ]                 (m01) {};          &
		\node[dspfilter,minimum height=1.1cm,minimum width=1.8cm,text height=2.1em]                   (m02) {\footnotesize Multiplicative\\ \footnotesize Inverse}; &
		\node[coordinate ]                 (m03) {};          &
	    \node[dspmixer, dsp/label=bottom]                   (m04) {}; &
		\node[dspnodeopen,dsp/label=above] (m05) {$Y$};       \\
		&&&&\node[dspnodeopen, dsp/label=right]   (m14) {$G$};    
		\\
		\\
	};

	
	\begin{scope}[start chain]
		\chainin (m00);
		\chainin (m02) [join=by dspflow];
	\end{scope}

	\foreach \i [evaluate = \i as \j using int(\i+1),
	             evaluate = \i as \k using int(\i+2),] in {2,3}
	{
		\begin{scope}[start chain]
			\chainin (m0\i);
			\chainin (m0\j) [join=by dspline];
			\chainin (m0\k) [join=by dspline];
		\end{scope}
	}
	\draw[dspflow] (m03) -- node[midway,above] {$X^{-1}$} (m04);
	\draw[dspflow] (m14) -- node[midway,above] {} (m04);
\end{tikzpicture}
\vspace*{-1cm}
\caption{The Gamma Model}
\label{fig-gamma}
\end{figure}
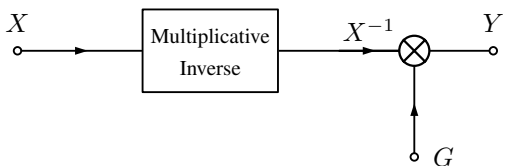
\revision{
To illustrate the Theorem~\ref{thm-expofam} for a specific input distribution, let $X$ be a Beta-prime random variable, i.e., its PDF is
\begin{equation}
    p_X(x) = \begin{cases}
    \frac{\Gamma(\alpha)}{\Gamma(\alpha-\gamma)\Gamma(\gamma)}(x-1)^{\alpha-\gamma-1}x^{-\alpha}&\text{ if } x\geq 1\\
    0&\text{ otherwise}
    \end{cases}
\end{equation}
where $\alpha>\gamma>2$ so that its variance is finite. It is easy to check that for this choice of $p_X(\cdot)$, $Y$ is Gamma with shape parameter $\gamma$ and rate parameter $1$, i.e., $
    p_Y(y) = \frac{e^{-y}\,y^{\gamma-1}}{\Gamma(\gamma)} \text{ for } y>0$. Conditioned on $Y=y$, \begin{equation} X = 1+T\end{equation} where $T$ is Gamma with shape parameter $\alpha-\gamma$ and rate parameter $y$. Thus, \begin{align}
        \E{X\given Y} &= 1+\frac{\alpha-\gamma}{Y}\\
        \Var{X\given Y} &= \frac{\alpha-\gamma}{Y^2}
    \end{align}
and every term in (\ref{eq-thm-expofam}) can be calculated analytically as a function of $\alpha$ and $\gamma$. Furthermore, the gap to the lower bound in (\ref{gammaexamplebound}) depends only on the difference $d \triangleq \alpha-\gamma$:
\begin{IEEEeqnarray}{rCl}
    h(\E{X\given Y})-\left(2h(X)-h(Y)+2\Delta\right) &=& \loge{\frac{2\pi e d}{\left(\Gamma(d)\right)^2}}\nonumber\\&& +2(d-1)\psi(d)\nonumber\\&&-2d.
\end{IEEEeqnarray}
Using the series expansion of the gamma and digamma functions, one can show that the gap is asymptotically equivalent to $\frac{2}{d}$ and $\frac{2}{3d}$ for $d\downarrow 0$ and $d\uparrow \infty$, respectively.  
We illustrate the tightness of the lower bound as a function of $d$ in Figure~\ref{fig:d}. 
\begin{figure}[!htb]
    \centering
    \includegraphics[width = \linewidth]{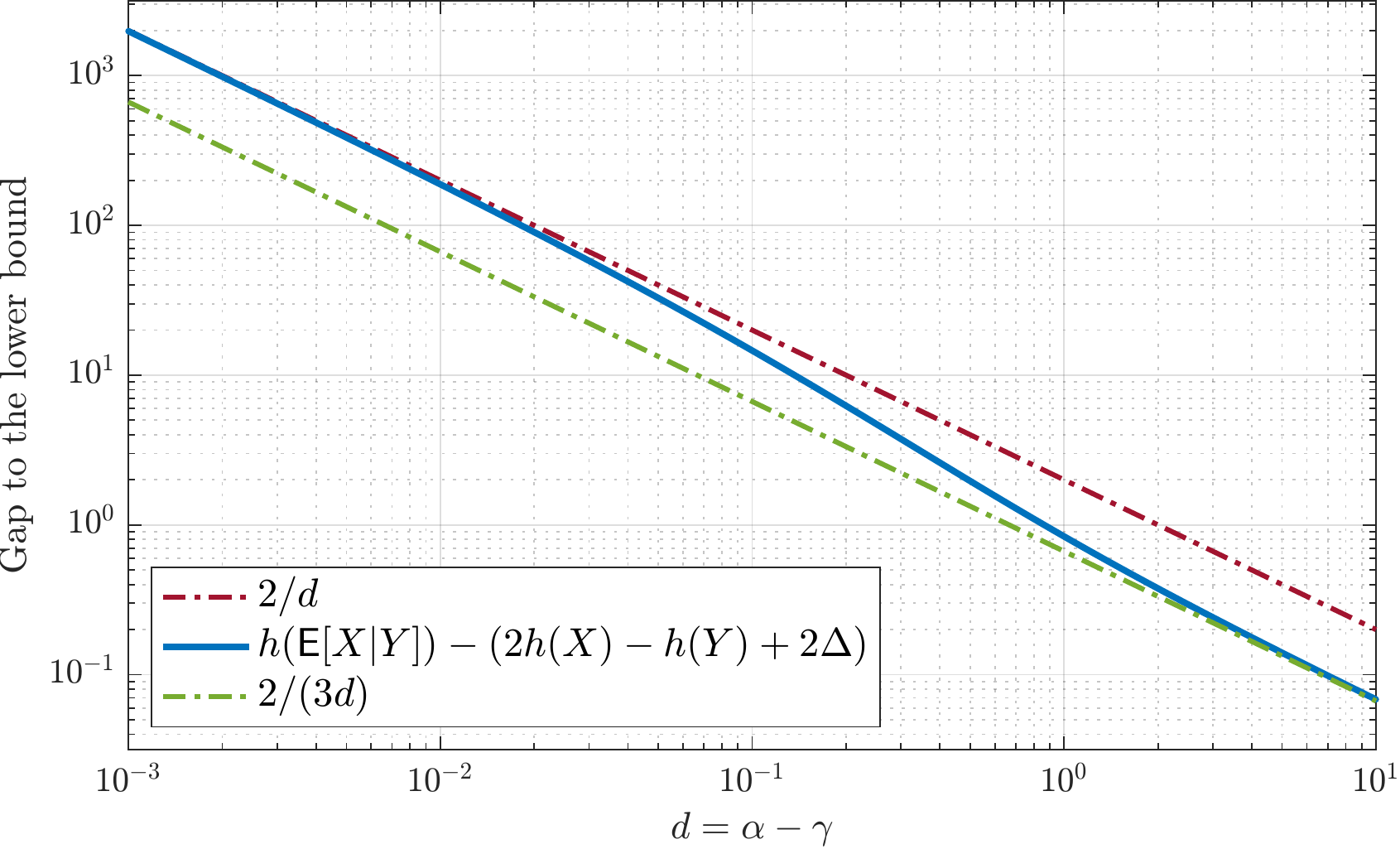}
    \caption{Gap to the lower bound vs $d$}
    \label{fig:d}
\end{figure}
}
\section{Extension to the Vector Case}\label{extensiontovectorsection}
In this section, we consider the extension of the main result (\ref{mainresult}) and the upper bound (\ref{ub1}) under the vector Gaussian noise model, i.e., the input-output relationship is governed by 
\begin{equation}
\mathbf{Y} = \mathbf{A}\mathbf{X} + \mathbf{W}
\label{vectorGauss}
\end{equation}
where $\mathbf{W}\in \mathbb{R}^n$ is a zero mean Gaussian random vector with positive-definite covariance matrix $\mathbf{K_W}$, $\mathbf{X}\in \mathbb{R}^n $ is and $\mathbf{A}$ is a full-rank, $n\times n$ matrix. It is assumed that $\mathbf{X}$ and $\mathbf{W}$ are independent, and the only assumption on $\mathbf{X}$ is that its covariance matrix $\mathbf{K_X}$ is full-rank, i.e., $\mathbf{X}$ is non-degenerate. 
\par Denote the conditional variance matrix by 
\begin{align}
    \Varvec{\mathbf{X}\given \mathbf{Y}} &\triangleq \E{\mathbf{X}\mathbf{X}^T\given \mathbf{Y}}-\E{\mathbf{X}\given \mathbf{Y}}\E{\mathbf{X}^T\given \mathbf{Y}},
\end{align} and the MMSE matrix by 
\begin{equation}
    \boldsymbol{\mathsf{MMSE}}(\mathbf{X}\given\mathbf{Y})\triangleq \E{\Varvec{\mathbf{X}\given \mathbf{Y}}},
\end{equation} and the Jacobian matrix of a transformation $\boldsymbol\phi:\mathbf{R}^n\mapsto \mathbf{R}^m$ by $\mathbf{J}_\mathbf{y}\boldsymbol{\phi}(\mathbf{y})$ with $i^{th}$ row, $j^{th}$ column element being $\frac{\partial \phi_j}{\partial y_i}$. The result of this section is that a similar lower bound is valid under the model in (\ref{vectorGauss}).
\begin{proposition}\hspace*{.25em}(A Lower Bound of Differential Entropy in Vector Case):
For the model given in (\ref{vectorGauss}), we have 
\begin{equation}
h\left(\E{\mathbf{X}\given \mathbf{Y}}\right) \geq 2h\left(\mathbf{X}\right) -h\left(\mathbf{Y}\right) +\log\det \mathbf{A}.
\label{mainresultvec}
\end{equation}
\revision{Furthermore, equality is achieved when $X$ is Gaussian.}
\end{proposition}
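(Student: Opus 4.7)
The plan is to mimic the scalar argument almost verbatim, replacing each scalar identity by its vector analogue. First, I would reduce to the standard additive Gaussian noise model by setting $\mathbf{Z}\triangleq\mathbf{A}\mathbf{X}$, so that $\mathbf{Y}=\mathbf{Z}+\mathbf{W}$ is a non-degenerate Gaussian perturbation of $\mathbf{Z}$. Because $\mathbf{A}$ is invertible, $\E{\mathbf{X}\given\mathbf{Y}}=\mathbf{A}^{-1}\E{\mathbf{Z}\given\mathbf{Y}}$ and $\Varvec{\mathbf{Z}\given\mathbf{Y}}=\mathbf{A}\,\Varvec{\mathbf{X}\given\mathbf{Y}}\,\mathbf{A}^T$, and the differential entropies differ by $\log\det\mathbf{A}$, which will ultimately furnish the extra $\log\det\mathbf{A}$ appearing in \eqref{mainresultvec}.

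Next, I would establish a vector analogue of Lemma~\ref{lemma-main}. The multivariate Tweedie formula gives $\E{\mathbf{Z}\given\mathbf{Y}=\mathbf{y}}=\mathbf{y}+\mathbf{K_W}\nabla\log p_{\mathbf{Y}}(\mathbf{y})$, and differentiating together with the vector Hatsell--Nolte identity $\nabla\nabla^T\log p_{\mathbf{Y}}(\mathbf{y})=\mathbf{K_W}^{-1}\Varvec{\mathbf{Z}\given\mathbf{Y}=\mathbf{y}}\mathbf{K_W}^{-1}-\mathbf{K_W}^{-1}$ yields
\begin{equation}
\mathbf{J}_{\mathbf{y}}\E{\mathbf{X}\given\mathbf{Y}=\mathbf{y}}=\Varvec{\mathbf{X}\given\mathbf{Y}=\mathbf{y}}\,\mathbf{A}^T\mathbf{K_W}^{-1}.
\end{equation}
Provided this map is injective (which I would justify via the positive definiteness of the Jacobian, exactly as in the scalar case covered in Appendix~\ref{proofofidentity}), the change-of-variables formula gives
\begin{equation}
h(\E{\mathbf{X}\given\mathbf{Y}})=h(\mathbf{Y})+\E{\log\det\Varvec{\mathbf{X}\given\mathbf{Y}}}+\log\det\mathbf{A}-\log\det\mathbf{K_W}.
\end{equation}

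Finally, I would close the argument by invoking the multivariate maximum-entropy bound $h(\mathbf{X}\given\mathbf{Y}=\mathbf{y})\le\frac{1}{2}\log((2\pi e)^n\det\Varvec{\mathbf{X}\given\mathbf{Y}=\mathbf{y}})$, take expectations over $\mathbf{y}$, and substitute $h(\mathbf{X}\given\mathbf{Y})=h(\mathbf{X})-h(\mathbf{Y})+h(\mathbf{W})$ with $h(\mathbf{W})=\frac{1}{2}\log((2\pi e)^n\det\mathbf{K_W})$. The $n\log(2\pi e)$ and $\log\det\mathbf{K_W}$ contributions cancel, leaving precisely $2h(\mathbf{X})-h(\mathbf{Y})+\log\det\mathbf{A}$. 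The Gaussian equality case is inherited from the uniqueness of the Gaussian as the maximum-entropy distribution under a covariance constraint.

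I expect the main obstacle to be the technical justification of the vector Tweedie and Hatsell--Nolte identities together with the injectivity of $\mathbf{y}\mapsto\E{\mathbf{X}\given\mathbf{Y}=\mathbf{y}}$ needed for the change-of-variables step; by contrast, the maximum-entropy inequality and the information-theoretic accounting are routine extensions of the scalar derivation.
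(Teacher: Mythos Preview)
Your proposal is correct and follows essentially the same route as the paper: both derive the vector analogue of Lemma~\ref{lemma-main} via the change-of-variables formula together with the vector Hatsell--Nolte identity, then close with the multivariate maximum-entropy bound and the mutual-information identity $h(\mathbf{X}\given\mathbf{Y})=h(\mathbf{X})-h(\mathbf{Y})+h(\mathbf{W})$. The only cosmetic difference is that you first substitute $\mathbf{Z}=\mathbf{A}\mathbf{X}$ to reduce to the standard additive model, whereas the paper works directly with $\mathbf{Y}=\mathbf{A}\mathbf{X}+\mathbf{W}$ and quotes the Jacobian in the form $\mathbf{J}_{\mathbf{y}}\E{\mathbf{X}\given\mathbf{Y}=\mathbf{y}}=\mathbf{A}^{-1}\mathbf{K}_{\mathbf{W}}^{-1}\mathbf{A}\,\Varvec{\mathbf{X}\given\mathbf{Y}=\mathbf{y}}\,\mathbf{A}^T$; the two Jacobian expressions have the same determinant, so the resulting identities and bounds coincide.
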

\begin{proof}
Since $I(\mathbf{X};\mathbf{Y}) = h(\mathbf{Y})-h(\mathbf{W}) = h(\mathbf{X})-h(\mathbf{X\given \mathbf{Y}})$, (\ref{mainresultvec}) is equivalent to 
\begin{IEEEeqnarray}{rCl}
h(\mathbf{X}\given \mathbf{Y}) &\leq &h(\mathbf{W}) -\frac{1}{2}h(\mathbf{Y})\nonumber\\&&{}+
\frac{1}{2}h(\E{\mathbf{X}\given \mathbf{Y}}) - \frac{1}{2}\log\det\mathbf{A}\\
&= &\frac{1}{2}\loge{(2\pi e)^n\,\det\mathbf{K_W}}\nonumber-\frac{1}{2}h(\mathbf{Y})\\&&+
\frac{1}{2}h(\E{\mathbf{X}\given \mathbf{Y}})-\frac{1}{2}\log\det\mathbf{A}.
\end{IEEEeqnarray}
By the maximum entropy argument, 
\begin{equation}
    h(\mathbf{X}\given \mathbf{Y}) \leq \frac{1}{2}\E{\loge{(2\pi e)^n\,\det\Varvec{\mathbf{X}\given \mathbf{Y}}}}.\label{maxentropyvector}
\end{equation}
Hence, it is sufficient to show that 
\begin{multline}
    \frac{1}{2}\E{\loge{(2\pi e)^n\,\det\Varvec{\mathbf{X}\given \mathbf{Y}}}} \\\leq \frac{1}{2}\loge{(2\pi e)^n\,\det(\mathbf{A}^{-1}\,\mathbf{K_W})}-\frac{1}{2}h(\mathbf{Y})\\+
\frac{1}{2}h(\E{\mathbf{X}\given \mathbf{Y}})
\end{multline}
which is equivalent to 
\begin{multline}
     \E{\log{\det\left(\mathbf{A}\,\mathbf{K}^{-1}_\mathbf{W}\,\Varvec{\mathbf{X}\given \mathbf{Y}} \right)}} \leq h(\E{\mathbf{X}\given \mathbf{Y}})-h(\mathbf{Y}).
\end{multline}
By the change of variables formula, 
\begin{equation}
    h(\E{\mathbf{X}\given \mathbf{Y}})  = h(\mathbf{Y})+ \E{\log{\left\vert\det\mathbf{J}_\mathbf{Y}(\E{\mathbf{X}\given \mathbf{Y}})\right\vert}}
\end{equation} provided that the transformation $\mathbf{y}\mapsto \E{\mathbf{X}\given \mathbf{Y} = \mathbf{y}}$ is diffeomorphic. Thus, it remains to justify 
\begin{equation}
    \E{\log{\left\vert\det\mathbf{J}_\mathbf{Y}(\E{\mathbf{X}\given \mathbf{Y}})\right\vert}}\ge \E{\log{\det\left(\mathbf{A}\,\mathbf{K}^{-1}_\mathbf{W}\,\Varvec{\mathbf{X}\given \mathbf{Y}} \right)}} \label{vecproof1}
\end{equation} and $\mathbf{y}\mapsto \E{\mathbf{X}\given \mathbf{Y} = \mathbf{y}}$ is diffeomorphic. Under Gaussian noise, the variance identity of Hatsell and Nolte \cite{dytso_poor_shitz_2020} gives 
\begin{align}
    \mathbf{J}_\mathbf{y}\E{\mathbf{X}\given \mathbf{Y}=\mathbf{y}}=\mathbf{A}^{-1}\,\mathbf{K}^{-1}_\mathbf{W}\,\mathbf{A}\,\Varvec{\mathbf{X}\given \mathbf{Y}=\mathbf{y}}\mathbf{A}^T \label{vecproof2}
\end{align}
for every realization $\mathbf{y}$. Hence, (\ref{vecproof1}) is satisfied with equality and $\mathbf{y}\mapsto \E{\mathbf{X}\given \mathbf{Y} = \mathbf{y}}$ is diffeomorphic provided that $\mathbf{K_X}$ is full-rank.
\revision{When the input is Gaussian, (\ref{maxentropyvector}) and therefore the main lower bound (\ref{mainresultvec}) are satisfied with equality: 
\begin{align*}
h(\E{\mathbf{X}\given\mathbf{Y}})
    &= 2h\left(\mathbf{X}\right) -h\left(\mathbf{Y}\right) +\log\det \mathbf{A} \\&= \tfrac{1}{2}\log\det(2\pi e (\mathbf{A}\mathbf{K_X})^{\scalebox{0.7}{2}}(\mathbf{A}\mathbf{K_X}\mathbf{A}^T+\mathbf{K_W})^{\scalebox{0.7}{-1}}).
\end{align*}}
\end{proof}
By the concavity of the log determinant, Jensen's Inequality gives an upper bound to $h(\E{\mathbf{X}\given \mathbf{Y}})$.
\begin{lemma}{\hspace*{.25em}(Upper Bounds of Differential Entropy in Vector Case):}
For the model given in (\ref{vectorGauss}), we have 
\begin{IEEEeqnarray}{rCl}
    h(\E{\mathbf{X}\given \mathbf{Y}}) &= &h(\mathbf{Y}) \nonumber\\&&+ \E{\log\det \left(\mathbf{A}\,\mathbf{K}^{-1}_\mathbf{W}\,\Varvec{\mathbf{X}\given \mathbf{Y}} \right)} \label{ubv1}\\
                  &\leq &h(\mathbf{Y}) + \log\det\E{\Varvec{\mathbf{X}\given \mathbf{Y}}}\nonumber\\&&+\log\det \left(\mathbf{A}\,\mathbf{K}_\mathbf{W}^{-1}\right) \label{ubv2}\\
                  &= &h(\mathbf{Y}) +\log\det \boldsymbol{\mathsf{MMSE}}(\mathbf{X}\given\mathbf{Y})\nonumber\\&&+\log\det \left(\mathbf{A}\,\mathbf{K}_\mathbf{W}^{-1}\right) \label{ubv3}\\
                  &\leq &\frac{1}{2}\loge{(2\pi e)^n\,\det (\mathbf{K}_\mathbf{X}+\mathbf{K}_\mathbf{W})}\nonumber\\&&
                  +\log\det \boldsymbol{\mathsf{MMSE}}(\mathbf{X}\given\mathbf{Y})\nonumber\\&&+\log\det \left(\mathbf{A}\,\mathbf{K}_\mathbf{W}^{-1}\right) \label{ubv4}
\end{IEEEeqnarray}
\end{lemma}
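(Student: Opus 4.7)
The proof breaks into four (in)equalities, all of which follow routinely from tools already invoked in the proof of the preceding proposition on the vector lower bound, so the plan is simply to chain them in order.

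For (ubv1), I would apply the multivariate change-of-variables formula for differential entropy, giving $h(\E{\mathbf{X}\given\mathbf{Y}}) = h(\mathbf{Y}) + \E{\log\left|\det \mathbf{J}_\mathbf{Y}\E{\mathbf{X}\given\mathbf{Y}}\right|}$, and then substitute the vector Hatsell--Nolte identity $\mathbf{J}_\mathbf{y}\E{\mathbf{X}\given\mathbf{Y}=\mathbf{y}} = \mathbf{A}^{-1}\mathbf{K}_\mathbf{W}^{-1}\mathbf{A}\,\Varvec{\mathbf{X}\given\mathbf{Y}=\mathbf{y}}\,\mathbf{A}^T$ that was already used in the proof of the lower bound. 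Taking determinants and noting the cancellation $\det(\mathbf{A}^{-1})\det(\mathbf{A}^T)=1$, the expression inside the expectation collapses to $\det(\mathbf{A}\mathbf{K}_\mathbf{W}^{-1}\Varvec{\mathbf{X}\given\mathbf{Y}=\mathbf{y}})$, giving the claimed identity. The diffeomorphism property of $\mathbf{y}\mapsto\E{\mathbf{X}\given\mathbf{Y}=\mathbf{y}}$ required to justify the change of variables was already verified above under the full-rank assumption on $\mathbf{K}_\mathbf{X}$.

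For (ubv2), I would pull out the deterministic factor, writing $\log\det(\mathbf{A}\mathbf{K}_\mathbf{W}^{-1}\Varvec{\mathbf{X}\given\mathbf{Y}}) = \log\det(\mathbf{A}\mathbf{K}_\mathbf{W}^{-1}) + \log\det\Varvec{\mathbf{X}\given\mathbf{Y}}$, and then apply Jensen's inequality to the concave function $\mathbf{M}\mapsto\log\det\mathbf{M}$ on the cone of positive-definite matrices, obtaining $\E{\log\det\Varvec{\mathbf{X}\given\mathbf{Y}}}\leq\log\det\E{\Varvec{\mathbf{X}\given\mathbf{Y}}}$. Step (ubv3) is then immediate from the definition $\boldsymbol{\mathsf{MMSE}}(\mathbf{X}\given\mathbf{Y}) \triangleq \E{\Varvec{\mathbf{X}\given\mathbf{Y}}}$.

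For the final bound (ubv4), I would invoke the Gaussian maximum-entropy principle applied to $\mathbf{Y}$: since $\mathbf{Y}=\mathbf{AX}+\mathbf{W}$ has covariance $\mathbf{A}\mathbf{K}_\mathbf{X}\mathbf{A}^T+\mathbf{K}_\mathbf{W}$, one has $h(\mathbf{Y})\leq\tfrac{1}{2}\loge{(2\pi e)^n\det(\mathbf{A}\mathbf{K}_\mathbf{X}\mathbf{A}^T+\mathbf{K}_\mathbf{W})}$. (The statement as printed writes $\det(\mathbf{K}_\mathbf{X}+\mathbf{K}_\mathbf{W})$, which agrees with this sharp bound only when $\mathbf{A}=\mathbf{I}$ and is presumably a minor oversight.) No step presents a real obstacle: concavity of $\log\det$ is standard, and the only genuinely nontrivial ingredient — the vector Hatsell--Nolte identity together with its diffeomorphism consequence — was already handled in the proof of the preceding proposition.
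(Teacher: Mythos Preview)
Your proposal is correct and follows essentially the same route as the paper: change of variables combined with the vector Hatsell--Nolte identity for (\ref{ubv1}), Jensen's inequality via concavity of $\log\det$ for (\ref{ubv2}), the definition of the MMSE matrix for (\ref{ubv3}), and the Gaussian maximum-entropy bound on $h(\mathbf{Y})$ for (\ref{ubv4}). Your observation that the printed $\det(\mathbf{K}_\mathbf{X}+\mathbf{K}_\mathbf{W})$ should read $\det(\mathbf{A}\mathbf{K}_\mathbf{X}\mathbf{A}^T+\mathbf{K}_\mathbf{W})$ for general $\mathbf{A}$ is also well taken.
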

\begin{proof} (\ref{ubv1}) follows from Hatsell and Nolte Identity \cite{dytso_poor_shitz_2020} combined with the change of variables,  (\ref{ubv2}) follows from Jensen's Inequality, (\ref{ubv3}) is by the definition of MMSE matrix and (\ref{ubv4}) is by the maximum entropy argument. \end{proof}

\begin{remark}{\hspace{.25em}(A specific case):}
One could set $\mathbf{A} = \mathbf{I}_n$, i.e., the simple extension of the scalar model in the vector setting:  
\begin{equation}
    \mathbf{Y} = \mathbf{X}+ \mathbf{W}
\end{equation}
for a Gaussian random vector $\mathbf{W}$ with covariance matrix $\mathbf{K}_\mathbf{W}$, and an arbitrary random vector $\mathbf{X}$ with finite covariance matrix $\mathbf{K}_\mathbf{X}$.
In this setup, the main inequality becomes 
\begin{equation}
    h\left(\E{\mathbf{X}\given \mathbf{Y}}\right) \geq 2h\left(\mathbf{X}\right) -h\left(\mathbf{Y}\right), \label{vecsimple1}
\end{equation}
which is the same bound as in the scalar case, cf. (\ref{mainresult}). Similarly, the upper bound (\ref{ubv3}) becomes 
\begin{multline}
    h\left(\E{\mathbf{X}\given \mathbf{Y}}\right) \leq h(\mathbf{Y}) +\log\det \boldsymbol{\mathsf{MMSE}}(\mathbf{X}\given\mathbf{Y})\\+\log\det \left(\mathbf{K}_\mathbf{W}^{-1}\right),
\end{multline}
which is an extension of (\ref{ub2}).
\end{remark}
\revision{
\begin{remark}.
Similar to the scalar case, (\ref{maxentropyvector}) and therefore the main lower bound (\ref{mainresultvec}) are tight for input distributions close to Gaussian in terms of KL divergence. For a fixed input distribution, if the input or the additive noise is scaled so that $\mathbf{K_X}$ and $\mathbf{K_W}$ vary, (\ref{mainresultvec}) becomes tighter as $\det \left(\mathbf{K_X}\, \mathbf{K}_{\mathbf{W}}^{-1}\right)$ increases.
\end{remark}}

\begin{appendices}
\section{Derivation of Hatsell-Nolte Identity}\label{app-HN}
As the original proofs are not explicit, we include derivations for Tweedie's Formula and for the Hatsell-Nolte identity relying on multiple uses of differentiation under the integral sign. Hence, the tool we need for this proof is the well-known \emph{Leibniz Rule} which we state here for completeness and future reference:

\begin{theorem}\hspace*{.25em}(Leibniz Integral Rule):\cite{durrett}\label{thm-Leibniz}
Let $(S, \mathcal{S}, \mu)$ be a measure space. Let $f$ be a complex valued function defined on $\mathbbm{R} \times S .$ Let $\delta>0$, and suppose that for $y \in(x-\delta, x+\delta)$ we have
\begin{enumerate}[label=(\roman*)]
    \item $p_Y(y)=\int_{S}\, \xi(y, s)\, \mu(d s) \text { with } \int_{S}|\xi(y, s)|\, \mu(d s)<\infty$.
    \item For fixed s, $\frac{\partial \xi}  {\partial y}\,(y, s)$ exists and is a continuous function of $y$.
    \item $\int_{S}\, \sup _{\theta \in[-\delta, \delta]}\left|\frac{\partial \xi}{\partial y}(x+\theta, s)\right| \mu(d s)<\infty$.
\end{enumerate}
Then,  $p_Y^{\prime}(x)=\int_{S}\,\frac{\partial \xi}{\partial x} (x,s)\,\mu(ds)$.
\end{theorem}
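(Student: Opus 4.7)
The plan is to express $p_Y'(x)$ as the limit of difference quotients, show that those quotients are dominated by an integrable function uniformly in $h$, and then apply the Dominated Convergence Theorem (DCT) to interchange the limit with the integral.

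First, I would use condition (i) to write, for $0<|h|<\delta$,
\begin{equation}
\frac{p_Y(x+h)-p_Y(x)}{h} \;=\; \int_{S}\frac{\xi(x+h,s)-\xi(x,s)}{h}\,\mu(ds),
\end{equation}
so that proving the theorem reduces to justifying the interchange $\lim_{h\to 0}\int_S=\int_S\lim_{h\to 0}$.

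Next, for each fixed $s$, condition (ii) says that $y\mapsto \xi(y,s)$ is differentiable on $(x-\delta,x+\delta)$, so the Mean Value Theorem gives some $\theta_h(s)$ between $0$ and $h$ such that the integrand equals $\frac{\partial \xi}{\partial y}(x+\theta_h(s),s)$. For every $|h|<\delta$, this is bounded in absolute value by
\begin{equation}
g(s)\;\triangleq\;\sup_{\theta\in[-\delta,\delta]}\left|\frac{\partial \xi}{\partial y}(x+\theta,s)\right|,
\end{equation}
which is $\mu$-integrable by condition (iii). Moreover, the continuity assumption in (ii) ensures that $\lim_{h\to 0}\frac{\xi(x+h,s)-\xi(x,s)}{h}=\frac{\partial \xi}{\partial y}(x,s)$ for every $s$. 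Applying DCT with dominating function $g$ yields
\begin{equation}
p_Y'(x)\;=\;\int_{S}\frac{\partial \xi}{\partial x}(x,s)\,\mu(ds),
\end{equation}
which is the claim.

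The main obstacle is purely technical: before invoking DCT one must verify that $g$ is $\mathcal{S}$-measurable. This is where continuity of $\theta\mapsto \frac{\partial \xi}{\partial y}(x+\theta,s)$ on the compact interval $[-\delta,\delta]$ is used — the supremum coincides with the supremum over a countable dense subset, which is an $\mathcal{S}$-measurable function of $s$. Once measurability is established, the dominating bound provided by condition (iii) closes the argument.
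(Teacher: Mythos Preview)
Your proof is correct and is the standard approach (Mean Value Theorem to dominate the difference quotients uniformly, then DCT to pass the limit under the integral). Note, however, that the paper does not prove this theorem at all: it is stated with a citation to Durrett and used as a black box in the derivations of Tweedie's formula and the Hatsell--Nolte identity, so there is no ``paper's own proof'' to compare against.
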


As we use Tweedie's Formula in the proof of Hatsell-Nolte Identity, we begin by proving the former.
\begin{lemma}{\hspace*{.25em}(Tweedie's Formula):}
For the model given in (\ref{themodel}), 
\begin{equation}
    \label{tweedie}
   \E{X\,\vert\, Y=y} = y +  \sigma_W^2\frac{d}{dy}\,\log p_Y(y).
\end{equation}
\end{lemma}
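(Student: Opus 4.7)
The plan is to start from the convolution representation
\begin{equation*}
p_Y(y) \;=\; \int p_X(x)\,p_W(y-x)\,dx
\;=\; \int p_X(x)\,\frac{1}{\sqrt{2\pi\sigma_W^2}}\,\exp\!\Bigl(-\tfrac{(y-x)^2}{2\sigma_W^2}\Bigr)\,dx,
\end{equation*}
differentiate both sides with respect to $y$ via Leibniz's rule (Theorem~\ref{thm-Leibniz}), and then recognize the resulting integral as a multiple of the posterior mean. Concretely, since
$\tfrac{\partial}{\partial y} p_W(y-x) = -\tfrac{y-x}{\sigma_W^2}\, p_W(y-x)$, once differentiation under the integral sign is justified I would obtain
\begin{equation*}
\frac{d}{dy} p_Y(y) \;=\; \frac{1}{\sigma_W^2}\int (x-y)\,p_X(x)\,p_W(y-x)\,dx
\;=\; \frac{1}{\sigma_W^2}\bigl(p_Y(y)\,\E{X\given Y=y} - y\,p_Y(y)\bigr),
\end{equation*}
where the last step uses $p_{X,Y}(x,y) = p_X(x)\,p_W(y-x)$ so that
$\int x\,p_X(x)\,p_W(y-x)\,dx = p_Y(y)\,\E{X\given Y=y}$. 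Dividing through by $p_Y(y)$ (which is strictly positive because Gaussian convolution produces a nowhere-vanishing density) and rearranging yields exactly \eqref{tweedie}.

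The only nontrivial step is checking the three hypotheses of Leibniz's rule, which is where I expect the real work to be. Setting $\xi(y,x) = p_X(x)\,p_W(y-x)$, condition (i) holds since $p_Y$ is a finite density. For (ii), $\xi$ is smooth in $y$ for each fixed $x$. For (iii), pick $\delta>0$ and observe that
\begin{equation*}
\sup_{|\theta|\le \delta}\left|\frac{\partial \xi}{\partial y}(y+\theta,x)\right|
\;\le\; \frac{p_X(x)}{\sigma_W^2}\sup_{|\theta|\le \delta} |y+\theta-x|\,p_W(y+\theta-x).
\end{equation*}
The right-hand side decays like a Gaussian in $x$ (times $|x|$), so, using the standing finite-variance hypothesis on $X$, a crude bound of the form $C_1 p_X(x) + C_2\,|x|\,p_X(x)$ (with constants depending on $y$ and $\delta$) is integrable, establishing (iii).

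This legitimizes the interchange of derivative and integral and completes the proof. I would present the argument in the order above: state the convolution, formally differentiate, identify the conditional mean, then append a short justification of the dominated-convergence/Leibniz step relying on the Gaussian decay of $p_W$ and the finite variance of $X$. No additional machinery beyond Theorem~\ref{thm-Leibniz} is needed, and the identity emerges as a one-line rearrangement once the interchange is granted.
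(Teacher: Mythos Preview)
Your proposal is correct and follows essentially the same route as the paper: write $p_Y$ as the convolution $p_X * \phi_W$, differentiate under the integral via Theorem~\ref{thm-Leibniz}, recognize $\int x\,p_X(x)\,\phi_W(y-x)\,dx = p_Y(y)\,\E{X\given Y=y}$, and divide by $p_Y(y)$. The paper's verification of condition~(iii) uses the slightly cruder bound $|\partial_y \phi_W(y-s)| \le (|s|+|y|+|\theta|)/(\sigma_W^3\sqrt{2\pi})$ rather than your Gaussian-decay observation, but both reduce to integrability of $|x|\,p_X(x)$, which is guaranteed by the finite-variance assumption.
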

\begin{proof}
Let $\phi_W$ denote the zero-mean Gaussian PDF with variance $\sigma_W^2.$
By independence, the density $p_Y$ is the convolution of $p_X$ and $\phi_W$:
\begin{equation}
\label{conv}
    p_Y(y) = \int_{\mathbbm{R}}p_X(s)\,\phi_W(y-s)\,ds.
\end{equation}
Multiplying both sides by $\sigma_W^2$ and taking derivative of (\ref{conv}) w.r.t. $y$, we obtain 
\begin{align}
\label{firstder}
        \sigma_W^2 p_Y^{\prime}(y) &=\sigma_W^2 \frac{d}{dy}\int_{\mathbbm{R}}p_X(s)\,\phi_W(y-s)\,ds\\
        &\labelrel={firstder:1} \int_{\mathbbm{R}}p_X(s)\,(s-y)\,\phi_W(y-s)\,ds\\
        &= \label{firstderfinal} \int_{\mathbbm{R}}s\,p_X(s)\,\phi_W(y-s)\,ds - y\,p_Y(y)
\end{align}
where we used the fact that $\sigma_W^2\frac{d}{ds}\,\phi_W(s) = -s\,\phi_W(s),$ and Step \eqref{firstder:1} follows from the Leibniz Integral Rule, as we argue carefully below. Let us divide both sides of (\ref{firstderfinal}) by $p_Y(y)$:
\begin{align}
\label{tweediefinal}
   \sigma_W^2 \frac{p_Y^{\prime}(y)}{p_Y(y)} &=\sigma_W^2\frac{d}{dy}\,\log p_Y(y) \\&=  \int_{\mathbbm{R}}s\,\frac{p_X(s)\,\phi_W(y-s)}{p_Y(y)}\,ds - y.
\end{align}
Observe that $\E{X\,\vert\,Y = y} = \int_{\mathbbm{R}}s\,\frac{p_X(s)\,\phi_W(y-s)}{p_Y(y)}\,ds$ since the joint density of $X,Y$ is simply $p_{X,Y}(x,y) = p_X(x)\,\phi_W(y-x)$. Hence, we obtain the desired result \begin{equation}\E{X\,\vert\, Y=y} = y + \sigma_W^2\frac{d}{dy}\,\log p_Y(y).\label{tweediederived}\end{equation}
\par For the justification of Step \eqref{firstder:1}, one can use the Leibniz Integral Rule stated above in Theorem~\ref{thm-Leibniz}.
To use this theorem, we now verify that Conditions {\it (i), (ii),} and {\it (iii)} are satisfied.
First, observe that
$p_Y(y) = \int_{\mathbbm{R}}\phi_W(y-s)\,p_X(s)\,ds \implies \xi(y,s) = \phi_W(y-s) \text{ and } \mu(ds) = p_X(s)\,ds.$
With this, we observe the following:
\begin{enumerate}[label=(\it\roman*)]
    \item Since $\xi(y,s) = |\xi(y,s)| \text{ for all } y,s \in \mathbbm{R}$; we have $\int_\mathbbm{R} |\xi(y,s)|\,\mu(ds) = p_Y(y)<\infty$.
    \item For any fixed $s$, $\frac{\partial \xi}  {\partial y}\,(y, s) = \sigma_W^{-2} (s-y)\,\phi_W(y-s)$ obviously exists and is a continuous function of $y$ for every $y\in\mathbbm{R}$.
    \item 
    \begin{align}
        \left|\frac{\partial \xi}  {\partial y}\,(x+\theta, s)\right| &= \sigma_W^{-2}\left |(s-x-\theta)\,\phi_W(x+\theta-s)\right| \\&\leq \dfrac{(|s|+|x|+|\theta|)}{ \sigma_W^{3}\sqrt{2\,\pi}}
    \end{align}
     for all $s,\,x,\,\theta \in\mathbbm{R}$. Hence, 
    \begin{align}
        \lefteqn{\int_{S}\; \sup _{\theta \in[-\delta, \delta]}\left|\frac{\partial \xi}{\partial y}(x+\theta, s)\right| \mu(d s)}\\
        &\leq \int_{S} \dfrac{|s|+|x|+\delta}{\sigma_W^3\sqrt{ 2\,\pi}}\,\mu(d s)\\&<\infty
    \end{align}
    
    for every $\delta>0$ and $x\in\mathbbm{R}$ since we assume that $X$ is square-integrable, thus integrable. Observe that (iii) is satisfied for every $\delta>0$, hence we have 
    \begin{align}
        p_Y^{\prime}(y) &= \dfrac{d}{dy}\int_{\mathbbm{R}}\phi_W(y-s)\,p_X(s)\,ds \\&= \int_{\mathbbm{R}}\dfrac{\partial}{\partial y}\phi_W(y-s)\,p_X(s)\,ds \quad \forall y\in\mathbbm{R}.
    \end{align}
\end{enumerate}
This concludes the proof that the Leibniz Integral rule applies, and thus, concludes the proof of Tweedie's formula.
\end{proof}

\begin{theorem}{\hspace*{.25em}(Hatsell-Nolte Identity):}\label{thm-hatsell-nolte}
For the model given in (\ref{themodel}),
\label{HatsellNolteDer}
\begin{equation}
\label{hn}
\sigma_W^2\frac{d}{dy}\E{X\,\vert\,Y=y} = \Var{X\,\vert\,Y=y}.
\end{equation}
\end{theorem}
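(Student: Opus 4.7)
The plan is to derive the Hatsell-Nolte identity by combining the freshly established Tweedie formula with a second application of the Leibniz integral rule. First, I would differentiate Tweedie's identity \eqref{tweediederived} once with respect to $y$, obtaining
\[\sigma_W^2 \frac{d}{dy}\E{X\given Y=y} \;=\; \sigma_W^2 + \sigma_W^4 \frac{d^2}{dy^2}\loge{p_Y(y)}.\]
This puts the left-hand side of the target identity into a form involving only derivatives of $\log p_Y$.

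Next I would develop an analogous expression for the conditional variance. Starting from \eqref{firstderfinal} in the form $\sigma_W^2 p_Y'(y)=\int (s-y)\phi_W(y-s)p_X(s)\,ds$, I would differentiate once more in $y$. Using $\frac{d}{dy}\phi_W(y-s)=\sigma_W^{-2}(s-y)\phi_W(y-s)$, moving the derivative inside the integral yields $\sigma_W^4 p_Y''(y)=-\sigma_W^2 p_Y(y) + \int (s-y)^2\phi_W(y-s)p_X(s)\,ds$. Dividing by $p_Y(y)$ and recognising the remaining integral as a conditional second moment gives
\[\sigma_W^4\,\frac{p_Y''(y)}{p_Y(y)} \;=\; -\sigma_W^2 + \E{(X-y)^2\given Y=y}.\]
Decomposing $\E{(X-y)^2\given Y=y}=\Var{X\given Y=y}+(\E{X\given Y=y}-y)^2$ and substituting Tweedie's formula in the form $\E{X\given Y=y}-y=\sigma_W^2\,p_Y'(y)/p_Y(y)$ rearranges this to
\[\Var{X\given Y=y} \;=\; \sigma_W^2 + \sigma_W^4 \frac{d^2}{dy^2}\loge{p_Y(y)},\]
which matches the earlier expression for $\sigma_W^2\,\frac{d}{dy}\E{X\given Y=y}$ and closes the argument.

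The principal technical obstacle will be justifying the second interchange of differentiation and integration, now with integrand $(s-y)\phi_W(y-s)p_X(s)$. I would re-run the three hypotheses of Theorem~\ref{thm-Leibniz} in exactly the style of the Tweedie derivation, but the partial derivative in $y$ now carries an extra factor of order $(s-y)$, so the dominating majorant grows quadratically in $s$. Integrability of this majorant against $p_X$ therefore relies on $X$ being square-integrable, which is a standing assumption of the paper; with that in place, the exchange is legitimate and the remaining manipulations are routine calculus.
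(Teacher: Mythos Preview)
Your proposal is correct and follows essentially the same route as the paper: a second differentiation under the integral sign (justified via Theorem~\ref{thm-Leibniz} using square-integrability of $X$), identification of a conditional second moment, and substitution of Tweedie's formula to produce the variance. The only cosmetic difference is that you work with the centered moment $\E{(X-y)^2\given Y=y}$ whereas the paper expands to the raw moment $\E{X^2\given Y=y}$; the algebra is otherwise identical.
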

\begin{proof}
By definition, $\text{Var}\left(X\,\vert\,Y=y\right) = \E{X^2\,\vert\,Y=y} - \E{X\,\vert\,Y=y}^2$. Hence, it is sufficient to derive a formula for $\E{X^2\,\vert\,Y=y}$ by multiplying (\ref{firstderfinal}) by $\sigma_W^2$ and taking derivative w.r.t $y$:

\begin{IEEEeqnarray}{rCl}
\label{secondder}
    \sigma_W^4 p_Y^{\prime}(y) &= &\sigma_W^2\int_{\mathbbm{R}}s\,p_X(s)\,\phi_W(y-s)\,ds - \sigma_W^2 y\,p_Y(y) \nonumber\\
    \sigma_W^4 p_Y^{\prime\prime}(y) &\labelrel={secondder:2} \nonumber  &\int_{\mathbbm{R}}s\,(s-y)\,p_X(s)\,\phi_W(y-s)\,ds\\&&{}-\sigma_W^2p_Y(y)-\sigma_W^2 y\,p_Y^{\prime}(y)\\
    &= &\int_{\mathbbm{R}}s^2\,p_X(s)\,\phi_W(y-s)\,ds\nonumber\\&&{}-y\int_{\mathbbm{R}}s\,p_X(s)\,\phi_W(y-s)\,ds\nonumber\\&&{}-\sigma_W^2 p_Y(y)-\sigma_W^2 y\,p_Y^{\prime}(y). \label{secondderfinal}
\end{IEEEeqnarray}

Dividing both sides of (\ref{secondderfinal}) by $p_Y(y)$ and recalling $\E{X^p\,\vert\,Y = y} = \int_{\mathbbm{R}}s^p\,\frac{p_X(s)\,\phi_W(y-s)}{p_Y(y)}\,ds\;$, we obtain
\begin{IEEEeqnarray}{rCl}
\label{almostthere}
    \sigma_W^4\frac{p_Y^{\prime\prime}(y)}{p_Y(y)} &=& \E{X^2\,\vert\,Y = y} - y\, \E{X\,\vert\,Y = y} \nonumber\\&&{}-\sigma_W^2\left(1+y\,\frac{p_Y^{\prime}(y)}{p_Y(y)}\right)\\
    &=& \E{X^2\,\vert\,Y = y} - y\left(y+\sigma_W^2\frac{p_Y^{\prime}(y)}{p_Y(y)}\right) \nonumber \\&&{}-\sigma_W^2\left(1+y\,\frac{p_Y^{\prime}(y)}{p_Y(y)}\right),
\end{IEEEeqnarray}
which implies 
\begin{equation}
\label{exsq}
    \E{X^2\,\vert\,Y = y} = \sigma_W^4\frac{p_Y^{\prime\prime}(y)}{p_Y(y)}+2\sigma_W^2y\,\frac{p_Y^{\prime}(y)}{p_Y(y)}+y^2+\sigma_W^2.
\end{equation}
Combining (\ref{tweediederived}) and (\ref{exsq}), we obtain
\begin{IEEEeqnarray}{rCl}
    \text{Var}\left(X\,\vert\,Y=y\right)&=&
    \sigma_W^4\frac{p_Y^{\prime\prime}(y)}{p_Y(y)}+2\sigma_W^2y\,\frac{p_Y^{\prime}(y)}{p_Y(y)}+y^2+\sigma_W^2\nonumber\\&&{}- \left(y+\sigma_W^2\frac{p_Y^{\prime}(y)}{p_Y(y)}\right)^2\\
    &=& \sigma_W^4\frac{p_Y^{\prime\prime}(y)\,p_Y(y)-p_Y^{\prime}(y)^2}{p_Y(y)^2}+\sigma_W^2\\
    &=& \sigma_W^2\frac{d}{dy} \left(y+\sigma_W^2\frac{p_Y^{\prime}(y)}{p_Y(y)}\right)\\
    &=& \sigma_W^2\frac{d}{dy} \E{X\,\vert\,Y=y}.
\end{IEEEeqnarray}

\par What remains is to justify the step \eqref{secondder:2}, which follows from the Leibniz Integral Rule
stated above in Theorem~\ref{thm-Leibniz}.
To use this theorem, we verify that Conditions {\it (i), (ii),} and {\it (iii)} are satisfied.
In this setting, 
\begin{equation*}
    p_Y^{\prime}(y) = \sigma_W^{-2}\int_{\mathbbm{R}}(s-y)\,\phi_W(y-s)\,p_X(s)\,ds
\end{equation*}$ \implies \xi(y,s) = \sigma_W^{-2}(s-y)\, \phi_W(y-s) \text{ and } \mu(ds) = p_X(s)\,ds.$ With this, we observe the following:
\begin{enumerate}[label=(\it\roman*)]
    \item $\int_\mathbbm{R} |\xi(y,s)|\,\mu(ds) = \int_\mathbbm{R} \sigma_W^{-2}|s-y|\,\phi_W(y-s)\,\mu(ds)\leq \int_\mathbbm{R} (|s|+|y|)\,\frac{1}{\sigma_W^3\sqrt{2\,\pi}}\,\mu(ds)<\infty$ for all $y\in\mathbbm{R}$.
    \item For any fixed $s$, 
    \begin{equation*}
        \frac{\partial \xi}  {\partial y}\,(y, s) =\sigma_W^{-4}
    \left((s-y)^2-\sigma_W^2\right)\,\phi_W(y-s)
    \end{equation*} obviously exists and is a continuous function of $y$ for every $y\in\mathbbm{R}$.
    \item 
    \begin{IEEEeqnarray}{rCl}
    \hspace*{-1cm}
        \left|\frac{\partial \xi}  {\partial y}\,(x+\theta, s)\right| &= &\sigma_W^{-4}\left|(s-x-\theta)^2-\sigma_W^2\right|\,\left|\phi_W(x+\theta-s)\right| \nonumber\\&\leq &\frac{\left|s^2-2\,(x+\theta)\,s+(x+\theta)^2-\sigma_W^2\right|}{\sigma_W^5\sqrt{2\,\pi}}\nonumber\\&\leq &\frac{s^2+2\,(|x|+|\theta|)\,|s|}{\sigma_W^5\sqrt{2\,\pi}}\nonumber\\&&\hspace{1.25
        cm}+\frac{x^2+\theta^2+2\,|x|\,|\theta|+\sigma_W^2}{\sigma_W^5\sqrt{2\,\pi}}\nonumber
    \end{IEEEeqnarray}
    for all $s,\,x,\,\theta \in \mathbbm{R}$.
    Hence, 
    \begin{IEEEeqnarray}{rCl}
        &\int_{S}\; &\sup_{\theta \in[-\delta, \delta]}\left|\frac{\partial \xi}{\partial y}(x+\theta, s)\right| \mu(d s)\nonumber\\&&{}\leq \nonumber\int_{S} \Big(\dfrac{s^2+2\,(|x|+\delta)\,|s|}{\sigma_W^5\sqrt{2\,\pi}}\\&&\nonumber\hspace{1.12cm}+\dfrac{x^2+\delta^2+2\,|x|\,\delta+\sigma_W^2}{\sigma_W^5\sqrt{2\,\pi}}\Big)\,\mu(d s)\\&&{}<\infty \nonumber
    \end{IEEEeqnarray}
    for every $\delta>0$ and $x\in\mathbbm{R}$ since we assume that $X$ is square-integrable. Since (iii) is satisfied for all $\delta>0$, 
    \begin{align*}
        p_Y^{\prime\prime}(y) &=\dfrac{d}{dy}\,p_Y^{\prime}(y) \\&= \int_{\mathbbm{R}}\,\dfrac{\partial}{\partial y}\left(\sigma_W^{-2}(s-y)\,\phi_W(y-s)\right)\,p_X(s)\,ds.
    \end{align*}
\end{enumerate}
This concludes the proof that the Leibniz Integral rule applies, and thus, concludes the proof of the Hatsell-Nolte Identity.
\end{proof}

\section{Proof of Lemma~\ref{lemma-main}}\label{proofofidentity}
To establish Lemma~\ref{lemma-main}, the starting point is the well-known formula for the differential entropy of a transformed random variable, which we state here for completeness.

\revision{
\begin{lemma}{\hspace*{.25em}(Differential Entropy of Diffeomorphic Transformations of a Random Variable):}
Let $y \mapsto \varphi(y)$ be a $C^1-$diffeomorphic\footnote{$\varphi(\cdot)$ and $\varphi^{-1}(\cdot)$ are continuously differentiable.} transformation on $\mathbb{R}$, and $Y$ be an absolutely continuous random variable with PDF $p_Y(\cdot)$ and finite differential entropy $h(Y)$. Then, the differential entropy of $\varphi(Y)$ satisfies the following equation.
\begin{equation}
    h(\varphi(Y)) = h(Y) + \E{\log \abs{\frac{d\varphi(Y)}{dY}}}.
    \label{transformation_general}
\end{equation}
\end{lemma}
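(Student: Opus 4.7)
The plan is to derive the claimed identity by combining the standard change-of-variables formula for probability density functions with the definition of differential entropy. Since $\varphi$ is a $C^1$-diffeomorphism on $\mathbb{R}$, the random variable $Z \triangleq \varphi(Y)$ is absolutely continuous with PDF
\begin{equation}
    p_Z(z) = p_Y(\varphi^{-1}(z))\,\left|\frac{d\varphi^{-1}(z)}{dz}\right|,
\end{equation}
which holds at every $z$ in the image of $\varphi$. This follows from the usual one-dimensional change-of-variables rule and the fact that a $C^1$-diffeomorphism on $\mathbb{R}$ is monotonic with a nonvanishing derivative.

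Next, I would plug this density into the definition of differential entropy and perform the substitution $y = \varphi^{-1}(z)$, noting that $dz = |d\varphi(y)/dy|\,dy$. Concretely,
\begin{align}
    h(\varphi(Y)) &= -\int p_Z(z)\,\log p_Z(z)\,dz \nonumber\\
    &= -\int p_Y(y)\,\log\!\left(p_Y(y)\,\left|\frac{d\varphi^{-1}(z)}{dz}\right|\right) dy \nonumber\\
    &= h(Y) - \E{\log\left|\frac{d\varphi^{-1}(Z)}{dZ}\right|}.
\end{align}
Finally, I would invoke the inverse-function identity $\frac{d\varphi^{-1}(z)}{dz} = 1\big/\frac{d\varphi(y)}{dy}$ evaluated at $y = \varphi^{-1}(z)$, which turns the minus sign into a plus sign and yields the desired formula \eqref{transformation_general}.

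The main obstacle in making this rigorous is simply checking measurability and integrability issues: the substitution step presumes that the integrand is absolutely integrable so that a change of variables can be performed, which is guaranteed here because $h(Y)$ is assumed finite and the Jacobian factor is strictly positive on the support. The $C^1$-diffeomorphism hypothesis also ensures that the inverse-function identity holds pointwise almost everywhere, eliminating any delicate issues with the chain rule. No deeper machinery, such as the Leibniz rule used elsewhere in the appendix, is required here; the argument is purely a one-line application of change-of-variables followed by rearrangement.
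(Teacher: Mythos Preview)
Your proposal is correct and follows essentially the same route as the paper: write the density of $\varphi(Y)$ via the one-dimensional change-of-variables formula, plug it into the definition of differential entropy, and substitute $y=\varphi^{-1}(z)$. The only cosmetic difference is that the paper expresses the Jacobian factor as $1/|\varphi'(\varphi^{-1}(\phi))|$ from the outset rather than writing $|d\varphi^{-1}/dz|$ and invoking the inverse-function identity at the end.
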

\begin{proof}
Since $\varphi(\cdot)$ is $C^1-$diffeomorphic, the probability density function of $\varphi(Y)$, denoted as $p_{\varphi(Y)}(\cdot)$, can be written as 
\begin{equation}
    p_{\varphi(Y)}(\phi) = \frac{p_{Y}(\varphi^{-1}(\phi))}{|\varphi^{\prime}(\varphi^{-1}(\phi))|}. \label{transformpdf}
\end{equation}
Using (\ref{transformpdf}) and a change of variable $y\triangleq \varphi^{-1}(\phi)$, $h(\varphi(Y))$ is simplified to 
\begin{align}
     h(\varphi(Y)) &= -\int_{-\infty}^{\infty} p_{\varphi(Y)}(\phi)\log{p_{\varphi(Y)}(\phi)}\,d\phi\nonumber\\
     &= -\int_{-\infty}^{\infty} \frac{p_{Y}(\varphi^{-1}(\phi))}{|\varphi^{\prime}(\varphi^{-1}(\phi))|}\log{\frac{p_{Y}(\varphi^{-1}(\phi))}{|\varphi^{\prime}(\varphi^{-1}(\phi))|}}\,d\phi\nonumber\\
     &= -\int_{-\infty}^{\infty} \frac{p_{Y}(y)}{|\varphi^{\prime}(y)|}\log{\frac{p_{Y}(y)}{|\varphi^{\prime}(y)|}}\,\varphi^{\prime}(y)dy\nonumber\\
     &= -\int_{-\infty}^{\infty} p_{Y}(y)\log{\frac{p_{Y}(y)}{|\varphi^{\prime}(y)|}}\,dy\label{transformpdffinal}\\
     &= h(Y)+\E{\log \abs{\frac{d\varphi(Y)}{dY}}}
\end{align}
where (\ref{transformpdffinal}) follows from the $C^1-$diffeomorphic assumption. That is, $\varphi(\cdot)$ must be a strictly monotonic function, and analyzing two cases \footnote{increasing and decreasing $\varphi(\cdot)$} separately both yield (\ref{transformpdffinal}). 
\end{proof}
We apply this lemma to the function $y\mapsto \E{X\given Y=y}$ in the additive white Gaussian noise model.
It is straightforward to show that in this case, the function $y \mapsto \varphi(y)$ and its inverse are real-analytic provided that $X$ is a non-degenerate random variable (see for example Lemma 2 and Lemma 3 of \cite{dytso_poor_shitz_2020}). Therefore, $y \mapsto \varphi(y)$ is $C^\infty-$diffeomorphic and (\ref{transformation_general}) can be applied whenever $\sigma_X^2 > 0.$ Using (\ref{transformation_general}), we immediately obtain 
\begin{align}
    h(\E{X\given Y}) &= h(Y) + \E{\log{\left|\frac{d\E{X\given Y}}{dY}\right|}}\\
    &= h(Y) + \E{\loge{\frac{1}{\sigma_W^2}\Var{X\given Y}}}
\end{align}
where the last line follows from the Hatsell-Nolte identity.}

\revision{\section{Exponential Family Generalization of Tweedie's Formula}\label{generalizedTweedie}
Our objective in this section is to state and prove the Exponential Family generalization of \emph{Tweedie's Formula}, which is also based on multiple uses of differentiation under the integral sign. 
\begin{lemma}{\hspace*{.25em}(Tweedie's Formula for Exponential Family):}\\
For the model in (\ref{Eq-mod-expofam}), define $\nu(y) \triangleq \frac{p(y)}{p_b(y)}$. Suppose the following conditions hold for every $y$ in the support of $\nu(y)$. 
\begin{subequations}
\begin{align}
    \nu(y)&<\infty, \label{tweedie_exp_fam_cond:1}\\
    \E{\left|X\right|\given Y=y}&<\infty,\label{tweedie_exp_fam_cond:2}\\
    \E{X^2\given Y=y}&<\infty.\label{tweedie_exp_fam_cond:3} 
\end{align}
\end{subequations}
Then, we have 
\begin{align}
    \E{X\given Y=y} &= \frac{d}{dy}\log\nu(y),\\
    \Var{X\given Y=y} &= \frac{d^2}{dy^2}\log\nu(y).
\end{align}
\end{lemma}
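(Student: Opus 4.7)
The plan is to mimic the Appendix A argument for Tweedie's formula, replacing the Gaussian convolution with the natural-exponential-family integral $\nu(y)=\int e^{xy-A(x)} q(x)\, dx$. By Bayes' rule, the posterior density of $X$ given $Y=y$ factors as
\begin{equation}
    p(x\mid y)=\frac{p_x(y)\, q(x)}{p(y)}=\frac{e^{xy-A(x)}\,q(x)}{\nu(y)},
\end{equation}
so the conditional moments can be read off as
$\E{X\mid Y=y}=\nu(y)^{-1}\int x\, e^{xy-A(x)}\,q(x)\,dx$
and similarly for the second moment. The strategy is therefore to differentiate $\nu(y)$ once and twice under the integral sign with respect to $y$, identify the resulting integrals as $\nu(y)\E{X\mid Y=y}$ and $\nu(y)\E{X^2\mid Y=y}$, and then apply $\frac{d}{dy}\log\nu=\nu'/\nu$ and the quotient rule $\frac{d^2}{dy^2}\log\nu=\nu''/\nu-(\nu'/\nu)^2$ to recover the mean and variance respectively.

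Concretely, I would first write
\begin{equation}
    \frac{d}{dy}\nu(y)=\frac{d}{dy}\int e^{xy-A(x)}\,q(x)\,dx \;=\; \int x\, e^{xy-A(x)}\,q(x)\,dx,
\end{equation}
where the interchange of differentiation and integration is the step that must be justified. Dividing by $\nu(y)$ yields the first identity $\E{X\mid Y=y}=\frac{d}{dy}\log\nu(y)$. Iterating once more gives $\nu''(y)=\int x^2 e^{xy-A(x)} q(x)\,dx = \nu(y)\E{X^2\mid Y=y}$, and then a direct computation
\begin{equation}
    \frac{d^2}{dy^2}\log\nu(y)=\frac{\nu''(y)}{\nu(y)}-\left(\frac{\nu'(y)}{\nu(y)}\right)^{\!2}=\E{X^2\mid Y=y}-\bigl(\E{X\mid Y=y}\bigr)^{2}
\end{equation}
delivers the variance formula.

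The main obstacle, exactly as in the Gaussian case, is verifying the hypotheses of Theorem~\ref{thm-Leibniz} (the Leibniz rule) at the two differentiation steps. For the first step, the integrand is $\xi(y,x)=e^{xy-A(x)}$ against the measure $q(x)\,dx$; the pointwise derivative is $x\,e^{xy-A(x)}$, and the required domination on a neighborhood $(y-\delta,y+\delta)$ is
\begin{equation}
    \sup_{\theta\in[-\delta,\delta]}\left|x\, e^{x(y+\theta)-A(x)}\right|\;\leq\;|x|\bigl(e^{x\delta}+e^{-x\delta}\bigr)\,e^{xy-A(x)},
\end{equation}
whose integral against $q(x)\,dx$ equals (up to constants) $\E{|X|\mid Y=y+\theta^\ast}\,\nu(y+\theta^\ast)$ for some $\theta^\ast\in[-\delta,\delta]$; finiteness on a small enough neighborhood of $y$ follows from hypothesis~\eqref{tweedie_exp_fam_cond:2} together with~\eqref{tweedie_exp_fam_cond:1} and the convexity of $\nu$ on its natural domain. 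The second differentiation is analogous, with the dominating function now involving $x^2$, which is controlled by hypothesis~\eqref{tweedie_exp_fam_cond:3}. Once the two applications of Leibniz' rule are justified, the identities follow immediately from the algebraic manipulations above, in complete parallel to the proofs of Tweedie's formula and the Hatsell--Nolte identity given in Appendix~\ref{app-HN}.
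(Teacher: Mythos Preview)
Your approach is essentially the paper's: write the posterior via Bayes' rule, differentiate $\nu(y)=\int e^{xy-A(x)}q(x)\,dx$ under the integral sign via the Leibniz rule to obtain $\nu'(y)=\nu(y)\,\E{X\mid Y=y}$, and iterate for the second moment; the only cosmetic difference is that the paper differentiates the ratio $\nu'/\nu=\E{X\mid Y=y}$ directly for the second step, whereas you compute $\nu''$ separately and then apply $(\log\nu)''=\nu''/\nu-(\nu'/\nu)^2$, which is algebraically equivalent. One small correction to your Leibniz justification: the dominating integral does not equal $\E{|X|\mid Y=y+\theta^\ast}\,\nu(y+\theta^\ast)$ for a single $\theta^\ast$ as you claim; rather, your bound $|x|\bigl(e^{x\delta}+e^{-x\delta}\bigr)e^{xy-A(x)}$ integrates against $q(x)\,dx$ to the sum $\nu(y+\delta)\,\E{|X|\mid Y=y+\delta}+\nu(y-\delta)\,\E{|X|\mid Y=y-\delta}$, whose finiteness follows from hypotheses~\eqref{tweedie_exp_fam_cond:1} and~\eqref{tweedie_exp_fam_cond:2} applied at $y\pm\delta$ (for $\delta$ small enough that both points remain in the support).
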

\begin{proof}
 By (\ref{expfamconditional}), we have 
\begin{align}
    \E{X\given Y=y} &= \int x\,p(x\given y)\,dx\nonumber\\
    &= \int x\, e^{x y-\log\nu(y)} \left(q(x)\,e^{-A(x)}\right)\,dx.\label{condexpexpfam}
\end{align}
Taking the logarithmic derivative of $\nu(y)$, we obtain 
\begin{align}
    \frac{d\log\nu(y)}{dy} &=  \frac{\frac{d\nu(y)}{dy}}{\nu(y)}\nonumber\\&=\frac{\frac{d}{dy}\int e^{x y-A(x)}q(x)\,dx}{\nu(y)}\nonumber\\&\labelrel={justifyExpFam1}\frac{\int \frac{d}{dy}\, \left(e^{x y-A(x)}q(x)\right)\,dx}{\nu(y)}\nonumber\\
    &= \int x\,e^{xy-\log\nu(y)}\left(q(x)\,e^{-A(x)}\right)\,dx,\label{expfamcondexp:2}
\end{align}
which is the same expression as (\ref{condexpexpfam}). To justify \eqref{justifyExpFam1}, we can use Theorem~\ref{thm-Leibniz}. By (\ref{tweedie_exp_fam_cond:1}), 
\begin{align}
    \int e^{x y-A(x)}q(x)\,dx &= \int \nu(y)\,p(x\given y)\,dx\\&=\nu(y)\int p(x\given y)\,dx\\&<\infty,
\end{align}
so $(i)$ is satisfied. Checking the condition $(ii)$ is trivial, i.e., for every $x$ in the support of $q(\cdot)$, $\,y\mapsto e^{xy}$ is continuous for every $y$ in the support of $\nu(y)$. Finally, the condition $(iii)$ holds due to (\ref{tweedie_exp_fam_cond:1}) and (\ref{tweedie_exp_fam_cond:2}). Differentiating  (\ref{expfamcondexp:2}) once more, we obtain 
\begin{IEEEeqnarray}{rCl}
\frac{d^2\log\nu(y)}{dy^2} &=&\frac{d}{dy}\int x\,e^{xy-\log\nu(y)}\left(q(x)\,e^{-A(x)}\right)dx\nonumber\\&\labelrel={justifyExpFam2}&\int \frac{d}{dy} \left(x\,e^{xy-\log\nu(y)}\left(q(x)\,e^{-A(x)}\right)\right)dx\nonumber\\
    &=& \int x^2\,e^{xy-\log\nu(y)}\left(q(x)\,e^{-A(x)}\right)dx\nonumber\\
    &&-{}\int x\frac{\nu^{\,\prime}(y)}{\nu(y)}\,e^{xy-\log\nu(y)}\left(q(x)\,e^{-A(x)}\right)dx\nonumber\\
    &=& \E{X^2\given Y=y}-\left(\E{X\given Y=y}\right)^2.
\end{IEEEeqnarray}
Justification of \eqref{justifyExpFam2} is similar to \eqref{justifyExpFam1}: $(i)$ holds due to (\ref{tweedie_exp_fam_cond:1}) and (\ref{tweedie_exp_fam_cond:2}), $(ii)$ follows by
the same reasoning as \eqref{justifyExpFam1}, and $(iii)$ holds due to (\ref{tweedie_exp_fam_cond:1}) and (\ref{tweedie_exp_fam_cond:3}).
\end{proof}
\begin{remark}.
The Gamma distribution example given in Sec.~\ref{Sec-gamma} satisfies (\ref{tweedie_exp_fam_cond:1}), (\ref{tweedie_exp_fam_cond:2}), (\ref{tweedie_exp_fam_cond:3}). That is,

\begin{align}
    \nu(y) &= \int_{0}^{\infty}x^\alpha e^{-xy}\, q(x)\,dx\\&\leq \int_{0}^{\infty} \left(\frac{\alpha}{e\,y}\right)^\alpha q(x)\,dx\\&<\infty.
\end{align}
Similarly, 
\begin{IEEEeqnarray*}{rCl}
    \E{X\given Y=y}&\leq \frac{1}{\nu(y)}\left(\frac{\alpha+1}{e\,y}\right)^{\alpha+1}&<\infty\\
    \E{X^2\given Y=y}&\leq \frac{1}{\nu(y)}\left(\frac{\alpha+2}{e\,y}\right)^{\alpha+2}&<\infty.
\end{IEEEeqnarray*}
\end{remark}}
\section{Calculation of $\kappa_X$}
\label{kappacalc}

In~\cite{eswaran_gastpar_2019}, the lower bound is expressed as, for $0<D<\frac{N(X)}{\kappa_X}$
\begin{equation}
    L(D)\geq \frac{\sigma_W^2}{2} \left(\frac{1}{D}-\frac{\kappa_X}{N(X)}\right)-\frac{1}{2}\log \frac{\sigma_X^2}{N(X)}
\end{equation}
where
\begin{align}
    \kappa_X &= \lim_{s\rightarrow 0^+}\frac{d}{ds} N(X+\sqrt{s}G).
\end{align}
Under regularity conditions, $\kappa_X$ can be simplified to
\begin{align}
    \kappa_X &= \lim_{s\rightarrow 0^+}\frac{d}{ds} N(X+\sqrt{s}G)\\
    &= \lim_{s\rightarrow 0^+} 2N(X+\sqrt{s}G)\,\frac{d}{ds} h(X+\sqrt{s}G)\\
    &= 2N(X)\lim_{s\rightarrow 0^+} \frac{d}{ds} h(X+\sqrt{s}G)\\
    &= 2N(X) \,\lim_{s\rightarrow 0^+}\frac{1}{2}\, J(X+\sqrt{s}G)\\
    &= N(X)\, J(X) ,
\end{align}
which thus yields Equation~\eqref{Eqn-rateloss-lowerbound}.
\end{appendices}
\section*{Acknowledgment}
The authors are grateful to the Associate Editor and anonymous reviewers
for a meticulous reading of the draft and insightful feedback which helped improve the manuscript.

\bibliographystyle{IEEEtran}
\bibliography{main}

\begin{thebibliography}{10}
\providecommand{\url}[1]{#1}
\csname url@samestyle\endcsname
\providecommand{\newblock}{\relax}
\providecommand{\bibinfo}[2]{#2}
\providecommand{\BIBentrySTDinterwordspacing}{\spaceskip=0pt\relax}
\providecommand{\BIBentryALTinterwordstretchfactor}{4}
\providecommand{\BIBentryALTinterwordspacing}{\spaceskip=\fontdimen2\font plus
\BIBentryALTinterwordstretchfactor\fontdimen3\font minus
  \fontdimen4\font\relax}
\providecommand{\BIBforeignlanguage}[2]{{%
\expandafter\ifx\csname l@#1\endcsname\relax
\typeout{** WARNING: IEEEtran.bst: No hyphenation pattern has been}%
\typeout{** loaded for the language `#1'. Using the pattern for}%
\typeout{** the default language instead.}%
\else
\language=\csname l@#1\endcsname
\fi
#2}}
\providecommand{\BIBdecl}{\relax}
\BIBdecl

\bibitem{predictionOptimality}
A.~Banerjee, X.~Guo, and H.~Wang, ``On the optimality of conditional
  expectation as a bregman predictor,'' \emph{IEEE Transactions on Information
  Theory}, vol.~51, no.~7, pp. 2664--2669, 2005.

\bibitem{estimationBlackwell}
D.~Blackwell, ``Conditional expectation and unbiased sequential estimation,''
  \emph{The Annals of Mathematical Statistics}, vol.~18, no.~1, pp. 105--110,
  1947.

\bibitem{bregman2}
A.~Dytso, M.~Fauß, and H.~V. Poor, ``{B}ayesian risk with {B}regman loss: A
  {C}ramér–{R}ao type bound and linear estimation,'' \emph{IEEE Transactions
  on Information Theory}, vol.~68, no.~3, pp. 1985--2000, 2021.

\bibitem{han1987hypothesis}
T.~Han, ``Hypothesis testing with multiterminal data compression,'' \emph{IEEE
  Transactions on Information Theory}, vol.~33, no.~6, pp. 759--772, 1987.

\bibitem{zaidi_2020}
A.~Zaidi, ``Hypothesis testing against independence under {G}aussian noise,''
  in \emph{2020 IEEE International Symposium on Information Theory (ISIT)}, Los
  Angeles, CA, USA, Jun. 2020, pp. 1--5.

\bibitem{tishby2000information}
N.~Tishby, F.~C. Pereira, and W.~Bialek, ``The information bottleneck method,''
  \emph{arXiv preprint physics/0004057}, 2000.

\bibitem{zaidi2020information}
A.~Zaidi, I.~Estella-Aguerri, and S.~Shamai, ``On the information bottleneck
  problems: Models, connections, applications and information theoretic
  views,'' \emph{Entropy}, vol.~22, no.~2, p. 151, 2020.

\bibitem{makhdoumi2014information}
A.~Makhdoumi, S.~Salamatian, N.~Fawaz, and M.~M{\'e}dard, ``From the
  information bottleneck to the privacy funnel,'' in \emph{2014 IEEE
  Information Theory Workshop (ITW 2014)}, Nov. 2014, pp. 501--505.

\bibitem{calmon2015fundamental}
F.~P. Calmon, A.~Makhdoumi, and M.~M{\'e}dard, ``Fundamental limits of perfect
  privacy,'' in \emph{2015 IEEE International Symposium on Information Theory
  (ISIT)}, Jun. 2015, pp. 1796--1800.

\bibitem{robbins2020empirical}
H.~Robbins, ``An empirical {B}ayes approach to statistics,'' in
  \emph{Proceedings of the Third Berkeley Symposium on Mathematical Statistics
  and Probability}, vol.~1, 1956, pp. 157--163.

\bibitem{hatsell_nolte_1971}
C.~Hatsell and L.~Nolte, ``Some geometric properties of the likelihood ratio
  (corresp.),'' \emph{IEEE Transactions on Information Theory}, vol.~17, no.~5,
  pp. 616--618, 1971.

\bibitem{brown1971admissible}
L.~D. Brown, ``Admissible estimators, recurrent diffusions, and insoluble
  boundary value problems,'' \emph{The Annals of Mathematical Statistics},
  vol.~42, no.~3, pp. 855--903, Jun. 1971.

\bibitem{cover_thomas_2006}
T.~M. Cover and J.~A. Thomas, \emph{Elements of information theory}.\hskip 1em
  plus 0.5em minus 0.4em\relax John Wiley \& Sons, 2006.

\bibitem{verdu2019empirical}
S.~Verd{\'u}, ``Empirical estimation of information measures: A literature
  guide,'' \emph{Entropy}, vol.~21, no.~8, p. 720, 2019.

\bibitem{tanahashi}
M.~Tanahashi and H.~Ochiai, ``A new reduced-complexity conditional-mean based
  {MIMO} signal detection using symbol distribution approximation technique,''
  \emph{IEEE Transactions on Signal Processing}, vol.~59, no.~11, pp.
  5644--5651, 2011.

\bibitem{spagnolini}
U.~Spagnolini, ``Cancellation of polarized impulsive noise using an
  azimuth-dependent conditional mean estimator,'' \emph{IEEE Transactions on
  Signal Processing}, vol.~46, no.~12, pp. 3333--3344, 1998.

\bibitem{james}
B.~James, B.~Anderson, and R.~Williamson, ``Conditional mean and maximum
  likelihood approaches to multiharmonic frequency estimation,'' \emph{IEEE
  Transactions on Signal Processing}, vol.~42, no.~6, pp. 1366--1375, 1994.

\bibitem{miller}
M.~Miller, A.~Srivastava, and U.~Grenander, ``Conditional-mean estimation via
  jump-diffusion processes in multiple target tracking/recognition,''
  \emph{IEEE Transactions on Signal Processing}, vol.~43, no.~11, pp.
  2678--2690, 1995.

\bibitem{mmse_guo_1}
D.~Guo, S.~Shamai, and S.~Verdu, ``Mutual information and minimum mean-square
  error in {G}aussian channels,'' \emph{IEEE Transactions on Information
  Theory}, vol.~51, no.~4, pp. 1261--1282, Apr. 2005.

\bibitem{palomar_verdu_2006}
D.~Palomar and S.~Verdu, ``Gradient of mutual information in linear vector
  {G}aussian channels,'' \emph{IEEE Transactions on Information Theory},
  vol.~52, no.~1, pp. 141--154, Jan. 2006.

\bibitem{mmse_guo_2}
D.~Guo, Y.~Wu, S.~Shamai, and S.~Verdú, ``Estimation in {G}aussian noise:
  Properties of the minimum mean-square error,'' \emph{IEEE Transactions on
  Information Theory}, vol.~57, no.~4, pp. 2371--2385, Apr. 2011.

\bibitem{wu_verdu_2012}
Y.~Wu and S.~Verdu, ``Functional properties of minimum mean-square error and
  mutual information,'' \emph{IEEE Transactions on Information Theory},
  vol.~58, no.~3, pp. 1289--1301, Mar. 2012.

\bibitem{dytso_poor_bustin_shamai_2018}
A.~Dytso, H.~V. Poor, R.~Bustin, and S.~Shamai, ``On the structure of the least
  favorable prior distributions,'' \emph{2018 IEEE International Symposium on
  Information Theory (ISIT)}, Jun. 2018.

\bibitem{dytso_poor_shitz_2020}
A.~Dytso, H.~V. Poor, and S.~Shamai, ``A general derivative identity for the
  conditional mean estimator in {G}aussian noise and some applications,'' in
  \emph{2020 IEEE International Symposium on Information Theory (ISIT)}, Los
  Angeles, CA, USA, Jun. 2020, pp. 1183--1188.

\bibitem{dytso_cardone}
A.~Dytso and M.~Cardone, ``A general derivative identity for the conditional
  expectation with focus on the exponential family,'' \emph{2021 IEEE
  Information Theory Workshop (ITW)}, 2021.

\bibitem{eswaran_gastpar_2019}
K.~Eswaran and M.~Gastpar, ``Remote source coding under {G}aussian noise:
  Dueling roles of power and entropy power,'' \emph{IEEE Transactions on
  Information Theory}, vol.~65, no.~7, pp. 4486--4498, Jul. 2019.

\bibitem{lowerbound}
A.~Marsiglietti and V.~Kostina, ``A lower bound on the differential entropy of
  log-concave random vectors with applications,'' \emph{Entropy}, vol.~20,
  no.~3, p. 185, Mar. 2018.

\bibitem{stam_fisher}
A.~J. Stam, ``Some inequalities satisfied by the quantities of information of
  {F}isher and {S}hannon,'' \emph{Inf. Control}, vol.~2, no.~2, pp. 101--112,
  1959.

\bibitem{costa}
M.~Costa, ``A new entropy power inequality,'' \emph{IEEE Transactions on
  Information Theory}, vol.~31, no.~6, pp. 751--760, 1985.

\bibitem{dragotti2009distributed}
P.~L. Dragotti and M.~Gastpar, \emph{Distributed source coding: theory,
  algorithms and applications}.\hskip 1em plus 0.5em minus 0.4em\relax Academic
  Press, 2009.

\bibitem{cisspaper}
A.~Atalik, A.~Köse, and M.~Gastpar, ``The price of distributed: Rate loss in
  the {CEO} problem,'' in \emph{2022 56th Annual Conference on Information
  Sciences and Systems (CISS)}, 2022, pp. 125--130.

\bibitem{morris}
C.~N. Morris, ``Natural exponential families with quadratic variance
  functions,'' \emph{The Annals of Statistics}, vol.~10, no.~1, pp. 65--80,
  1982.

\bibitem{efron2011tweedie}
B.~Efron, ``Tweedie’s formula and selection bias,'' \emph{Journal of the
  American Statistical Association}, vol. 106, no. 496, pp. 1602--1614, 2011.

\bibitem{maxentropyvarianceconstraint}
D.~Dowson and A.~Wragg, ``Maximum-entropy distributions having prescribed first
  and second moments (corresp.),'' \emph{IEEE Transactions on Information
  Theory}, vol.~19, no.~5, pp. 689--693, 1973.

\bibitem{durrett}
R.~Durrett, \emph{Probability: Theory and Examples}.\hskip 1em plus 0.5em minus
  0.4em\relax Cambridge University Press, 2020.

\end{thebibliography}


\end{document}